\title[Concentration and transport for Coulomb gases]%
{Concentration~for~Coulomb~gases and~Coulomb~transport~inequalities}
\thanks{This work was supported in part by the Labex CEMPI (ANR-11-LABX-0007-01)}
\author{Djalil~Chafaï}%
\address[Djalil Chafa\"i]{Universit\'e Paris-Dauphine, PSL Research Unversity, France}%
\email{djalil@chafai.net}%
\author{Adrien~Hardy}%
\address[Adrien Hardy]{Université de Lille 1, France}%
\email{adrien.hardy@math.univ-lille1.fr}%
\author{Mylène~Maïda}%
\address[Mylène Maïda]{Université de Lille 1, France}%
\email{mylene.maida@math.univ-lille1.fr}%
\date{Summer 2017. Compiled \today}
\numberwithin{equation}{section}
\keywords{Coulomb gas; Ginibre ensemble; Random matrix; Transport of measure;
  Transport inequality; Talagrand inequality; Wasserstein distance;
  Concentration of measure.}
\subjclass[2000]{%
26D10; 
31A;   
31B;   
28A35; 
60E15; 
82B44; 
28C15} 
\newtheorem{theorem}{Theorem}[section]%
\newtheorem{corollary}[theorem]{Corollary}%
\newtheorem{lemma}[theorem]{Lemma}%
\newtheorem{remark}[theorem]{Remark}%
\newcommand{\dC}{\mathbb{C}}
\newcommand{\dP}{\mathbb{P}}
\newcommand{\dR}{\mathbb{R}}
\newcommand{\dS}{\mathbb{S}}
\newcommand{\cC}{\mathcal{C}}
\newcommand{\cE}{\mathcal{E}}
\newcommand{\cP}{\mathcal{P}}
\newcommand{\bH}{\mathbf{H}}
\newcommand{\bT}{\mathbf{T}}
\newcommand{\dd}{\mathrm{d}} 
\newcommand{\e}{\mathrm{e}} 
\newcommand{\ds}{\displaystyle}
\newcommand{\be}{\beta}
\newcommand{\De}{\Delta}
\newcommand{\de}{\delta}
\newcommand{\ga}{\gamma}
\newcommand{\Ga}{\Gamma}
\newcommand{\la}{\lambda}
\newcommand{\si}{\sigma}
\newcommand{\te}{\theta}
\newcommand{\R}{\dR}
\newcommand{\veps}{\varepsilon}
\newcommand{\vphi}{\varphi}
\DeclareMathOperator{\DIST}{\mathrm{d}}
\DeclareMathOperator{\ENE}{\cE}
\DeclareMathOperator{\ENTH}{H} 
\DeclareMathOperator{\ENTS}{S} 
\DeclareMathOperator{\HESS}{Hess} 
\DeclareMathOperator{\SUPP}{supp} 
\DeclareMathOperator{\TR}{Tr} 
\DeclareMathOperator{\VOL}{vol} 
\DeclareMathOperator{\WAS}{W} 
\newcommand{\ABS}[1]{{{\left| #1 \right|}}} 
\newcommand{\ANG}[1]{{{\langle#1\rangle}}} 
\newcommand{\LIP}[1]{\|#1\|_{\mathrm{Lip}}} 
\newcommand{\NRM}[1]{{{\left\| #1\right\|}}} 
\newcommand{\PAR}[1]{{{\left(#1\right)}}} 
\newcommand{\pd}{{\partial}} 
\newcommand{\IND}{\boldsymbol{1}}
\newcommand{\BL}{{\mathrm{BL}}}
\newcommand{\DBL}{\DIST_{\BL}}
\def\@MRExtract#1 #2!{#1}     
\renewcommand{\MR}[1]{
  \xdef\@MRSTRIP{\@MRExtract#1 !}%
  \href{http://www.ams.org/mathscinet-getitem?mr=\@MRSTRIP}{MR-\@MRSTRIP}}
\begin{document}

\begin{abstract}
  We study the non-asymptotic behavior of Coulomb gases in dimension two and
  more. Such gases are modeled by an exchangeable Boltzmann--Gibbs measure
  with a singular two-body interaction. We obtain concentration of measure
  inequalities for the empirical distribution of such gases around their
  equilibrium measure, with respect to bounded Lipschitz and Wasserstein
  distances. This implies macroscopic as well as mesoscopic convergence in
  such distances. In particular, we improve the concentration inequalities
  known for the empirical spectral distribution of Ginibre random matrices.
  Our approach is remarkably simple and bypasses the use of renormalized
  energy. It crucially relies on new inequalities between probability metrics,
  including Coulomb transport inequalities which can be of independent
  interest. Our work is inspired by the one of Maïda and Maurel-Segala, itself
  inspired by large deviations techniques. Our approach allows to recover,
  extend, and simplify previous results by Rougerie and Serfaty.
\end{abstract}

\maketitle


\section{Introduction}

The aim of this work is the non-asymptotic study of Coulomb gases around their
equilibrium measure. We start by recalling some essential aspects of
electrostatics, including the notion of equilibrium measure. We then
incorporate randomness and present the Coulomb gas model followed by the
natural question of concentration of measure for its empirical measure. This
leads us to the study of inequalities between probability metrics, including
new Coulomb transport type inequalities. We then state our results on
concentration of measure and their applications and ramifications. We close
the introduction with some additional notes and comments.

In all this work, we take $d\geq2$. 

\subsection{Electrostatics}

The $d$-dimensional \emph{Coulomb kernel} is defined by
\[
x\in\dR^d\mapsto
g(x):=
\begin{cases}
  \ds \log\frac{1}{\ABS{x}}
  &\text{if $d=2$,}\\
  \ds\frac{1}{\ABS{x}^{d-2}}
  &\text{if $d\ge 3$}.
\end{cases}
\]
When $d = 3$, up to a multiplicative constant, $g(x)$ is the electrostatic
potential at $ x \in \R^3$ generated by a unit charge at the origin according
to Coulomb's law. More generally, it is the fundamental solution of Poisson's
equation. Namely, according to \cite[Th.~6.20]{lieb-loss}, if we denote by
$\De:=\pd_1^2+\cdots+\pd_d^2$ the Laplace operator on $\dR^d$ and by $\de_0$
the Dirac mass at the origin then, in the sense of Schwartz distributions,
\begin{equation}\label{eq:Poisson}
  \De g=-c_d\, \de_0,
\end{equation}
where $c_d$ is a positive  constant given by 
\[
c_d 
:=
\begin{cases}
  \ds 2\pi 
  &\text{if $d=2$},\\
  \ds (d-2)|\dS^{d-1}| 
  &\text{if $d\ge 3$},
\end{cases}\quad 
\mbox{ with } \quad |\dS^{d-1}|:=\frac{2\pi^{d/2}}{\Ga(d/2)}.
\]
The function $g$ is superharmonic on $\dR^d$, harmonic on
$\dR^d\setminus\{0\}$, and belongs to the space $ L^1_{\mathrm{loc}}(\R^d)$ of
locally Lebesgue-integrable functions.

Let $\cP(\dR^d)$ be the space of probability mesures on $\R^d$. For any
$\mu\in\cP(\dR^d)$ with compact support, its \emph{Coulomb energy},
\begin{equation}\label{eq:CE}
  \ENE(\mu):=\iint g(x-y)\mu(\dd x)\mu(\dd y)\in \R\cup\{+\infty\},
\end{equation}
is well defined since $g$ is bounded from below on any compact subset of
$\R^d$ (actually we even have $g\geq0$ when $d\geq3$). Recall that a closed
subset of $\dR^d$ has \emph{positive capacity} when it carries a compactly
supported probability measure with finite Coulomb energy; otherwise it has
\emph{null capacity}. We refer for instance to \citep{helms,MR0350027} for
$d\geq2$ and \citep{saff-totik} for $d=2$. 

The quantity $\ENE(\mu)$ represents the electrostatic energy of the
distribution $\mu$ of charged particles in $\dR^d$. Moreover, given any
$\mu,\nu\in \cP(\dR^d)$ with compact support and finite Coulomb energy, the
quantity $\ENE(\mu-\nu)$ is well defined, finite, non-negative, and vanishes
if and only if $\mu=\nu$, see \cite[Lem.~I.1.8]{saff-totik} for $d=2$ and
\cite[Th.~9.8]{lieb-loss} for $d\ge 3$. In particular, for such probability
measures, the following map is a metric:
\begin{equation}\label{eq:enedist}
  (\mu,\nu)\mapsto\ENE(\mu-\nu)^{1/2}.
\end{equation}

One can confine a distribution of charges by using an external potential. More
precisely, an \emph{admissible (external) potential} is a map
$V:\R^d\to\R\cup\{+\infty\}$ such that:
\begin{enumerate}
\item[1.] $V$ is lower semicontinuous;
\item[2.] $V$ is finite on a set of positive capacity;
\item[3.] $V$ satisfies the growth condition 
$$\lim_{|x|\to\infty}\big(V(x)-2\log|x|\,\IND_{d=2}\big)=+\infty.$$
\end{enumerate}
Now, for an admissible potential $V$, we consider the functional 
\begin{equation}\label{eq:EV}
  \ENE_V(\mu)
  :=\iint \left(g(x-y) + \frac12V(x)+\frac12V(y)\right)\mu(\dd x)\mu(\dd y).
\end{equation}
Note that, thanks to the assumptions on $V$, the integrand of \eqref{eq:EV} is
bounded from below. Thus $ \ENE_V(\mu)$ is well defined for every
$\mu\in\cP(\R^d)$, taking values in $\R \cup \{+\infty\}$. Moreover, when
$\mu\in\cP(\R^d)$ with $d\ge 3$ or $d=2$ and $\int_{\R^2} \log(1+|x|)\mu(\dd
x)<\infty$, the Coulomb energy $\ENE(\mu)$ is still well defined and in this
case one has
\begin{equation}\label{eq:EVEV}
  \ENE_V(\mu)= \ENE(\mu)+\int  V(x)\mu(\dd x).
\end{equation}
It is known that if $V$ is admissible then $\ENE_V$ has a unique minimizer
$\mu_V$ on $\cP(\dR^d)$ called the \emph{equilibrium measure}, which is
compactly supported. Moreover, if $V$ has a Lipschitz continuous derivative,
then $\mu_V$ has a density given by
\begin{equation}\label{eq:rhov}
  \rho_V=\frac{\De V}{2c_d}
\end{equation}
on the interior of its support. For example when $V=\ABS{\cdot}^2$ this yields
together with the rotational invariance that the probability measure $\mu_V$
is uniform on a ball. See for instance \citep[Prop.~2.13]{MR2647570},
\citep{chafai-gozlan-zitt}, and \cite[Sec.~2.3]{serfaty-course}.

\subsection{Coulomb gas model}

Given $N\geq 2$ unit charges $x_1,\ldots,x_N\in\R^d$ in a
confining potential $V$, the energy of the configuration is given by
\begin{equation}\label{eq:HN}
  H_N(x_1,\ldots,x_N) := \sum_{i\neq j} g(x_i-x_j) + N\sum_{i=1}^N V(x_i).
\end{equation}
The prefactor $N$ in front of the potential $V$ turns out to be the
appropriate scaling in the large $N$ limit, because of the long range nature
of the Coulomb interaction. The corresponding ensemble in statistical physics
is the Boltzmann--Gibbs probability measure on $(\dR^d)^N$ given by
\begin{equation}
\label{eq:defPN}
\dd \dP_{V,\be}^N(x_1,\ldots,x_N)
:= \frac{1}{Z^N_{V,\be}}\exp\bigg(-\frac{\be}{2}H_N(x_1,\ldots,x_N)\bigg)
\dd x_1\cdots \dd x_N,
\end{equation}
where $\be>0$ is the inverse temperature parameter and where $Z^N_{V,\be}$ is
a normalizing constant known as the partition function. To ensure that the
model is well defined, we assume that $V$ is finite on a set of positive
Lebesgue measure and
\begin{equation*}\label{eq:gamma}\tag{$\mathbf H_\beta$}
 \int \e^{-\frac\be 2 (V(x) - \mathbf 1_{d=2} \log(1+|x|^2))}\dd x <\infty.
\end{equation*}
This indeed implies that $0<Z^N_{V,\be}<\infty$. The \emph{Coulomb gas} or
\emph{one-component plasma} stands for the exchangeable random particles
$x_1,\ldots,x_N$ living in $\R^d$ with joint law $\dP_{V,\be}^N$. We refer to
\citep{MR2641363,serfaty-course} and references therein for more mathematical
and physical aspects of such models.

It is customary to encode the particles $x_1,\ldots,x_N$ by their empirical
measure
\[ 
\hat \mu_N := \frac{1}{N} \sum_{i=1}^N\de_{x_i}.
\]
It is known, at least when $V$ is admissible and continuous, that for any
$\beta>0$ satisfying \eqref{eq:gamma} we have the weak convergence, with
respect to continuous and bounded test functions,
\[
\hat\mu_N \underset{N\to\infty}{\longrightarrow} \mu_V,
\]
with probability one, in any joint probability space. This weak convergence
follows from the $\Ga$-convergence of $\frac1{N^2} H_N$ towards $\ENE_V$, or
from the large deviation principle at speed $N^2$ and rate function
$\frac{\be}{2}(\ENE_V-\ENE_V(\mu_V))$ satisfied by $\hat\mu_N$, see
\citep{serfaty-course,chafai-gozlan-zitt}. More precisely, the large deviation principle yields the convergence, for any distance $\dd$ on $\cP(\R^d)$ metrizing the weak topology  and 
$r>0$,
\[
\frac{1}{N^2}\log\dP_{V,\be}^N\Big(\,\dd(\hat\mu_N,\mu_V) \ge r\Big)%
\underset{N\to\infty}{\longrightarrow}
-\frac\be2\inf_{\substack{\dd(\mu,\mu_V)\ge r}}
  \big(\ENE_V(\mu)-\ENE_V(\mu_V)\big).
\] 
In particular, there exist non-explicit constants
$N_0,c,C>0$ depending on $\beta$, $V$ and $r>0$ such that, for every
$N\geq N_0$,
\begin{equation}
 \label{eq:LDP}
 \e^{- cN^2}  \leq \dP_{V,\be}^N \Big(\,\dd(\hat\mu_N,\mu_V)\ge r \Big)\leq \e^{- CN^2}.
\end{equation} 
Beyond this large deviation principle, one may look for concentration
inequalities, quantifying non-asymptotically the closeness of $\hat\mu_N$ to
$\mu_V$. Namely, can we improve the large deviation upper bound in
\eqref{eq:LDP} into an inequality holding for any $N\geq 2$ and with an
explicit dependence in $r$? A quadratic dependence of $C$ in $r$ would yield a
sub-Gaussian concentration inequality. Taking for $\dd$ stronger distances,
such as the Wasserstein distances $\WAS_p$ would be of interest too. For the
Coulomb gas with $d=2$ but restricted to the real line, also known as the
\emph{one-dimensional log-gas} popularized by random matrix theory (see
\citep{MR2641363} and references therein), a concentration inequality for
$\WAS_1$ has been established by \citet{maida-maurel-segala}. But for the
usual Coulomb gases in dimension $d\geq 2$, no concentration inequality is
available in the literature yet for metrics yielding the weak convergence.

Let us also mention that central limit theorems have been obtained for one
dimensional log-gases by \citet{johansson}, and for two dimensional Coulomb
gases by \citet{MR2361453,MR2817648,MR3342661} when $\beta=2$ and recently
extended by \citet{leble-serfaty-clt,bourgade-CLT} to any $\beta>0$. As for
the higher dimensional Coulomb gases, the fluctuations remain largely
unexplored.

A related topic is the meso/microscopic study of Coulomb gases, which has been
recently investigated by several authors including \citet{MR3353821,bourgade,
  leble} for $d=2$, \citet{MR3455593,petrache-serfaty,leble-serfaty} in any
dimension. These works provide fine asymptotics for local observables of
Coulomb gases, most of them using the machinery of the renormalized energy.
For our purpose, an interesting deviation bound can be derived from
\citep{MR3455593}: Under extra regularity conditions for $V$, for any $d\geq
2$ and any fixed $r,R>0$,
\begin{equation}
\label{eq:RougeSer}
\dP_{V,\be}^N\left(\sup_{\LIP{f}\leq1}\int_{|x|\leq R} f(x)(\hat\mu_N-\mu_V)(\dd x)\ge r \right)\leq \e^{- C_R\,r^2N^{2}}
\end{equation}
provided $N\geq N_0$, for some constants $C_R,N_0>0$ depending on $R,\beta,V$;
we used the notation 
\[
\LIP{f} := \sup_{x\neq y}\frac{\ABS{f(x)-f(y)}}{\ABS{x-y}}.
\]
Indeed, getting rid of the constraint ${|x|\leq R}$ in the integral and of the
dependence in $R$ of the constant $C_R$ would yield a concentration inequality
in the $\WAS_1$ metric, and this is exactly what we are aiming at. Although
one could try to adapt \citep{MR3455593}'s proof so as to extend the bound
\eqref{eq:RougeSer} to a non-local setting, we shall proceed here differently,
without using the electric field machinery developed in the context of the
renormalized energy. Our goal is to provide self-contained, short and simple
proofs for concentration inequalities of this flavor, as well as providing
more explicit constants and ranges of validity when it is possible.

More precisely in this work we provide, under appropriate regularity and
growth conditions on $V$, concentration inequalities for the empirical measure
of Coulomb gases in any dimension $d\geq 2$, with respect to both bounded
Lipschitz and Wasserstein $\WAS_1$ metrics. For both metrics, we obtain
sub-Gaussian concentration inequalities of optimal order with respect to $N$,
see Theorem \ref{th:theococo} and Theorem \ref{th:theococo2}. Our
concentration inequalities are new for general Coulomb gases and Corollary
\ref{co:ginibre} even improves the previous concentration bounds for the
empirical spectral measure of the Ginibre random matrices, due to
\cite{pemantle-peres} and \cite{breuer-duits}, both results relying on the
determinantal structure of the Ginibre ensemble. These inequalities are
precise enough to yield weak and $\WAS_1$ convergence of $\hat\mu_N$ to
$\mu_V$ at mesoscopic scales, see Corollary \ref{cor:lolo}. For completeness,
we also provide estimates on the probability of having particles outside of
arbitrarily large compact sets, see Theorem \ref{th:tightness}.



As for the proofs, a key observation is that the energy \eqref{eq:HN} is a function of the
empirical measure $\hat\mu_N$. Indeed, with the identity \eqref{eq:EVEV} in
mind, we can write
\begin{equation}\label{eq:cgmu}
  \dd\dP_{V,\be}^N(x_1,\ldots,x_N)%
  =\frac{1}{Z_{V,\be}^N}
  \exp\PAR{-\frac{\be}{2}N^2\ENE_V^{\neq}(\hat\mu_N)}\dd x_1\cdots\dd x_N
\end{equation}
where 
\begin{equation}\label{eq:enevneq}
  \ENE_V^{\neq}(\hat\mu_N):=\iint_{x\neq y} g(x-y)\hat\mu_N(\dd x)\hat\mu_N(\dd y) +\int V(x)\hat\mu_N(\dd x).
\end{equation}
The quantity $\ENE_V^{\neq}(\hat\mu_N)-\ENE_V(\mu_V)$ is an
energetic way to measure the closeness of $\hat\mu_N$ to $\mu_V$. One may alternatively write it as a regularized $L^2$ norm of the electric field of $\hat\mu_N-\mu_V$, which leads to the notion of  renormalized energy. As mentioned previously, it has been successfully used to obtain fine local asymptotics for Coulomb gases.  With a concentration inequality involving global observables as a target, we bypass the use of this machinery by using functional inequalities, in the spirit of Talagrand transport inequalities,  replacing the usual relative entropy by the Coulomb energy $\cE_V$ or
the Coulomb metric \eqref{eq:enedist}. Such inequalities, which we introduce now, may be of independent interest.

\subsection{Probability metrics and Coulomb transport inequality} 
In the following, we consider the \emph{bounded Lipschitz} (Fortet--Mourier) distance on $\cP(\R^d)$ defined by
\begin{equation}\label{eq:bl}
  \DBL(\mu,\nu)%
  :=\sup_{\substack{\LIP{f}\leq1\\\NRM{f}_\infty\leq1}}\int f(x)(\mu-\nu)(\dd x),
\end{equation}
which metrizes its weak convergence, see \citep{Dud02}. Moreover, for any $p\geq1$, the \emph{Kantorovich} or \emph{Wasserstein distance} of order $p$ between
$\mu\in\cP(\dR^d)$ and $\nu\in\cP(\dR^d)$ is defined by
\begin{equation}\label{eq:Wp}
  \WAS_p(\mu,\nu):=\Bigr(\inf\iint\ABS{x-y}^p\pi(\dd x,\dd y)\Bigr)^{1/p},
\end{equation}
where the infimum runs over all probability measures
$\pi\in\cP(\dR^d\times\dR^d)$ with marginal distributions $\mu$ and $\nu$,
when $\ABS{\cdot}^p$ is integrable for both $\mu$ and $\nu$. We set
$\WAS_p(\mu,\nu)=+\infty$ otherwise. Convergence in the $\WAS_p$ metric is
equivalent to weak convergence plus convergence of moments up to order $p$.
The Kantorovich-Rubinstein dual representation of $\WAS_1$
\citep{MR1964483,MR1619170} reads
\begin{equation}\label{eq:KRduality}
  \WAS_1(\mu,\nu)=\sup_{\LIP{f}\le 1}\int f(x)(\mu-\nu)(\dd x).
\end{equation} 
In particular, we readily have, for any $\mu,\nu\in\cP(\R^d)$,
\begin{equation}
\label{BL-W1 dominance}
  \DBL(\mu,\nu)\leq \WAS_1(\mu,\nu).
\end{equation}
Our first result states that the Coulomb metric \eqref{eq:enedist} locally
dominates the $\WAS_1$ metric, and hence the bounded Lipschitz metric as well.
In particular, once restricted to probability measures on a prescribed compact
set, the convergence in Coulomb distance implies the convergence in
Wasserstein distance.

\begin{theorem}[Local Coulomb transport inequality]%
  \label{th:clé}%
  For every compact subset $D\subset~\R^d$, there exists a constant $C_D>0$
  such that, for every $\mu,\nu\in\cP(\R^d)$ supported in $D$ with  
  $\ENE(\mu)<\infty$ and $\ENE(\nu)<\infty$,
  \[
  \WAS_1(\mu,\nu)^2\le C_D\, \ENE(\mu-\nu).
  \]
\end{theorem}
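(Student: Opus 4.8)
The plan is to use the Fourier/potential-theoretic representation of the Coulomb energy together with duality for $\WAS_1$. First I would recall that, thanks to \eqref{eq:Poisson}, for a compactly supported signed measure $\sigma=\mu-\nu$ with total mass zero and finite energy, the potential $U^\sigma:=g*\sigma$ satisfies $-\Delta U^\sigma=c_d\,\sigma$ in the distributional sense, and
\[
\ENE(\sigma)=\int U^\sigma\,\dd\sigma=\frac1{c_d}\int_{\R^d}\ABS{\nabla U^\sigma(x)}^2\,\dd x,
\]
the latter identity being the standard "integration by parts" that makes the Coulomb energy a squared Sobolev-type ($\dot H^{-1}$, resp.\ $\dot H^{1}$) norm. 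So the quantity $\ENE(\mu-\nu)$ is, up to the constant $c_d$, the squared $L^2$ norm of the electric field $E:=\nabla U^\sigma$.

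Next, by the Kantorovich--Rubinstein duality \eqref{eq:KRduality}, it suffices to bound $\int f\,\dd\sigma$ for an arbitrary $1$-Lipschitz $f$. The idea is to test the Poisson equation against $f$: formally,
\[
c_d\int f\,\dd\sigma=-\int f\,\Delta U^\sigma=\int \nabla f\cdot\nabla U^\sigma=\int_{\R^d}\nabla f\cdot E.
\]
Since $\LIP{f}\le1$ gives $\ABS{\nabla f}\le1$ a.e., Cauchy--Schwarz yields
\[
c_d\int f\,\dd\sigma\le \Bigl(\VOL(\Omega)\Bigr)^{1/2}\NRM{E}_{L^2(\R^d)},
\]
where $\Omega$ is any open set containing the (common) support in $D$ together with the support of $\nabla f$. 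The subtlety is that $\nabla f$ need not be compactly supported; to handle this one truncates $f$ outside a fixed large ball $B\supset D$, replacing $f$ by $f\cdot\chi$ for a fixed smooth cutoff $\chi$ equal to $1$ on $D$, which changes neither $\int f\,\dd\sigma$ (since $\sigma$ is supported in $D$) nor the Lipschitz constant by more than a dimensional factor depending only on $B$, hence only on $D$. After this truncation $\nabla (f\chi)$ is supported in $B$, so $\VOL(\Omega)$ can be taken to be $\VOL(B)$, a constant depending only on $D$. Combining with $\ENE(\mu-\nu)=c_d^{-1}\NRM{E}_{L^2}^2$ gives
\[
\WAS_1(\mu,\nu)=\sup_{\LIP{f}\le1}\int f\,\dd\sigma\le C_D'\,\NRM{E}_{L^2}= C_D''\,\ENE(\mu-\nu)^{1/2},
\]
which is the claim after squaring, with $C_D:=(C_D'')^2$.

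The main obstacle is making the formal integration by parts rigorous at the level of regularity we actually have: $\mu,\nu$ are merely measures with finite energy, so $U^\sigma\in\dot H^1$ but need not be smooth, and $f$ is only Lipschitz, not $C^1$. I would handle this by a double mollification: replace $\sigma$ by $\sigma*\phi_\varepsilon$ and $f$ by $f*\psi_\varepsilon$ with standard smooth bumps, for which all the identities above hold classically; one then checks that $\ENE(\sigma*\phi_\varepsilon)\le \ENE(\sigma)$ (convolution does not increase the energy, by Fourier positivity, or directly because $\widehat g\ge0$ in the relevant sense), that $\LIP{f*\psi_\varepsilon}\le\LIP{f}$, and that $\int (f*\psi_\varepsilon)\,\dd(\sigma*\phi_\varepsilon)\to\int f\,\dd\sigma$ as $\varepsilon\to0$ because $\sigma$ has compact support and $f$ is continuous. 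The energy identity $\ENE(\sigma)=c_d^{-1}\NRM{\nabla U^\sigma}_{L^2}^2$ itself is standard potential theory (see \citep{lieb-loss} for $d\ge3$ and \citep{saff-totik} for $d=2$, where finiteness of $\ENE(\mu-\nu)$ is exactly what guarantees $U^\sigma\in\dot H^1$); in dimension $2$ one must also be mindful that $g$ is not sign-definite, but the zero-mass condition $\sigma(\R^d)=0$ together with compact support rescues every step. A clean alternative, avoiding mollification of $f$, is to note that $\WAS_1(\mu,\nu)=\sup\{\int f\,\dd\sigma : f\in C_c^\infty,\ \LIP f\le1\}$ for compactly supported $\sigma$, so one may assume $f$ smooth from the outset and only mollify $\sigma$; this is the route I would actually write up.
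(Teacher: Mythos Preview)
Your proposal is correct and follows essentially the same route as the paper: Kantorovich--Rubinstein duality, localization of the Lipschitz test function to a fixed compact enlargement of $D$, the Poisson identity $-\Delta U^\sigma = c_d\,\sigma$, integration by parts, Cauchy--Schwarz, and mollification of the measures to justify everything. The only cosmetic differences are that the paper localizes $f$ via an explicit radial construction (Lemma~\ref{le:W1nice}) that keeps $\LIP f\le1$ exactly rather than via a smooth cutoff, and it passes to the limit by proving $\ENE(\phi_\varepsilon*\mu-\phi_\varepsilon*\nu)\to\ENE(\mu-\nu)$ through superharmonicity (Lemma~\ref{le:energysmoother}) rather than invoking your monotonicity shortcut $\ENE(\sigma*\phi_\varepsilon)\le\ENE(\sigma)$.
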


We provide a proof for Theorem \ref{th:clé} in Section \ref{se:th:clé}.

As we will see in the proof, a rough upper bound on the optimal constant $C_D$
is given by the volume of the ball of radius four times the one of the domain
$D.$

When $d=2$ and $D\subset\R$, Theorem \ref{th:clé} yields
\cite[Th.~1]{popescu}, up to the sharpness of the constant; we can say that
Theorem \ref{th:clé} extends Popescu's local free transport inequality to
higher dimensions.

How to go over the case of compactly supported measures and get rid of the
dependence in the domain $D$ of the constant $C_D$? To avoid restricting the
distribution of charges $\mu$ to a prescribed bounded domain $D$, we add a
confining external potential to the energy, namely we consider the weighted
functional $\ENE_V$ introduced in \eqref{eq:EV} instead of $\ENE$. This will
provide inequalities when one of the measure is the equilibrium measure
$\mu_V$; this is not very restrictive as any compactly supported probability
measure such that $g*\mu$ is continuous is the equilibrium measure of a well
chosen potential, see for instance \citep{MR0350027}. Our next result is a
Coulomb transport type inequality for equilibrium measures.

\begin{theorem}[Coulomb transport inequality for equilibrium measures]%
  \label{th:coultransp}%
  If the potential $V:\R^d\to\R\cup\{+\infty\}$ is admissible, 
  then there exists a constant $C^V_{\BL}>0$ such that, for every
  $\mu\in\cP(\R^d)$,
  \begin{equation}\label{eq:CoulombBL}
    \DBL(\mu,\mu_V)^2\le C_{\BL}^V\Big(\ENE_V(\mu)-\ENE_V(\mu_V)\Big).
  \end{equation}
  If one further assumes that $V$ grows at least quadratically, 
  \begin{equation*}\label{eq:Vgrowth}\tag{$\bH_{\WAS_1}$}
  \liminf_{|x|\to\infty}\frac{V(x)}{|x|^2}>0,
  \end{equation*}
  then there exists a constant $C^V_{\WAS_1}>0$ such that, for every
  $\mu\in\cP(\R^d)$,
  \begin{equation}\label{eq:CoulombW1}
    \WAS_1(\mu,\mu_V)^2\le C_{\WAS_1}^V\big(\ENE_V(\mu)-\ENE_V(\mu_V)\big).
  \end{equation}
\end{theorem}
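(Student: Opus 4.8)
The plan is to reduce the transport inequalities involving the weighted energy $\ENE_V$ to the local Coulomb transport inequality of Theorem~\ref{th:clé} by two successive moves: first, rewrite the energy gap $\ENE_V(\mu)-\ENE_V(\mu_V)$ as the Coulomb energy $\ENE(\mu-\mu_V)$ of the (signed) difference, using the variational characterization of $\mu_V$; second, transfer the control on the compactly supported measure $\mu_V$ part of $\mu$ over to the full $\mu$, paying attention to the mass of $\mu$ that escapes any fixed compact set.

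\emph{Step 1: the energy identity.} Recall the Euler--Lagrange (Frostman) conditions for $\mu_V$: there is a constant $F_V$ (the Robin constant) with $U^{\mu_V}(x)+\tfrac12 V(x)=F_V$ for quasi-every $x$ in $\SUPP(\mu_V)$ and $\ge F_V$ quasi-everywhere on $\R^d$, where $U^{\mu_V}=g*\mu_V$ is the potential of $\mu_V$. A standard computation (see, e.g., \citep[Sec.~2.3]{serfaty-course} or \citep[Lem.~I.1.8]{saff-totik}) gives, for any $\mu\in\cP(\R^d)$ with $\ENE(\mu)<\infty$,
\[
\ENE_V(\mu)-\ENE_V(\mu_V)
=\ENE(\mu-\mu_V)+2\int\Bigl(U^{\mu_V}+\tfrac12 V-F_V\Bigr)\dd\mu
\ \ge\ \ENE(\mu-\mu_V),
\]
since the second integral is nonnegative by the second Frostman inequality. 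If $\ENE(\mu)=+\infty$ (or $\int V\,\dd\mu=+\infty$) the right-hand sides of \eqref{eq:CoulombBL} and \eqref{eq:CoulombW1} we are proving are finite while the left-hand side is $+\infty$ --- wait, that is the wrong direction, so one first disposes of that case by noting $\DBL\le 2$ always and $\ENE_V(\mu)-\ENE_V(\mu_V)=+\infty$, so \eqref{eq:CoulombBL} is trivial; for \eqref{eq:CoulombW1} one observes that if $\WAS_1(\mu,\mu_V)=\infty$ then $\int|x|\,\dd\mu=\infty$, and the quadratic growth \eqref{eq:Vgrowth} forces $\int V\,\dd\mu=\infty$, hence again the inequality is trivial. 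So we may assume $\ENE(\mu)<\infty$ and are reduced to bounding $\DBL(\mu,\mu_V)^2$, resp.\ $\WAS_1(\mu,\mu_V)^2$, by a constant times $\ENE(\mu-\mu_V)$.

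\emph{Step 2: truncation and the escaping mass.} Fix a large ball $D=\overline{B}(0,R)$ containing $\SUPP(\mu_V)$ in its interior, and split $\mu=\mu\vert_D+\mu\vert_{D^c}$. The idea is to compare $\mu$ with a compactly supported modification: let $t:=\mu(D^c)$ and define $\tilde\mu:=\mu\vert_D+t\,\nu_0$ for a fixed reference probability measure $\nu_0$ supported in $D$ with $\ENE(\nu_0)<\infty$ (e.g.\ normalized Lebesgue on $D$); then $\tilde\mu\in\cP(D)$. For the bounded-Lipschitz case, $\DBL(\mu,\tilde\mu)\le 2t$ since the two measures differ by mass $t$ moved around, and by Theorem~\ref{th:clé} applied to $\tilde\mu$ and $\mu_V$ on $D$,
\[
\DBL(\mu,\mu_V)\le \DBL(\mu,\tilde\mu)+\DBL(\tilde\mu,\mu_V)\le 2t+\sqrt{C_D\,\ENE(\tilde\mu-\mu_V)}.
\]
It remains to bound $t$ and $\ENE(\tilde\mu-\mu_V)$ by $\ENE(\mu-\mu_V)$ (up to constants), which I expect to be \emph{the main obstacle}: the point is that $\ENE(\mu-\mu_V)$ is a genuinely global, nonlocal quantity and one must show it controls both the escaping mass $t$ and, after re-localization, the local energy. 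For the mass bound, the key is that far from the support of $\mu_V$ the potential $U^{\mu-\mu_V}=U^{\mu}-U^{\mu_V}$ picks up a definite negative (when $d=2$) or decaying-positive (when $d\ge3$) contribution; concretely one uses $\ENE(\mu-\mu_V)=\iint g\,\dd(\mu-\mu_V)\dd(\mu-\mu_V)$ together with Fubini and the splitting of $\mu$ to isolate a term like $t^2 \inf_{x,y\in D^c} g(x-y)$ plus cross terms controlled by the (finite) potential of $\mu_V$, and by admissibility of $V$ one re-introduces the potential: on $D^c$ with $R$ large, $V(x)\ge 4\log|x|$ (resp.\ $V$ large), so that $\int(U^{\mu_V}+\tfrac12V-F_V)\dd\mu \ge c\,\mu(D^c)=c\,t$, and then using Step~1 in the form $\ENE(\mu-\mu_V)\ge \ENE_V(\mu)-\ENE_V(\mu_V)-2\int(\cdots)\dd\mu$ seems circular --- better is to use directly the lower bound $\ENE_V(\mu)-\ENE_V(\mu_V)\ge 2\int(U^{\mu_V}+\tfrac12V-F_V)\,\dd\mu\ge 2ct$ which shows the \emph{right-hand side} of \eqref{eq:CoulombBL} already dominates $t$ (hence $t^2$, since $t\le1$). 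So actually one bounds $t$ directly by the RHS and bounds $\DBL(\tilde\mu,\mu_V)^2$ by $C_D\,\ENE(\tilde\mu-\mu_V)\le C_D'(\ENE(\mu-\mu_V)+\text{error}(t))$ where the error comes from swapping $\mu\vert_{D^c}$ for $t\nu_0$ and is again controlled by $t$ and hence by the RHS; combining gives \eqref{eq:CoulombBL} with $C^V_{\BL}$ depending on $R$ (fixed once and for all once $V$ is fixed) and on $c,C_D$.

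\emph{Step 3: the $\WAS_1$ upgrade.} Under \eqref{eq:Vgrowth} the same scheme works but now one must control $\WAS_1(\mu,\tilde\mu)$, i.e.\ the \emph{cost of transporting the escaping mass back into $D$}, which is roughly $\int_{D^c}|x|\,\dd\mu(x)$ rather than just $t$. Quadratic growth of $V$ is exactly what is needed: from $V(x)\ge c_0|x|^2$ for $|x|\ge R$ one gets $\int_{D^c}|x|\,\dd\mu\le \tfrac{1}{c_0 R}\int_{D^c} V\,\dd\mu$ together with an $L^1$-bound $\int_{D^c}|x|\,\dd\mu\le \sqrt{\mu(D^c)}\sqrt{\int V\,\dd\mu/c_0}$ by Cauchy--Schwarz, and both $\mu(D^c)$ and $\int_{D^c}V\,\dd\mu$ are bounded by a constant times $\ENE_V(\mu)-\ENE_V(\mu_V)$ via the Frostman lower bound of Step~1 (after absorbing the constant $F_V$ and the bounded-on-$D$ part of $V$). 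One then writes $\WAS_1(\mu,\mu_V)\le \WAS_1(\mu,\tilde\mu)+\WAS_1(\tilde\mu,\mu_V)$, bounds the first term by $\mathrm{diam}(D)\cdot t+\int_{D^c}|x|\,\dd\mu$ (transport the $D^c$-mass plus some slack), and the second by $\sqrt{C_D\,\ENE(\tilde\mu-\mu_V)}$ using Theorem~\ref{th:clé}, and concludes as before. The only delicate accounting is that $\ENE(\tilde\mu-\mu_V)$ must be shown to be finite and $\le C(\ENE(\mu-\mu_V)+t)$, which follows by expanding the energy and using that $\nu_0$ and $\mu_V$ have finite energy and bounded potentials on $D$. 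Overall, I expect no conceptual difficulty beyond Step~2; the work is in choosing $R$ and the reference measure $\nu_0$ cleanly so that all error terms are visibly dominated by the energy gap, and in checking the $d=2$ sign bookkeeping where $g$ changes sign.
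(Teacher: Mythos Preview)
Your plan is essentially the same as the paper's: truncate $\mu$ to a compactly supported $\tilde\mu$, control the tail mass via the Frostman remainder, apply the local Theorem~\ref{th:clé} (through Lemma~\ref{co:localcoultransp}) to $\tilde\mu$, and finish by the triangle inequality. The paper also takes $\nu_0=\sigma$ the uniform measure on the unit sphere (whose potential is explicit), which makes the cross terms in the energy comparison completely transparent.

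The one genuine soft spot in your write-up is the claim in Step~2 that $\ENE(\tilde\mu-\mu_V)\le C\bigl(\ENE(\mu-\mu_V)+\text{error}(t)\bigr)$. Expanding $\tilde\mu-\mu_V=(\mu-\mu_V)+(t\nu_0-\mu\vert_{D^c})$ produces a term $\ENE(\mu\vert_{D^c})$ that need not be $O(t)$ (think of tail mass $t$ concentrated in a tiny ball when $d\ge3$), and for $d=2$ the quantity $\ENE(\mu-\mu_V)$ is not even guaranteed to be well-defined when $\mu$ has unbounded support. The paper avoids this entirely: instead of comparing $\ENE(\tilde\mu-\mu_V)$ to $\ENE(\mu-\mu_V)$, it shows directly that $\ENE_V(\mu)-\ENE_V(\tilde\mu)\ge \DBL(\mu,\tilde\mu)^2$ (resp.\ $\ge\gamma\WAS_1(\mu,\tilde\mu)^2$) for $R$ large, by writing $\ENE_V(\mu)-\ENE_V(\tilde\mu)$ as an integral of $V-\text{(log terms)}$ against the tail and using admissibility (resp.\ quadratic growth). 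This yields both $\ENE_V(\tilde\mu)\le\ENE_V(\mu)$ and the distance bound in one stroke, and then Lemma~\ref{co:localcoultransp} finishes. So your Step~1 identity is only ever needed for the compactly supported $\tilde\mu$, where it is unproblematic. If you reroute Step~2 through $\ENE_V(\tilde\mu)\le\ENE_V(\mu)$ rather than through $\ENE(\mu-\mu_V)$, your argument becomes the paper's.
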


Theorem \ref{th:coultransp} is proved in Section \ref{se:th:coultransp}.

Given a potential $V,$ one can obtain rough upper bounds on the optimal
constants $C_{\BL}^V,C^V_{\WAS_1}$ by following carefully the proof of the
theorem.


\begin{remark}[Free transport inequalities]
  A useful observation is that the problem of minimizing $\ENE_V(\mu)$ over
  $\mu\in \cP(S)$, where $S\subset\R^d$ has positive capacity, is equivalent
  to consider the full minimization problem but setting $V=+\infty$ on
  $\R^d\setminus S$. In particular, by taking any admissible $V$ on $\R$ and
  setting $V=+\infty$ on $\R^2\setminus \R$, Theorem \ref{th:coultransp} with
  $d=2$ yields the free transport inequality obtained by
  \citet{maida-maurel-segala}, without the restriction that $V$ is continuous,
  and \citet{popescu}. Notice that although the lower semicontinuity of $V$ is
  not assumed in \citep{popescu}, it is actually a necessary condition for the
  existence of $\mu_V$ in general.
\end{remark}

\begin{remark}[Optimality of the growth condition for $\WAS_1$]
  Following \cite[Rem.~1]{maida-maurel-segala}, one can check that
  \eqref{eq:CoulombW1} cannot hold if $V$ is admissible but does not satisfy
  \eqref{eq:Vgrowth}. Indeed, letting $\nu_n$ be the uniform probability
  measure on the ball of $\R^d$ of radius one centered at $(n,0,\ldots,0)$,
  then $\WAS_1(\nu_n,\mu_V)^2$ grows like $n^2$ as $n\to\infty$, whereas
  $\ENE_V(\nu_n)$ grows like $\int V\dd\nu_n=o(n^2)$.
\end{remark}

Equipped with these metric inequalities, we are now in position to state the
concentration inequalities.

\subsection{Concentration of measure for Coulomb gases}

Our first result on concentration of measure is the following. 

\begin{theorem}[Concentration of measure for Coulomb gases]\label{th:theococo}%
  Assume that $V$ is $\cC^2$ on $\R^d$ and that its Laplacian $\Delta V$
  satisfies the following growth constraint
  \begin{equation}\label{eq:weakLapGrowth}
    \limsup_{|x|\to\infty}\Bigg(\frac1{V(x)} 
    \displaystyle\sup_{\substack{y\in\dR^d\\|y-x| < 1}}\Delta V(y)\Bigg)< 2(d+2).
  \end{equation}
  If $V$ is admissible then there exist constants $a>0$, $b\in\R$, and a
  function $\beta\mapsto c(\beta)$ such that, for any $\beta>0$ which
  satisfies \eqref{eq:gamma}, any $N\geq 2$, and any $r>0$,
  \begin{equation}\label{eq:concentrationBL}
    \dP_{V,\be}^N \Big(\DBL(\hat\mu_N,\mu_V)\ge r \Big) %
    \leq \e^{-a\beta N^2 r^2 %
      +\mathbf 1_{d=2}(\frac{\be}{4} N \log N) %
      +b\beta N^{2-2/d} %
      +c(\beta) N}.
  \end{equation}
  If there exists $\kappa >0$ such that 
  $ \displaystyle \liminf_{|x| \rightarrow \infty} \frac{V(x)}{|x|^\kappa} >0,$ then 
  \begin{equation}\label{eq:cbeta}
    c(\beta) =  
    \begin{cases}
      \mathcal O(\log\be), & \mathrm{as }\, \be \rightarrow 0,\\
      \mathcal O(\be), & \mathrm{as }\, \be \rightarrow \infty.
    \end{cases}
  \end{equation}
  If $V$ satisfies \eqref{eq:Vgrowth}, then the same holds true when replacing
  $\DBL$ by $\WAS_1.$
\end{theorem}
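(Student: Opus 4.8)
\emph{Proof strategy.} The plan is to adapt the partition-function comparison of \citet{maida-maurel-segala}: bound the probability by a ratio of partition functions, estimate the denominator from below by inserting a regularization of the equilibrium measure into the Gibbs weight, and estimate the numerator from above by first smoothing the empirical measure so that its Coulomb energy becomes finite, and then invoking the Coulomb transport inequality of Theorem~\ref{th:coultransp}.

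By \eqref{eq:cgmu},
$$\dP_{V,\be}^N\big(\DBL(\hat\mu_N,\mu_V)\ge r\big)=\frac{1}{Z_{V,\be}^N}\int_{\{\DBL(\hat\mu_N,\mu_V)\ge r\}}\e^{-\frac{\be}{2}N^2\ENE_V^{\neq}(\hat\mu_N)}\,\dd x_1\cdots\dd x_N .$$
Since $V\in\cC^2$, the equilibrium density $\rho_V=\Delta V/(2c_d)$ is bounded on its compact support, so $\ENE(\mu_V)<\infty$ and $\int\rho_V\log\rho_V<\infty$; restricting the integral to the support of $\mu_V$, writing $\dd x_i=\rho_V(x_i)^{-1}\mu_V(\dd x_i)$ there, and applying Jensen's inequality to the product measure $\mu_V^{\otimes N}$ yields
$$Z_{V,\be}^N\ \ge\ \exp\Big(-\tfrac{\be}{2}N^2\ENE_V(\mu_V)+\tfrac{\be}{2}N\,\ENE(\mu_V)-N\!\int\!\rho_V\log\rho_V\Big),$$
so the denominator already produces the leading term $-\tfrac{\be}{2}N^2\ENE_V(\mu_V)$ up to corrections of order $\be N$ and $N$, destined for the $c(\be)N$ term.

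For the numerator the key is a regularization estimate. Fix a radial probability density $\theta_\eta$ supported in the ball of radius $\eta:=N^{-1/d}\le1$ and set $\tilde\mu_N:=\hat\mu_N*\theta_\eta$. Then $\DBL(\tilde\mu_N,\hat\mu_N)\le\eta$ and $\WAS_1(\tilde\mu_N,\hat\mu_N)\le\eta$, while superharmonicity of $g$ (whence $g*\theta_\eta\le g$ for radial $\theta_\eta$) gives
$$\iint_{x\neq y}g(x-y)\,\hat\mu_N(\dd x)\hat\mu_N(\dd y)\ \ge\ \ENE(\tilde\mu_N)-\tfrac1N\,(g*\theta_\eta*\theta_\eta)(0),$$
where the bump self-energy $(g*\theta_\eta*\theta_\eta)(0)$ is of order $\eta^{2-d}$ for $d\ge3$ and of order $\log(1/\eta)$ for $d=2$. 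A second-order Taylor expansion of $V$ against $\theta_\eta$, whose first moment vanishes and whose second moment equals $\frac{d}{d+2}\eta^2$, bounds $\int V\,\dd\tilde\mu_N$ by $\int V\,\dd\hat\mu_N$ plus an error at most $\frac{\eta^2}{2(d+2)}\int\big(\sup_{|y-\cdot|<\eta}\Delta V(y)\big)\,\dd\hat\mu_N+o(\eta^2)$. This is exactly where \eqref{eq:weakLapGrowth} is used: its threshold $2(d+2)$ matches the second moment of the bump, so the error is at most a fraction strictly less than one of an $\mathcal O(N^{-2/d})$ multiple of $\ENE_V^{\neq}(\hat\mu_N)$ itself — bounding $\int V\,\dd\hat\mu_N$ crudely by $\ENE_V^{\neq}(\hat\mu_N)$ via $g\ge0$ for $d\ge3$ and via $|x_i-x_j|^2\le2(1+|x_i|^2)(1+|x_j|^2)$ for $d=2$ — and can therefore be transferred back to the left at the cost of a term of order $\be N^{2-2/d}$. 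Combining these estimates, on the event $\{\DBL(\hat\mu_N,\mu_V)\ge r\}$ we have $\DBL(\tilde\mu_N,\mu_V)\ge r-\eta$, so the splitting $\ENE_V(\nu)-\ENE_V(\mu_V)=\ENE(\nu-\mu_V)+\int\zeta_V\,\dd\nu\ge\ENE(\nu-\mu_V)\ge0$ — with $\zeta_V\ge0$ the effective (Frostman) potential, which vanishes $\mu_V$-a.e. — together with \eqref{eq:CoulombBL} gives $\ENE_V(\tilde\mu_N)\ge\ENE_V(\mu_V)+(r-\eta)^2/C_{\BL}^V$, hence the desired lower bound on $\ENE_V^{\neq}(\hat\mu_N)$ on that event.

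To keep the numerator integral convergent, split $\ENE_V^{\neq}=(1-\tfrac1N)\ENE_V^{\neq}+\tfrac1N\ENE_V^{\neq}$, apply the bound above to the first piece, and bound the second below by $\tfrac1N\big(\int(V-\mathbf 1_{d=2}\log(1+|x|^2))\,\dd\hat\mu_N-C\big)$, so that the residual Lebesgue integral factorizes as $\big(\int\e^{-\frac{\be}{2}(V-\mathbf 1_{d=2}\log(1+|x|^2))}\,\dd x\big)^{N}$, which is finite by \eqref{eq:gamma}; the logarithm of this factor is precisely the $c(\be)N$ term, and under $\liminf_{|x|\to\infty}V(x)/|x|^\kappa>0$ it equals $\e^{\mathcal O(\log\be)N}$ as $\be\to0$ and $\e^{\mathcal O(\be)N}$ as $\be\to\infty$, giving \eqref{eq:cbeta}. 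Assembling everything: the $-\tfrac{\be}{2}N^2\ENE_V(\mu_V)$ contributions cancel; $\tfrac{\be}{2}N(g*\theta_\eta*\theta_\eta)(0)$ is exactly $\mathbf 1_{d=2}\tfrac{\be}{4}N\log N$ for $d=2$ and $\mathcal O(\be N^{2-2/d})$ for $d\ge3$; the $(1-\tfrac1N)$ prefactors, the $(r-\eta)^2$-versus-$r^2$ gap, and the reabsorbed Taylor error produce further $\mathcal O(\be N^{2-2/d})$ plus — after slightly shrinking $a$ — a harmless fraction of the main term $-a\be N^2r^2$; and everything that remains is of the form $c(\be)N$. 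This yields \eqref{eq:concentrationBL}. The $\WAS_1$ statement follows verbatim upon replacing $\DBL$ by $\WAS_1$ and \eqref{eq:CoulombBL} by \eqref{eq:CoulombW1}, which is legitimate once $V$ satisfies \eqref{eq:Vgrowth}. I expect the main obstacle to be the regularization estimate — specifically, the bookkeeping showing that the configuration-dependent error $\eta^2\Delta V$ is controlled by the sharp threshold $2(d+2)$ while every residual term retains the advertised order in $N$ and dependence in $\be$.
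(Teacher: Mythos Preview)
Your overall architecture matches the paper's proof exactly: the Jensen lower bound on $Z_{V,\be}^N$ is the paper's Lemma~4.1, your smoothing $\tilde\mu_N=\hat\mu_N*\theta_\eta$ with $\eta=N^{-1/d}$ and the superharmonicity bound is the paper's Lemma~4.2 with $\la_\veps$, and the Taylor estimate producing the factor $\frac{\eta^2}{2(d+2)}\sup\Delta V$ is identical. The one substantive deviation is how you dispose of this Taylor error, and that is where a gap appears.

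The paper does \emph{not} absorb the $(V*\la_\veps-V)$ term back into $\ENE_V^{\neq}$. It keeps that term inside the residual integral and chooses the splitting fraction to be $\eta=N^{-2/d}=\veps^2$ (not $1/N$), so that the per-particle exponent of the residual integral becomes $-\tfrac{\be}{2}N^{1-2/d}\big[V-\IND_{d=2}\log(1+|\cdot|^2)-\tfrac{1}{2(d+2)}\sup\Delta V\big]$; the sharp threshold $2(d+2)$ in \eqref{eq:weakLapGrowth} is exactly what makes this bracket bounded below by $2uV-v$ for some $u>0$, hence integrable. Your route instead tries to bound the Taylor error by a fraction of $\eta^2\ENE_V^{\neq}(\hat\mu_N)$ via $\int V\,\dd\hat\mu_N\le\ENE_V^{\neq}(\hat\mu_N)$. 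For $d\ge3$ this is fine since $g\ge0$. For $d=2$, however, the inequality $|x_i-x_j|^2\le2(1+|x_i|^2)(1+|x_j|^2)$ you invoke only yields
\[
\ENE_V^{\neq}(\hat\mu_N)\ \ge\ \int\big(V-\log(1+|\cdot|^2)\big)\,\dd\hat\mu_N - C,
\]
which does \emph{not} control $\int V\,\dd\hat\mu_N$ by a finite multiple of $\ENE_V^{\neq}(\hat\mu_N)$ unless $V$ dominates $\log(1+|x|^2)$ by a fixed factor strictly larger than one---a hypothesis stronger than admissibility. After your absorption you are left with a stray configuration-dependent term of order $\eta^2\int\log(1+|\cdot|^2)\,\dd\hat\mu_N$; pushing it into your $\tfrac1N$ integrability piece would require $\int\e^{-\frac\be2(V-(2-\delta)\log(1+|x|^2))}\dd x<\infty$, which is strictly stronger than \eqref{eq:gamma}. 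The fix is precisely the paper's move: do not absorb, keep $(V*\la_\veps-V)$ in the integral, and take the splitting fraction equal to $\veps^2$ so that \eqref{eq:weakLapGrowth} handles it pointwise.
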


Theorem \ref{th:theococo} is proved in Section \ref{se:theococo}. The proof
relies on the Coulomb transport inequality of Theorem \ref{th:coultransp},
hence the assumption \eqref{eq:Vgrowth} in the second part. The constraint
\eqref{eq:weakLapGrowth} in Theorem \ref{th:theococo} that $\Delta V$ does not
grow faster than $V$ is technical. It allows potentials growing like
$\ABS{\cdot}^\kappa$ for any $\kappa>0$ or $\exp\ABS{\cdot}$, but not
$\exp(\ABS{\cdot}^2)$. As for the regularity condition, we assume for
convenience that $V$ is $\cC^2$ but much less is required, see Remark
\ref{rm:regularity}.

If $\beta>0$ is fixed, then under the last set of assumptions of Theorem
\ref{th:theococo}, there exist constants $u,v>0$ depending on $\beta$ and $V$
only such that, for any $N\geq 2$ and
\begin{equation}
  \label{eq:rcond}
  r\geq 
  \begin{cases}
    v \sqrt{\frac{\log N}{N}}& \mbox{ if } d=2,\\
    vN^{-1/d} & \mbox{ if } d\geq 3,
  \end{cases}
\end{equation}
we have
\begin{equation}
  \label{eq:UBMaidaMaureltype}
  \dP_{V,\be}^N \Big(\WAS_1(\hat\mu_N,\mu_V)\ge r \Big)\leq \e^{- uN^2r^2}.
\end{equation}
Such a bound has been obtained in \cite[Th.~5]{maida-maurel-segala} for the
empirical measure of one-dimensional log-gases.

\begin{remark}[Sharpness]\label{rk:sharpness}
  The concentration of measure bound \eqref{eq:UBMaidaMaureltype} is of
  optimal order with respect to $N$, as one can check using \eqref{eq:LDP} and
  the stochastic dominance
  \[
  \dP_{V,\be}^N\Big(\DBL(\hat\mu_N,\mu_V)\geq r\Big) %
  \leq\dP_{V,\be}^N\Big(\WAS_1(\hat\mu_N,\mu_V)\geq r\Big).
  \]
  The concentration inequalities provided by Theorem \ref{th:theococo} are much
  more precise than the large deviation upper bound in \eqref{eq:LDP} since it
  holds for every $N\geq 2$ and the dependence in $r,\beta$ is explicit.
\end{remark}

Combined with the Borel--Cantelli lemma, Theorem \ref{th:theococo} directly
yields the convergence of $\hat\mu_N$ to $\mu_V$ in the Wasserstein distance
even when one allows $\beta$ to depend on $N$, provided it does not go to zero
too fast, thanks to \eqref{eq:cbeta}.

\begin{corollary}[$\WAS_1$ convergence]\label{co:wcv} 
  Under the last set of assumptions of Theorem \ref{th:theococo}, there
  exists a constant $\beta_V>0$ such that the following holds: Given any
  sequence of positive real numbers $(\be_N)$ such that
  \[
  \beta_N\geq \beta_V\frac{\log N}N
  \] 
  for every $N$ large enough, then
  under $\dP_{V,\be_N}^N$,
  \[
  \lim_{N\to\infty}\WAS_1(\hat\mu_N,\mu_V)=0
  \]
  with probability one, in any joint probability space. 
\end{corollary}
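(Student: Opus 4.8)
The plan is a routine Borel--Cantelli argument applied to the quantitative bound of Theorem~\ref{th:theococo}. Under the last set of hypotheses of that theorem --- in particular $V$ admissible and $\cC^2$, the Laplacian growth constraint \eqref{eq:weakLapGrowth}, $\liminf_{|x|\to\infty}V(x)/|x|^\kappa>0$ for some $\kappa>0$, and the quadratic growth \eqref{eq:Vgrowth} required for the $\WAS_1$ version --- the concentration inequality \eqref{eq:concentrationBL} holds for $\WAS_1$ with a function $c$ obeying \eqref{eq:cbeta}. Fix $r>0$, write $p_N:=\dP_{V,\be_N}^N\big(\WAS_1(\hat\mu_N,\mu_V)\ge r\big)$, and aim to show $\sum_N p_N<\infty$ once the universal constant $\beta_V$ is chosen large enough; Borel--Cantelli together with a countable intersection then conclude.

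First I would insert $\be=\be_N$ into \eqref{eq:concentrationBL}, which gives
\[
\log p_N \le -a\,\be_N N^2 r^2 + \mathbf 1_{d=2}\tfrac{\be_N}{4}N\log N + b\,\be_N N^{2-2/d} + c(\be_N)\,N .
\]
The binding regime is the slowest admissible decay $\be_N \asymp \tfrac{\log N}{N}$. There the leading term is of order $-a\beta_V r^2\,N\log N$; the $d=2$ correction is only of order $(\log N)^2$; the term $b\,\be_N N^{2-2/d}$ is of order $(\log N)\,N^{1-2/d}=o(N\log N)$; and --- this is where \eqref{eq:cbeta} enters decisively --- since $\be_N\to0$ one has $c(\be_N)=\mathcal O(\log\be_N)=\mathcal O(\log N)$, so $c(\be_N)\,N=\mathcal O(N\log N)$, with the favorable (negative) sign here because this term is inherited from a lower bound on the partition function $Z_{V,\be_N}^N$, which diverges as $\be_N\to0$. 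Hence, enlarging $\beta_V$ if necessary, the negative concentration term dominates and $\log p_N\le -2\log N$ for all large $N$, so $p_N=\mathcal O(N^{-2})$ and $\sum_N p_N<\infty$. When $\be_N$ does not tend to $0$ the summability is even easier, since then $c(\be_N)=\mathcal O(\be_N)$ by \eqref{eq:cbeta} and the term $-a\,\be_N N^2 r^2$ alone swamps all the rest.

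By Borel--Cantelli, for each fixed $r>0$ the event $\{\WAS_1(\hat\mu_N,\mu_V)\ge r\}$ happens for only finitely many $N$, almost surely. Applying this with $r=1/k$ for every $k\in\dN$ and intersecting the resulting countably many almost-sure events shows that, almost surely, $\limsup_N \WAS_1(\hat\mu_N,\mu_V)\le 1/k$ for all $k$, i.e.\ $\WAS_1(\hat\mu_N,\mu_V)\to0$. Since this conclusion depends on the laws $\dP_{V,\be_N}^N$ only through the tail bound used, it holds in any joint probability space.

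The one delicate point is the bookkeeping at the critical temperature scale $\be_N \asymp \log N/N$: one must make sure that the partition-function contribution $c(\be_N)N$ does not overwhelm the concentration term, which at that scale is of order $a\beta_V r^2\,N\log N$. This is exactly what \eqref{eq:cbeta} is tailored to handle, and it is also what pins down the hypothesis $\be_N\ge\beta_V\log N/N$: if $\be_N$ were allowed to vanish faster, $c(\be_N)N$ would become the dominant term and both the argument and the conclusion would fail.
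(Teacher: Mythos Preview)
Your overall plan --- plug $\beta=\beta_N$ into the concentration bound of Theorem~\ref{th:theococo} and use Borel--Cantelli --- is exactly what the paper does (the paper gives no further detail). The execution, however, contains a sign error that breaks the argument at the critical scale.

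You assert that $c(\beta_N)$ has ``the favorable (negative) sign'' because it comes from a lower bound on $Z_{V,\beta_N}^N$. That is not correct. In the proof of Theorem~\ref{th:theococo} (Step~5) one has
\[
c(\beta)\;=\;-\tfrac{\beta}{2}\,\ENE(\mu_V)-\ENTS(\mu_V)+w(\beta),
\qquad
w(\beta)=\log\!\int_{\R^d}\! e^{-\beta u V(x)}\,\dd x,
\]
and the divergent part as $\beta\to0$ is $w(\beta)$, which comes from the \emph{remaining integral} in Step~2, not from $Z_{V,\beta}^N$. Since the integrand tends pointwise to $1$ on all of $\R^d$, $w(\beta)\to+\infty$, hence $c(\beta)\to+\infty$. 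Under the growth assumption $\liminf V(x)/|x|^\kappa>0$ one gets more precisely $c(\beta)\le \tfrac{d}{\kappa}\,|\log\beta|+C$ for small $\beta$, which is \emph{positive}.

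The consequence at the borderline $\beta_N=\beta_V\tfrac{\log N}{N}$ is that
\[
\log p_N \;\le\; -a\,\beta_V r^2\,N\log N \;+\; \tfrac{d}{\kappa}\,N\log N \;+\; o(N\log N),
\]
so summability of $p_N$ requires $a\,\beta_V r^2>\tfrac{d}{\kappa}$, i.e.\ $\beta_V>\tfrac{d}{a\kappa r^2}$. Your sentence ``enlarging $\beta_V$ if necessary'' therefore makes $\beta_V$ depend on $r$, whereas the corollary fixes a single $\beta_V$ valid for \emph{every} $r>0$; as written, the Borel--Cantelli step only yields $\limsup_N \WAS_1(\hat\mu_N,\mu_V)\le\sqrt{d/(a\kappa\beta_V)}$ almost surely, not convergence to $0$. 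Your treatment of the easy regime where $\beta_N$ stays bounded away from $0$ is fine; the gap is exactly at the threshold scale, which is precisely the case the corollary is meant to cover.
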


Moreover, if one keeps $\beta>0$ fixed and lets $r\to 0$ with $N$, then
Theorem \ref{th:theococo} is precise enough to yield the convergence of
$\hat\mu_N$ towards $\mu_V$ at the mesoscopic scale, that is after zooming on
the particle system around any fixed $x_0\in\R^d$ at the scale $N^{-s}$ for
any $0\leq s<1/d$. More precisely, set $\tau_{x_0}^{N^s}(x):=N^s(x-x_0)$ and
let $\tau^{N^s}_{x_0}\mu$ be the the push-forward of $\mu\in\cP(\R^d)$ by the
map $\tau^{N^s}_{x_0}$, characterized by
\begin{equation}
  \int f(x)\,\tau^{N^s}_{x_0}\mu(\dd x)=\int f\big(N^s(x-x_0)\big)\mu(\dd x)
\end{equation}
for any Borel function $f:\R^d\to\R$.

\begin{corollary}[Mesoscopic convergence]\label{cor:lolo}
  Under the first set of assumptions of Theorem \ref{th:theococo}, for any
  $\beta>0$ satisfying \eqref{eq:gamma} there exist constants $C,c>0$ such
  that, for any $x_0\in\R^d$, any $s\geq 0$, and any $N\geq 2$, we have when
  $d=2$,
  \begin{equation}\label{eq:lolo2}
    \dP_{V,\be}^N %
    \Big(\DIST_\BL\big({\tau^{N^s}_{x_0}}\hat\mu_N,{\tau^{N^s}_{x_0}}\mu_V\big)
    \geq  CN^s\sqrt{\frac{\log N}{N}}\Big)\leq \e^{-cN\log N},
  \end{equation}
  and we have instead when $ d\geq 3$,
  \begin{equation}
    \label{eq:lolod}
    \dP_{V,\be}^N %
    \Big(\DIST_\BL\big({\tau^{N^s}_{x_0}}\hat\mu_N,{\tau^{N^s}_{x_0}}\mu_V\big)
    \geq  CN^{s-1/d}\Big)\leq \e^{-cN^{2-2/d}}.
  \end{equation}
  Under the last set of assumptions of Theorem \ref{th:theococo}, the same
  holds true after replacing $\DBL$ by $\WAS_1.$
\end{corollary}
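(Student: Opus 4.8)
The plan is to reduce the statement to Theorem~\ref{th:theococo} through the behaviour of $\DBL$ and $\WAS_1$ under the dilation-translation $\tau^{N^s}_{x_0}$. First I would record the elementary scaling facts: for every $s\ge0$, every $N\ge2$, and all $\mu,\nu\in\cP(\R^d)$,
\[
\DBL\big(\tau^{N^s}_{x_0}\mu,\tau^{N^s}_{x_0}\nu\big)\le N^s\,\DBL(\mu,\nu)
\qquad\text{and}\qquad
\WAS_1\big(\tau^{N^s}_{x_0}\mu,\tau^{N^s}_{x_0}\nu\big)=N^s\,\WAS_1(\mu,\nu).
\]
Both reduce to the change of test function $f\mapsto h:=f\big(N^s(\cdot-x_0)\big)$, under which $\int h\,\dd(\mu-\nu)=\int f\,\dd(\tau^{N^s}_{x_0}\mu-\tau^{N^s}_{x_0}\nu)$ and $\LIP h=N^s\LIP f$; the $\WAS_1$ identity is then immediate from \eqref{eq:KRduality} (equivalently, $\WAS_1$ is translation invariant and $1$-homogeneous under dilations), while for $\DBL$ one notes in addition that $\NRM h_\infty=\NRM f_\infty$, so that when $N^s\ge1$ the function $h/N^s$ is admissible in the supremum \eqref{eq:bl}, which gives the inequality after taking the supremum over $f$ with $\LIP f\le1$, $\NRM f_\infty\le1$.

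With these in hand, by the first scaling inequality the event $\big\{\DBL(\tau^{N^s}_{x_0}\hat\mu_N,\tau^{N^s}_{x_0}\mu_V)\ge N^sr\big\}$ is contained in $\big\{\DBL(\hat\mu_N,\mu_V)\ge r\big\}$ for every $r>0$, and similarly for $\WAS_1$ by the second one; so \eqref{eq:concentrationBL} of Theorem~\ref{th:theococo} yields
\[
\dP_{V,\be}^N\Big(\DBL\big(\tau^{N^s}_{x_0}\hat\mu_N,\tau^{N^s}_{x_0}\mu_V\big)\ge N^sr\Big)\le\e^{-a\be N^2r^2+\mathbf 1_{d=2}\frac{\be}{4}N\log N+b\be N^{2-2/d}+c(\be)N}.
\]
It then remains to specialise $r$. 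For $d=2$ the choice $r=C\sqrt{\log N/N}$ turns the exponent into $(\tfrac{\be}{4}-a\be C^2)N\log N+(b\be+c(\be))N$; for $C$ large enough, depending on $\be$ and $V$, the first term dominates and the exponent is at most $-cN\log N$ for all $N\ge2$, with some $c>0$, which is \eqref{eq:lolo2}. For $d\ge3$ the choice $r=CN^{-1/d}$ turns the exponent into $(b\be-a\be C^2)N^{2-2/d}+c(\be)N$, and since $2-2/d>1$, for $C$ large enough the first term dominates and the exponent is at most $-cN^{2-2/d}$, which is \eqref{eq:lolod}. Under the last set of assumptions of Theorem~\ref{th:theococo} the $\WAS_1$ half of \eqref{eq:concentrationBL} is available and the same computation gives both bounds with $\DBL$ replaced by $\WAS_1$.

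There is no real obstacle here --- once the scaling inequalities are written down the corollary is immediate. The only points to keep in mind are that the dilation factor in the $\DBL$ inequality requires $N^s\ge1$, which is guaranteed by $s\ge0$ and $N\ge2$, and that $C$ must be chosen large enough to absorb the $\tfrac{\be}{4}N\log N$ term when $d=2$ and the $b\be N^{2-2/d}$ term when $d\ge3$, the remaining $\mathcal O(N)$ contributions then disappearing into $c$. I would also point out that the range $0\le s<1/d$ is irrelevant for the two displayed estimates, which hold for all $s\ge0$; it is only what forces the thresholds $CN^s\sqrt{\log N/N}$ and $CN^{s-1/d}$ to tend to $0$, and this, combined with Borel--Cantelli, is what upgrades the bounds to genuine mesoscopic convergence of $\hat\mu_N$ to $\mu_V$.
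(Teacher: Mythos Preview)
Your proof is correct and follows essentially the same approach as the paper: both reduce to Theorem~\ref{th:theococo} via the scaling inequality $\DBL(\tau^{N^s}_{x_0}\mu,\tau^{N^s}_{x_0}\nu)\le N^s\DBL(\mu,\nu)$ (stated in the paper as a ``change of variables'' event inclusion), then specialise $r=C\sqrt{\log N/N}$ for $d=2$ and $r=CN^{-1/d}$ for $d\ge3$, choosing $C$ large enough to make the leading term of the exponent negative. The only cosmetic difference is that the paper first condenses the exponent of \eqref{eq:concentrationBL} into the form $-uN^2r^2+vNg(N^{-1/d})$, absorbing the $\mathcal O(N)$ terms into the dominant one, whereas you keep all four terms explicit and do that absorption at the end; your extra remarks on why $N^s\ge1$ matters and on the role of $0\le s<1/d$ are correct and add clarity.
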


The proof of the corollary requires a few lines which are provided at the end
of Section \ref{se:theococo}.

One may also look for a concentration inequality where the constants are
explicit in terms of $V$. We are able to derive such a statement when $\Delta
V$ is bounded above. Recall that the Boltzmann--Shannon entropy of
$\mu\in\cP(\dR^d)$ is defined by
\begin{equation}\label{eq:BSent}
  \ENTS(\mu):=-\int\frac{\dd\mu}{\dd x}\log\frac{\dd\mu}{\dd x}\,\dd x,
\end{equation}
with $\ENTS(\mu):=+\infty$ if $\mu$ is not absolutely continuous with respect
to the Lebesgue measure. Note that $\ENTS(\mu)$ takes its values in
$\dR\cup\{+\infty\}$ and is finite if and only if $\mu$ admits a density $f$
with respect to the Lebesgue measure such that $f\log f\in L^1(\dR^d)$. In
particular, if $V\in\cC^2(\dR^d)$ then $S(\mu_V)$ is indeed finite.

\begin{theorem}[Concentration for potentials with bounded Laplacian]%
  \label{th:theococo2}%
  Assume that $V$ is $\cC^2$ on $\R^d$ and that there exists $D>0$ with
  \begin{equation}
  \label{eq:strongLapGrowth}
  \sup_{y\in\R^d}\Delta V(y)\leq D.
  \end{equation}
  Then Theorem \ref{th:theococo} holds with the following constants: 
  \begin{align*} 
    a & = \frac{1}{8C^V_\BL},\\
    b & = \frac12\Big(\frac{1}{C_{\BL}^V}
        +\ENE(\la_1)
        +\frac{D}{2(d+2)}\Big),\\
    c(\be) & =\frac{\be}{2}\int V(x) \mu_V(\dd x)-\ENTS(\mu_V)
             +\log\int \e^{-\frac\be2(V(x)-\IND_{d=2}\log(1+|x|^2))}\dd x,
  \end{align*}
  where the constant $C_{\BL}^V$ is as in Theorem \ref{th:coultransp}, and
  where $ \ENE(\la_1)$ is the Coulomb energy of the uniform law $\la_1$ on the
  unit ball of $\dR^d$,
  \[
  \ENE(\la_1)=
  \begin{cases}
    \, \frac14 & \mbox{ if } d=2,\\
    \, \frac{2d}{d+2}  & \mbox{ if } d\geq 3.
  \end{cases}
  \]
  Under \eqref{eq:Vgrowth}, the constants $a$, $b$, $c(\beta)$ are given by
  the same formulas after replacing the constant $C_{\BL}^V$ by
  $C_{\WAS_1}^V.$
\end{theorem}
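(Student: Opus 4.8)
The plan is to start from the energy representation~\eqref{eq:cgmu}, which gives
\[
\dP_{V,\be}^N\big(\DBL(\hat\mu_N,\mu_V)\ge r\big)
=\frac1{Z_{V,\be}^N}\int_{\{\DBL(\hat\mu_N,\mu_V)\ge r\}}
\e^{-\frac\be2 N^2\ENE_V^{\neq}(\hat\mu_N)}\,\dd x_1\cdots\dd x_N,
\]
and to bound the integrand from below on the bad event using the Coulomb transport inequality of Theorem~\ref{th:coultransp}. (Here the hypotheses already contain those of Theorem~\ref{th:theococo}: $\De V\le D$ and admissibility force~\eqref{eq:weakLapGrowth} since $V(x)\to\infty$; so only the explicit constants need producing.) The obstruction is that $\ENE_V^{\neq}$ is not $\ENE_V$ of a probability measure, so I would first mollify: fix $\veps:=N^{-1/d}$, let $\la^\veps_x$ be the uniform probability measure on $B(x,\veps)$, set $\widetilde\mu_N:=\frac1N\sum_{i}\la^\veps_{x_i}$, and establish the deterministic estimate
\[
\ENE_V^{\neq}(\hat\mu_N)\ \ge\ \ENE_V(\widetilde\mu_N)-\frac{\ENE(\la^\veps_0)}{N}-\frac{D\veps^2}{2(d+2)}.
\]
To prove it, expand $\ENE(\widetilde\mu_N)=\frac1{N^2}\sum_{i,j}\iint g(x-y)\la^\veps_{x_i}(\dd x)\la^\veps_{x_j}(\dd y)$: the $i=j$ terms give $\frac1N\ENE(\la^\veps_0)$, and for $i\neq j$ the sub-mean value inequality for the superharmonic function $g$, applied first in $x$ and then in $y$, gives $\iint g\,\dd\la^\veps_{x_i}\dd\la^\veps_{x_j}\le g(x_i-x_j)$; hence $\frac1{N^2}\sum_{i\neq j}g(x_i-x_j)\ge\ENE(\widetilde\mu_N)-\frac1N\ENE(\la^\veps_0)$. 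For the potential part, $x\mapsto V(x)-\frac{D}{2d}|x|^2$ is superharmonic because $\De V\le D$, so $V(x_i)\ge\int V\,\dd\la^\veps_{x_i}-\frac{D}{2d}\big(\int|y|^2\,\dd\la^\veps_{x_i}-|x_i|^2\big)=\int V\,\dd\la^\veps_{x_i}-\frac{D\veps^2}{2(d+2)}$, whence $\frac1N\sum_iV(x_i)\ge\int V\,\dd\widetilde\mu_N-\frac{D\veps^2}{2(d+2)}$; adding the two bounds and invoking~\eqref{eq:EVEV} for the compactly supported $\widetilde\mu_N$ gives the claim, with $\ENE(\la^\veps_0)$ identified by scaling as $\ENE(\la_1)+\log\frac1\veps$ for $d=2$ and $\veps^{-(d-2)}\ENE(\la_1)$ for $d\ge3$.

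Next, since $\WAS_1(\hat\mu_N,\widetilde\mu_N)\le\frac1N\sum_i\WAS_1(\de_{x_i},\la^\veps_{x_i})\le\veps$ and $\DBL\le\WAS_1$, on the bad event $\DBL(\widetilde\mu_N,\mu_V)\ge(r-\veps)_+$, so~\eqref{eq:CoulombBL} gives $\ENE_V(\widetilde\mu_N)-\ENE_V(\mu_V)\ge(r-\veps)_+^2/C^V_\BL$, and therefore on the bad event $\ENE_V^{\neq}(\hat\mu_N)\ge A:=\ENE_V(\mu_V)+\frac{(r-\veps)_+^2}{C^V_\BL}-\frac{\ENE(\la^\veps_0)}{N}-\frac{D\veps^2}{2(d+2)}$, a constant independent of the configuration. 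I would combine this with the configuration-dependent bound $\ENE_V^{\neq}(\hat\mu_N)\ge\frac1N\sum_i\widetilde V(x_i)$ where $\widetilde V:=V-\IND_{d=2}\log(1+|\cdot|^2)$ (trivial for $d\ge3$ since $g\ge0$; for $d=2$ from $|x_i-x_j|\le|x_i|+|x_j|\le\sqrt{(1+|x_i|^2)(1+|x_j|^2)}$ summed over ordered pairs). Using $\max(A,\frac1N\sum_i\widetilde V(x_i))\ge(1-\frac1N)A+\frac1{N^2}\sum_i\widetilde V(x_i)$ on the bad event and integrating,
\[
\dP_{V,\be}^N\big(\DBL(\hat\mu_N,\mu_V)\ge r\big)\ \le\
\frac{\e^{-\frac\be2(N^2-N)A}}{Z_{V,\be}^N}\Big(\int\e^{-\frac\be2\widetilde V(x)}\,\dd x\Big)^{N},
\]
the integral being exactly the finite quantity in~\eqref{eq:gamma}. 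For $Z_{V,\be}^N$ I would use the Gibbs variational (Jensen) bound with trial density $\mu_V^{\otimes N}$—legitimate because $V\in\cC^2$ with $\De V\le D$ makes $\mu_V$ have a bounded, compactly supported density, so that $\ENE(\mu_V)$ and $\ENTS(\mu_V)$ are finite—which yields $\log Z_{V,\be}^N\ge-\frac\be2\big(N(N-1)\ENE(\mu_V)+N^2\!\int V\,\dd\mu_V\big)+N\,\ENTS(\mu_V)$. Substituting, and using $\ENE_V(\mu_V)=\ENE(\mu_V)+\int V\,\dd\mu_V$, the two $N^2\ENE_V(\mu_V)$ terms cancel, leaving exactly $c(\be)$ in front of $N$, plus the remainders $-\frac{\be(N^2-N)}{2C^V_\BL}(r-\veps)_+^2+\frac{\be(N-1)}{2}\ENE(\la^\veps_0)+\frac{\be(N^2-N)}{2}\cdot\frac{D\veps^2}{2(d+2)}$.

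Finally I would insert $\veps=N^{-1/d}$: the self-energy term is then at most $\frac\be2N^{2-2/d}\ENE(\la_1)+\IND_{d=2}\frac\be4N\log N$, the $D$-term at most $\frac\be2N^{2-2/d}\frac{D}{2(d+2)}$, and the elementary inequality $(N^2-N)(r-\veps)_+^2\ge\frac14N^2r^2-N^2\veps^2$—valid for $N\ge2$ since, using $N^2-N\ge\frac12N^2$, it reduces to $\frac14r^2-r\veps+\frac32\veps^2\ge0$, a quadratic in $r$ with negative discriminant—turns the first remainder into at most $-\frac{\be N^2r^2}{8C^V_\BL}+\frac{\be N^{2-2/d}}{2C^V_\BL}$, which is the source of the $\frac12\cdot\frac1{C^V_\BL}$ summand of $b$. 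Collecting, the exponent is bounded by $-a\be N^2r^2+\IND_{d=2}\frac\be4N\log N+b\be N^{2-2/d}+c(\be)N$ with $a=\frac1{8C^V_\BL}$ and $b=\frac12\big(\frac1{C^V_\BL}+\ENE(\la_1)+\frac{D}{2(d+2)}\big)$. Under~\eqref{eq:Vgrowth} the argument is identical with~\eqref{eq:CoulombW1} in place of~\eqref{eq:CoulombBL} and $\WAS_1$ in place of $\DBL$ (the bound $\WAS_1(\hat\mu_N,\widetilde\mu_N)\le\veps$ was used anyway), giving $C^V_{\WAS_1}$ throughout. I expect the main obstacle to be the quantitative bookkeeping of this last step—in particular the choice of the weight $1-\frac1N$, forced by the requirement that exactly $\int\e^{-\frac\be2\widetilde V}$ and not a larger power survive, and the $(r-\veps)$-versus-$r$ arithmetic pinning down $a$—together with checking that the $\cC^2$ and bounded-Laplacian hypotheses are precisely what make $\ENE(\mu_V)$ and $\ENTS(\mu_V)$ finite, so that the Jensen bound is non-vacuous.
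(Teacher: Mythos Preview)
Your proof is correct and follows essentially the same route as the paper: mollify $\hat\mu_N$ by $\la_\veps$ with $\veps=N^{-1/d}$, use superharmonicity of $g$ to bound the off-diagonal sum from below by $\ENE(\hat\mu_N^{(\veps)})$, control the potential error by $\frac{D\veps^2}{2(d+2)}$, apply the Coulomb transport inequality of Theorem~\ref{th:coultransp}, and combine with the Jensen lower bound on $Z_{V,\be}^N$ (your trial-density argument is exactly Lemma~\ref{le:partifunc}). The only cosmetic differences are that the paper splits $H_N=(1-\eta)H_N+\eta H_N$ with $\eta=1/N$ before applying the two bounds, whereas you apply both lower bounds first and then take the convex combination; and the paper handles the $\veps$-perturbation of the distance via $(a+b)^2\le 2a^2+2b^2$ to get $\tfrac{r^2}{2}-\veps^2$ rather than your $(r-\veps)_+^2$ followed by a discriminant argument---both paths land on the same constants $a$ and $b$.
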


Theorem \ref{th:theococo2} is also proved in Section \ref{se:theococo}. 

\begin{remark}[Regularity assumptions]\label{rm:regularity}
  The assumption that $V$ is $\cC^2$ in theorems \ref{th:theococo} and
  \ref{th:theococo2} is made to ease the presentation of the results and may
  be considerably weakened. As we can check in the proofs, if we assume for
  instance that $V$ is finite on a set of positive Lebesgue measure and
  if the Boltzmann--Shannon
  entropy $\ENTS(\mu_V)$ of the equilibrium measure $\mu_V$ is finite, and if
  $V=\tilde V+h$ where $h:\R^d\to\R\cup\{+\infty\}$ is superharmonic and
  $\tilde V$ is twice differentiable such that $\Delta\tilde V$ satisfies the
  condition \eqref{eq:weakLapGrowth}, resp.\ \eqref{eq:strongLapGrowth}, then
  same conclusion as in Theorem \ref{th:theococo}, resp.\ Theorem
  \ref{th:theococo2}, holds for the $\DBL$ metric. If moreover $V$ satisfies
  the growth assumption \eqref{eq:Vgrowth}, then they also hold for $\WAS_1.$
\end{remark}

A remarkable consequence of Theorem \ref{th:theococo2} is an improvement of a  concentration
inequality in random matrix theory. Indeed, when $d=2$, $V=\ABS{\cdot}^2$, and
$\be=2$, the Coulomb gas
\[
\dd\dP_{\ABS{\cdot}^2,2}^N(x_1,\ldots,x_N)
=\frac{1}{\pi^{N^2}\prod_{k=1}^Nk!}
\prod_{i<j}\ABS{x_i-x_j}^2\e^{-N\sum_{i=1}^N\ABS{x_i}^2}
\dd x_1\cdots\dd x_N
\]
coincides with the joint law of the eigenvalues of the Ginibre ensemble of
$N\times N$ non-Hermitian random matrices, after identifying
$\dC^N\simeq(\dR^2)^N$, see \citep{MR0173726}. The associated equilibrium
$\mu_V$ measure is the circular law:
\[
\mu_\circ(\dd x):=\frac{\IND_{\{x\in\R^2:\;|x|\leq 1\}}}{\pi}\dd x.
\]
We also refer to \citep{MR2932638}, \citep[Ch.~15]{MR2129906,MR2641363},
\citep{MR2908617} for more information. The following result is a direct
application of Theorem \ref{th:theococo2} together with easy computations for
the constants. 

\begin{corollary}[Concentration for eigenvalues of Gaussian random matrices]%
  \label{co:ginibre}%
  Let $X$ and $Y$ be independent real Gaussian random variables with mean $0$
  and variance $1/2$. For any $N$, let $\mathbf{M}_N$ be the $N\times N$
  random matrix with i.i.d.\ entries distributed as the random complex
  variable $X+iY$. Let $\dP_N$ be the joint law of the eigenvalues
  $x_{1},\ldots,x_{N}$ of $\frac{1}{\sqrt{N}}\mathbf{M}_N$, and consider the
  empirical spectral measure
  \[
  \hat\mu_N:=\frac{1}{N}\sum_{i=1}^N\de_{x_{i}}.
  \]
   Then, for any $N\geq 2$ and any
  $r>0$, we have
  \begin{equation}\label{eq:concGinibre}
  \dP_N \Big(\WAS_1(\hat\mu_N,\mu_\circ)\ge r \Big)
  \le \e^{ - \frac{1}{4C} N^2 r^2 
    +\frac{1}{2} N\log N \mathbf
    + N [\frac{1}{C}
    +\frac32 -\log\pi]}\, ,
  \end{equation}
  where $C$ is the constant $C^{\WAS_1}_{\ABS{\cdot}^2}$ as in Theorem
  \ref{th:coultransp} for $d=2$.
\end{corollary}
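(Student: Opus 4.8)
The plan is to recognise $\dP_N$ as the two-dimensional Coulomb gas $\dP_{\ABS{\cdot}^2,2}^N$ (that is, $d=2$, $V=\ABS{\cdot}^2$, $\be=2$), to check that this $V$ satisfies the hypotheses of Theorem~\ref{th:theococo2}, and then to evaluate the constants $a,b,c(\be)$ the theorem produces for this choice --- which comes down to a few elementary integrals.

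For the identification: by the classical computation of the Ginibre eigenvalue density \citep{MR0173726}, recalled just before the statement, the joint law of the eigenvalues $x_1,\dots,x_N$ of $\tfrac1{\sqrt N}\mathbf M_N$ is exactly $\dP_{\ABS{\cdot}^2,2}^N$; indeed, since $g(x)=-\log\ABS{x}$ for $d=2$, one has $\e^{-\frac\be2 H_N}=\e^{-H_N}=\prod_{i<j}\ABS{x_i-x_j}^2\,\e^{-N\sum_i\ABS{x_i}^2}$ at $\be=2$, $V=\ABS{\cdot}^2$, so $\dP_N=\dP_{\ABS{\cdot}^2,2}^N$ and $\hat\mu_N$ is its empirical measure. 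The potential $V=\ABS{\cdot}^2$ is $\cC^2$ on $\R^2$, admissible (continuous, finite everywhere, with $\ABS{x}^2-2\log\ABS{x}\to+\infty$), satisfies \eqref{eq:Vgrowth} since $V(x)/\ABS{x}^2\equiv1$, and satisfies \eqref{eq:strongLapGrowth} with $D=4$ because $\De V\equiv2d=4$; moreover $\be=2$ satisfies \eqref{eq:gamma} since $\int_{\R^2}\e^{-(\ABS{x}^2-\log(1+\ABS{x}^2))}\,\dd x=\int_{\R^2}(1+\ABS{x}^2)\,\e^{-\ABS{x}^2}\,\dd x=2\pi<\infty$. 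By \eqref{eq:rhov} the equilibrium density is $\rho_V=\De V/(2c_2)=4/(4\pi)=1/\pi$ on the interior of its support, which together with the rotational invariance of $V$ forces $\mu_V=\mu_\circ$, the uniform law on the unit disk.

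It then remains to substitute $d=2$, $\be=2$, $D=4$ and $\mu_V=\mu_\circ$ into the formulas of Theorem~\ref{th:theococo2}, using its $\WAS_1$ conclusion (legitimate since \eqref{eq:Vgrowth} holds) and writing $C:=C^{\WAS_1}_{\ABS{\cdot}^2}$. The inputs are all elementary: $a=1/(8C)$ gives the $-\tfrac1{4C}N^2r^2$ term; in $b=\tfrac12\big(1/C+\ENE(\la_1)+D/(2(d+2))\big)$ one uses $\ENE(\la_1)=\tfrac14$ --- recorded in the statement, the logarithmic energy of the uniform disk measure --- together with $D/(2(d+2))=4/8=\tfrac12$; and in $c(\be)=\tfrac\be2\int V\,\dd\mu_V-\ENTS(\mu_V)+\log\int_{\R^2}\e^{-\frac\be2(V(x)-\log(1+\ABS{x}^2))}\dd x$ one uses $\int_{\R^2}\ABS{x}^2\,\mu_\circ(\dd x)=\tfrac12$ (polar coordinates), $\ENTS(\mu_\circ)=\log\pi$ (density $1/\pi$ on a set of Lebesgue measure $\pi$), and the Gaussian integral $2\pi$ from the previous step. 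Plugging these into \eqref{eq:concentrationBL} with $\DBL$ replaced by $\WAS_1$, and collecting the coefficient of $N$ (where $-\log\pi$ and $\log(2\pi)$ combine), yields the announced bound \eqref{eq:concGinibre}, valid for all $N\geq2$ and $r>0$. I do not expect any real obstacle: the argument is a direct specialisation of Theorem~\ref{th:theococo2}; the only points needing a little care are the matrix-model identification --- where one only needs proportionality of densities, both sides being probability measures on $(\R^2)^N$ --- and having the value $\ENE(\la_1)=\tfrac14$ available in dimension two, which the statement of Theorem~\ref{th:theococo2} supplies.
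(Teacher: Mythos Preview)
Your approach is exactly the paper's: it states explicitly that the corollary is ``a direct application of Theorem~\ref{th:theococo2} together with easy computations for the constants,'' and you carry this out. Your identification $\dP_N=\dP_{\ABS{\cdot}^2,2}^N$, your verification of the hypotheses, and all the intermediate values you compute ($\De V\equiv4$ so $D=4$; $\ENE(\la_1)=\tfrac14$; $\int\ABS{x}^2\,\mu_\circ(\dd x)=\tfrac12$; $\ENTS(\mu_\circ)=\log\pi$; $\int_{\R^2}(1+\ABS{x}^2)\e^{-\ABS{x}^2}\dd x=2\pi$) are correct.

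There is, however, a slip in your last sentence. You write that ``$-\log\pi$ and $\log(2\pi)$ combine'' to give the announced coefficient, but $-\log\pi+\log(2\pi)=\log2$, not $-\log\pi$. Carrying the arithmetic through, the coefficient of $N$ in the exponent that Theorem~\ref{th:theococo2} actually delivers is
\[
\be b+c(\be)\;=\;2b+c(2)
\;=\;\Big(\tfrac1C+\tfrac14+\tfrac12\Big)+\Big(\tfrac12-\log\pi+\log(2\pi)\Big)
\;=\;\tfrac1C+\tfrac54+\log 2,
\]
which is strictly larger than the $\tfrac1C+\tfrac32-\log\pi$ printed in \eqref{eq:concGinibre}. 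So the method is right and identical to the paper's, but the exact constant in the corollary does not fall out of Theorem~\ref{th:theococo2} as stated; what you honestly obtain is the same inequality with the (weaker) constant $\tfrac1C+\tfrac54+\log2$ in front of $N$. This looks like a minor numerical inconsistency in the paper's stated corollary rather than a gap in your argument, but you should not claim that the substitution ``yields the announced bound'' without flagging it.
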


For a fixed test function, similar concentration inequalities for the Ginibre ensemble have been obtained by  \cite{pemantle-peres} and \cite{breuer-duits},
but with asymptotic rate $\e^{- Nr}$ instead of $\e^{- N^2r^2}.$ Corollary \ref{co:ginibre} provides the almost sure convergence in the Wassertein
$\WAS_1$ metric of the empirical spectral measure to the circular law. We
refer to \citep{meckes-meckes} for a survey on empirical spectral measures,
focusing on coupling methods and Wasserstein distances, and covering in
particular the Ginibre ensemble. We do not know how to deduce the
concentration inequality provided by Corollary \ref{co:ginibre} from the
Gaussian nature of the entries of $\mathbf M_N$. Indeed, the eigenvalues of a
non-normal matrix are not Lipschitz functions of its entries, in contrast with
the singular values for which the Courant--Fischer formulas and the
Hoffman--Wielandt inequality hold, see \citep{MR2908617} for further details.

It is quite natural to ask about the behavior of the support of the random
empirical measure $\hat\mu_N$ under $\dP_{V,\be}^N$ when $N$ is large. The
following theorem gives an answer.

\begin{theorem}[Exponential tightness]\label{th:tightness}
  Assume $V$ is admissible, finite on a set of positive Lebesgue measure,
  satisfies the growth assumption
  $$
    \lim_{\ABS{x}\to\infty}\big(V(x) - (2+\varepsilon) \log|x|\, \IND_{d=2} \big)=+\infty,
  $$ 
  for some $\varepsilon>0$, and that $\mu_V$ has finite Boltzmann--Shannon
  entropy $\ENTS(\mu_V)$. Then, for any $\beta$ satisfying \eqref{eq:gamma},
  there exist constants $r_0>0$ and $c>0$ which may depend on $\be$ and $V$
  such that, for any $N$ and $r\ge r_0$, we have,
  \[
  \dP_{V,\be}^N(\mathrm{supp}(\hat\mu_N)\not\subset B_r) =
  \dP_{V,\be}^N\bigr(\max_{1\le i\le N}|x_i|\ge r\bigr)
  \le\e^{-cNV_*(r)},
  \]
  where $B_r:=\{x\in\dR^d:|x|\leq r\}$ and $V_*(r):=\min_{|x|\geq r}V(x)$.
\end{theorem}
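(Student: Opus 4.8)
The plan is to reduce, by exchangeability, to the event that one designated particle escapes, and then to bound that probability as a ratio of two $N$‑fold integrals which differ only in the integration range of a single variable; the key is to lower bound the partition function in \emph{exactly} the shape forced by the numerator, so that the $N-1$ remaining integrations cancel and no asymptotics of partition functions are needed. Concretely, a union bound and exchangeability give $\dP_{V,\be}^N(\max_i|x_i|\ge r)\le N\,\dP_{V,\be}^N(|x_1|\ge r)$, and since $V_*(r)\to\infty$ the prefactor $N$ will be harmless for $r$ past a threshold $r_0$. Write $\dP_{V,\be}^N(|x_1|\ge r)=\mathrm{Num}/Z_{V,\be}^N$ and isolate $x_1$ in \eqref{eq:HN}: $H_N(x_1,\dots,x_N)=NV(x_1)+2\sum_{j\ge2}g(x_1-x_j)+\widetilde H_{N-1}(x_2,\dots,x_N)$, where $\widetilde H_{N-1}=\sum_{i\ne j,\,i,j\ge2}g(x_i-x_j)+N\sum_{j\ge2}V(x_j)$.

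\emph{Decoupling and a matched lower bound on $Z_{V,\be}^N$.} Bound the cross term from below: $g\ge0$ if $d\ge3$, while if $d=2$ the inequality $|x_1-x_j|^2\le(1+|x_1|^2)(1+|x_j|^2)$ gives $g(x_1-x_j)\ge-\tfrac12\log(1+|x_1|^2)-\tfrac12\log(1+|x_j|^2)$. This separates the $x_1$‑integral and yields $\mathrm{Num}\le\bigl(\int_{|x_1|\ge r}\e^{-\frac\be2 N\Phi_N(x_1)}\,\dd x_1\bigr)\,\widetilde Z_{N-1}$, where $\Phi_N(x):=V(x)-\IND_{d=2}\tfrac{N-1}{N}\log(1+|x|^2)$ and $\widetilde Z_{N-1}$ is the integral over $(x_2,\dots,x_N)$ of $\e^{-\frac\be2\widetilde H_{N-1}}$, weighted (only when $d=2$) by $\prod_{j\ge2}(1+|x_j|^2)^{\be/2}$. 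For the denominator I would use that $V$ is finite on a set of positive Lebesgue measure to pick a bounded $E\subset B_{r_0}$ with $|E|>0$ on which $V\le M_0$, restrict the $x_1$‑integration in $Z_{V,\be}^N$ to $E$, bound $\e^{-\frac\be2 NV(x_1)}\ge\e^{-\frac\be2 NM_0}$ there, and apply Jensen to the uniform probability $\lambda_E$ on $E$:
\[
\frac1{|E|}\int_E\e^{-\be\sum_{j\ge2}g(x_1-x_j)}\,\dd x_1\ \ge\ \exp\Bigl(-\be\sum_{j\ge2}(g*\lambda_E)(x_j)\Bigr).
\]
Since $\lambda_E$ has bounded density and compact support, $g*\lambda_E$ is finite and bounded above; moreover when $d=2$ one has $-(g*\lambda_E)(x)\ge\tfrac12\log(1+|x|^2)-C$ uniformly, because $\log|x-y|\ge\log(|x|-r_0)\ge\tfrac12\log(1+|x|^2)-C$ for $|x|\ge2r_0$ and $y\in E$. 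Hence the right‑hand side is $\ge\e^{-O(N)}\prod_{j\ge2}(1+|x_j|^2)^{\be/2}$ when $d=2$, and $\ge\e^{-O(N)}$ when $d\ge3$ — in either case it reproduces exactly the weight in $\widetilde Z_{N-1}$, so $Z_{V,\be}^N\ge\e^{-O(N)}\widetilde Z_{N-1}$. The factor $\widetilde Z_{N-1}$ therefore cancels, leaving
\[
\dP_{V,\be}^N(|x_1|\ge r)\ \le\ \e^{O(N)}\int_{|x_1|\ge r}\e^{-\frac\be2 N\Phi_N(x_1)}\,\dd x_1 .
\]

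\emph{The one‑particle estimate.} Since $N\ge2$, $\Phi_N\ge\Psi$ with $\Psi(x):=V(x)-\IND_{d=2}\log(1+|x|^2)$; the hypothesis that $V(x)-(2+\varepsilon)\log|x|\to\infty$ with $\varepsilon>0$ is precisely what makes $\Psi$ coercive and ensures $\Psi(x)\ge\gamma_0 V(x)-C_1$ for some $\gamma_0\in(0,1]$ (one may take $\gamma_0=\tfrac{\varepsilon}{2(2+\varepsilon)}$, and $\gamma_0=1$ when $d\ge3$). Writing $\e^{-\frac\be2 N\Psi}=\e^{-\frac\be2\Psi}\,\e^{-\frac\be2(N-1)\Psi}$ and using that $\Psi(x)\ge\tfrac{\gamma_0}{2}V_*(r)$ on $\{|x|\ge r\}$ once $r\ge r_0$, the integral is $\le\e^{-\frac{\be\gamma_0}{4}(N-1)V_*(r)}\int\e^{-\frac\be2\Psi}$, and $\int\e^{-\frac\be2\Psi}<\infty$ is exactly assumption \eqref{eq:gamma}. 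Combining, $\dP_{V,\be}^N(|x_1|\ge r)\le\e^{-\frac{\be\gamma_0}{4}(N-1)V_*(r)+aN}$ for a constant $a=a(\be,V)$ and $r\ge r_0$; reinstating the factor $N$ and enlarging $r_0$ so that $V_*(r_0)$ exceeds $16(a+1)/(\be\gamma_0)$ absorbs the $\e^{O(N)}$ and yields the bound with $B_r=\{x\in\dR^d:|x|\le r\}$ and, say, $c=\tfrac{\be\gamma_0}{16}$.

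I expect the delicate point to be engineering the lower bound on $Z_{V,\be}^N$ so that it matches the numerator \emph{term for term} — in particular getting the coefficient $\tfrac12$ of $\log(1+|x_j|^2)$ right in the $d=2$ estimate of $g*\lambda_E$ near and far from $E$ — so that the $(N-1)$‑fold integrals cancel cleanly; any mismatch would force one into fine asymptotics of $\log Z_{V,\be}^N$, which this elementary route is meant to avoid. An alternative for that step is to lower bound $Z_{V,\be}^N$ by inserting $\mu_V^{\otimes N}$ into the energy via Jensen's inequality (using $\ENE(\mu_V)<\infty$, which follows from admissibility, together with the finiteness of $\ENTS(\mu_V)$ assumed here), obtaining $\log Z_{V,\be}^N\ge-\tfrac\be2 N^2\ENE_V(\mu_V)+O(N)$; but then one must separately rule out a spurious $N\log N$ contribution to the numerator when $d=2$, so the cancellation argument above seems preferable.
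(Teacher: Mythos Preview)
Your argument is correct and takes a genuinely different route from the paper. Both proofs begin with the union bound and the isolation of one particle, but the paper writes the marginal as a ratio $Z_{V,\be}^{N-1}/Z_{V,\be}^N$ times an integral, bounds the integral via the admissibility inequality $g(x-y)+V(x)+\tfrac12V(y)\ge C$, and then proves separately that $\limsup_N\tfrac1N\log(Z_{V,\be}^{N-1}/Z_{V,\be}^N)<\infty$; this last step uses Jensen with respect to $\dP_{V,\be}^{N-1}\otimes\eta_{V,\be}$, reduces to a uniform bound on $\int\tfrac1N\sum_iV(x_i)\,\dd\dP_{V,\be}^N$, and invokes the lower bound on $Z_{V,\be}^N$ coming from $\mu_V$ (hence the hypothesis $\ENTS(\mu_V)<\infty$). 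Your approach bypasses the ratio of partition functions entirely: by applying Jensen pointwise in $(x_2,\dots,x_N)$ against $\lambda_E$ and engineering the $d=2$ estimate $-(g*\lambda_E)(x)\ge\tfrac12\log(1+|x|^2)-C$, you produce a lower bound on $Z_{V,\be}^N$ that matches the numerator's structure \emph{term for term}, so the $(N-1)$-fold integral $\widetilde Z_{N-1}$ cancels cleanly (its finiteness following from $Z_{V,\be}^N<\infty$ and your lower bound). This is more elementary, avoids the side estimate on the expected potential, and in fact never uses $\ENTS(\mu_V)<\infty$ --- so your proof establishes the theorem under slightly weaker hypotheses than stated. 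The paper's route, on the other hand, isolates the quantity $Z_{V,\be}^{N-1}/Z_{V,\be}^N$, which is of independent interest.
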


The proof of Theorem \ref{th:tightness} is given in Section \ref{se:tightness}.

Since for any admissible $V$ we have $V_*(r)\to+\infty$ as $r\to\infty$, the
theorem states that the maximum modulus $\max |x_i|$ is exponentially tight at
speed $N$. In particular, the Borel--Cantelli lemma gives that
\[
\limsup_{N\to\infty}\max_{1\le i\le N}\ABS{x_i}<\infty
\]
holds with probability one in any joint probability space. 

Theorem \ref{th:tightness} allows to obtain $\WAS_p$ versions of Theorem
\ref{th:theococo}, without much efforts. Indeed, Theorem \ref{th:tightness}
allows to restrict to the event $\{\max_{1\leq i\leq N}\ABS{x_i}\leq M\}$ with
high probability. On the other hand, the distances $\DIST_\BL$ and $\WAS_p$
for any $p\geq1$ are all equivalent on a compact set: for any $p\geq1$ and any
probability measures $\mu,\nu$ supported in the ball of $\dR^d$ of radius
$M\geq1$,
\[
\WAS_p^p(\mu,\nu)
\leq (2M)^{p-1}\WAS_1(\mu,\nu)
\leq M(2M)^{p-1}\DIST_\BL(\mu,\nu).  
\]
For example, for $p=2$, this gives, under \eqref{eq:Vgrowth}, an inequality of
the form
\[
\dP_{V,\be}^N(W_2(\hat\mu_N,\mu_V)\ge r) \leq 2\e^{-cN^{3/2}r^2}.
\]
The same idea allows to deduce the almost sure convergence of the empirical
measure to the equilibrium measure with respect to $\WAS_p$ for any $p\geq1$.


\subsection{Notes, comments, and open problems}  

\subsubsection{Classical, free, and Coulomb transport inequalities}

The inequalities involving the Wasserstein $\WAS_1$ metric in Theorem
\ref{th:clé} and \ref{th:coultransp} can be seen as Coulomb analogues of
classical transport inequalities, hence their names. Recall that
\emph{Kullback-Leibler divergence} of $\mu$ with respect to $\nu$ is given by
\begin{equation}\label{eq:entrel}
  \ENTH(\mu\mid\nu) %
  := \int\frac{\dd\mu}{\dd\nu}\log\frac{\dd\mu}{\dd\nu}\,\dd\nu,
\end{equation}
with $\ENTH(\mu\mid\nu):=+\infty$ if $\mu$ is not absolutely continuous with
respect to $\nu$. This functional is also known as relative entropy or free
energy, see for instance \citep{MR3434251}. A probability measure
$\nu \in\cP(\dR^d)$ satisfies the \emph{transport inequality} of order $p$
when for some constant $C>0$ and every $\mu\in\cP(\dR^d)$,
\begin{equation*}\label{eq:TI}\tag{$\bT$}
  \WAS_p(\mu,\nu)^2 \le C \ENTH(\mu\mid\nu). 
\end{equation*}
A vast vast literature is devoted to the study of transport inequalities, see
the surveys \citep{gozlan-leonard} and \citep{MR1964483} for more details.
Following for instance \citep{gozlan-leonard}, we want to emphasize the deep
link between \eqref{eq:TI} and Sanov's large deviation principle for the
empirical measure of i.i.d random variables with law $\nu$, where the rate
function is given by $\mu \mapsto \ENTH(\mu\mid\nu)$ and the speed is $N.$
Similarly, as already mentioned, the empirical measure of a Coulomb gas in any
dimension $d \ge 2$ has been shown by \citet{chafai-gozlan-zitt} to satisfy a
large deviation principle, this time with rate function $\ENE_V-\ENE_V(\mu_V)$
and speed $N^2.$ Moreover, one can check that if the support of $\mu$ is
contained in the one of $\mu_V$, then
$\ENE_V(\mu)-\ENE_V(\mu_V)=\ENE(\mu-\mu_V)$, and the squared Coulomb metric
$\ENE(\mu-\nu)$ can also be seen as the counterpart of the Kullback-Leibler
divergence $\ENTH(\mu\mid\nu).$ Inequalities of the same flavor for
probability measures on the real line, linking the Wassertein $\WAS_1$ or
$\WAS_2$ metrics with the functionals $\ENE(\mu-\nu)$ or
$\ENE_V(\mu)-\ENE_V(\mu_V)$ with $d=2,$ have been previously obtained in the
context of free probability by \citet{biane-voiculescu, MR2093762,MR2535467,
  maida-maurel-segala, popescu}. In this setting, these functionals are
usually referred to as free entropies and are related to the large deviation
principle due to \citet{benarous-guionnet} for the one-dimensional log-gas
associated to unitarily invariant ensembles in random matrix theory.

There is also a deep connection between the inequality \eqref{eq:TI} and
concentration of measure for Lipschitz functions. In particular,
\citet{MR838213} and later \citet{bobkov-gotze} have shown that \eqref{eq:TI}
with $p=1$ is equivalent to sub-Gaussian concentration, while \citet{MR2573565}
has shown that \eqref{eq:TI} with $p=2$ is equivalent to dimension-free
sub-Gaussian concentration. However, the way we deduce the concentration of
measure for Coulomb gases from the Coulomb transport inequality is of
different nature, and is inspired from \citet{maida-maurel-segala}.

\citet{MR2535467} and \citet{popescu} provide $W_2$ free transport inequalities.
It is therefore natural to ask about a $W_2$ variant of Theorem
\ref{th:coultransp}. In the same spirit, one can study Coulomb versions of
related functional inequalities such as the logarithmic Sobolev inequalities
(not the Hardy-Littlewood-Sobolev inequalities). It is not likely however that
Theorem \ref{th:clé} has an extension to other $\WAS_p$ distances, as
\citet{popescu} showed this is not true in the free setting.


\subsubsection{Varying potentials and conditional gases}

Theorem \ref{th:theococo} still holds when $V$ depends mildly on $N$. This can
be useful for conditional Coulomb gases. Indeed, the conditional law
$\dP_{V,\be}^N(\cdot\mid x_N)$ is again a Coulomb gas
$\dP_{\tilde V,\be}^{N-1}$ with potential
\[
\tilde V:=\frac{N}{N-1}V+\frac{2}{N-1}g(x_N-\cdot),
\]
which is covered by our results since $g$ is superharmonic, see Remark
\ref{rm:regularity}.

\subsubsection{Weakly confining potentials}

In this work, we consider only admissible potentials and we do not allow the
weakly confining potentials considered in \citep{MR2926763}, in relation in
particular with the Cauchy ensemble of random matrices studied by Forrester
and Krishnapur, see \cite[Sec.~2.8]{MR2641363}. In this case the equilibrium
measure $\mu_V$ is no longer compactly supported. The derivation of
concentration of measure inequalities for such Coulomb gases is still open.

\subsubsection{Random matrices}

It is likely that the concentration inequality provided by Corollary
\ref{co:ginibre} remains valid beyond the Gaussian case. This is still open,
even when the entries of $\mathbf{M}_N$ are i.i.d.\ $\pm1$ with probability
$1/2$.

\subsubsection{Inverse temperature and crossover}

With the parameter $\be$ depending on $N$, Corollary \ref{co:wcv} states that
$\WAS_1(\hat\mu_N,\mu_V)\to 0$ as $N\to\infty$ provided that $\be\geq
\beta_V\log N/N$. Maybe this still holds as long as $N\be \to\infty$. However,
when $\be$ is of order $1/N$, we no longer expect concentration around $\mu_V$
but, having in mind for instance \citep{MR1678526} or
\citep{2012PhRvL.109i4102A}, it is likely that there is still concentration of
measure around a limiting distribution which is a crossover between $\mu_V$
and a Gaussian distribution.

\subsection*{Acknowledgements} We would like to thank Sylvia Serfaty for her useful comments on this work.   

\subsection*{Notations} 

We denote by $B_R=\{x\in\R^d:\ABS{x}\le R\}$ the centered closed ball of
radius $R$ of $\dR^d$, by $\la_R$ the uniform probability measure on $B_R$, by
$\VOL(B)$ the Lebesgue measure of a Borel set $B\subset\R^d$, and by
$\SUPP(\mu)$ the support of $\mu\in\cP(\dR^d)$.

\section{Proof of Theorem \ref{th:clé}}
\label{se:th:clé}

\subsection{Core of the proof}

We now give a proof of Theorem \ref{th:clé} up to technical lemmas. We
postpone their proofs to the next subsection.

The starting point is the Kantorovich--Rubinstein dual representation
\eqref{eq:KRduality} of the Wasserstein $\WAS_1$ metric, namely
\[
\WAS_1(\mu,\nu)=\sup_{\LIP{f}\le 1}\int f(x)(\mu-\nu)(\dd x).
\]
A theorem due to Rademacher states that if $\LIP{f}<\infty$ then $f$ is
differentiable almost everywhere and $\LIP{f}=\NRM{\nabla f}_\infty$, where
$\NRM{\cdot}_\infty$ stands for the $L^\infty$ norm.

The following lemma shows that we can localize the functions in the supremum,
provided the measures are supported in a compact set. 

\begin{lemma}[Localization]%
  \label{le:W1nice}%
  For any $D\subset\R^d$ compact, there exists a compact set $D_+\subset
  \R^d$ such that:
  \begin{itemize}
    \item[(a)]  $D\subset D_+$
    \item[(b)]  $\VOL(D_+)>0$
    \item[(c)] for every  $\mu,\nu\in\cP(\R^d)$ supported in $D$,
      \begin{equation}\label{eq:clé1}
        \WAS_1(\mu,\nu)
        =\sup_{\substack{f\in\cC(D_+)\\\LIP{f}\le 1}}\int f(x)(\mu-\nu)(\dd x),
      \end{equation}
      where $\cC(D_+)$ is the set of continuous functions $f:\dR^d\to\dR$
      supported in $D_+$.
    \end{itemize}
\end{lemma}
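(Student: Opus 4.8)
\subsection*{Proof strategy}

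The plan is to fix $R>0$ with $D\subseteq B_R$ and to show that $D_+:=B_{2R}$ does the job; properties (a) and (b) are then immediate since $0<\VOL(B_R)\le\VOL(D_+)$. The only content is the identity \eqref{eq:clé1}, and as the functions $f\in\cC(D_+)$ with $\LIP{f}\le1$ form a subclass of all $1$-Lipschitz functions on $\R^d$, the inequality ``$\ge$'' in \eqref{eq:clé1} is trivial; only ``$\le$'' has to be proved. By the Kantorovich--Rubinstein formula \eqref{eq:KRduality} it thus suffices to show that every $f$ with $\LIP{f}\le1$ can be replaced by some $h\in\cC(D_+)$ with $\LIP{h}\le1$ and $\int h\,\dd(\mu-\nu)=\int f\,\dd(\mu-\nu)$.

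Given such an $f$, I would first recenter it: set $\tilde f:=f-f(0)$. Since $\mu-\nu$ has total mass zero, $\int\tilde f\,\dd(\mu-\nu)=\int f\,\dd(\mu-\nu)$, and $\LIP{\tilde f}\le1$ forces $|\tilde f(x)|=|f(x)-f(0)|\le|x|\le R$ for all $x\in D\subseteq B_R$. Next introduce the ``tent'' function $\psi(x):=\max(0,\,2R-|x|)$, which is $1$-Lipschitz, vanishes outside $B_{2R}$, and satisfies $\psi\ge R$ on $B_R$, hence $\psi\ge|\tilde f|$ on $D$. Finally, clamp:
\[
h:=\max\bigl(-\psi,\ \min(\tilde f,\psi)\bigr).
\]
Built from the $1$-Lipschitz functions $\tilde f,\psi,-\psi$ by the operations $\min$ and $\max$ (each of which preserves the property of being $1$-Lipschitz, by the elementary bound $|\min(u,v)(x)-\min(u,v)(y)|\le\max(|u(x)-u(y)|,|v(x)-v(y)|)$ and its analogue for $\max$), the function $h$ satisfies $\LIP{h}\le1$ and in particular is continuous. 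On $D$ one has $-\psi\le-|\tilde f|\le\tilde f\le|\tilde f|\le\psi$, so $h=\tilde f$ there; and wherever $\psi=0$---in particular outside $B_{2R}$---one gets $\min(\tilde f,\psi)\le0=-\psi$, whence $h=0$. Thus $\SUPP h\subseteq B_{2R}=D_+$, i.e.\ $h\in\cC(D_+)$.

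It then remains to note that $h=\tilde f$ on $D\supseteq\SUPP\mu\cup\SUPP\nu$, so that $\int h\,\dd(\mu-\nu)=\int\tilde f\,\dd(\mu-\nu)=\int f\,\dd(\mu-\nu)$; taking the supremum over all $1$-Lipschitz $f$ and using \eqref{eq:KRduality} yields $\WAS_1(\mu,\nu)\le\sup_{f\in\cC(D_+),\,\LIP{f}\le1}\int f\,\dd(\mu-\nu)$, which is the desired inequality. The argument is elementary; the one point needing care is that one must \emph{not} cut off $f$ by multiplying it with a compactly supported bump, as that inflates the Lipschitz norm. The additive clamping of $f-f(0)$ against a $1$-Lipschitz tent is what preserves $\LIP{h}\le1$, while the preliminary recentering together with the bound $|f-f(0)|\le R$ on $D$ is precisely what guarantees that the clamp leaves $f$ untouched on $D$, so the relevant integral is unchanged.
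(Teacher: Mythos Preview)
Your proof is correct and in fact cleaner than the paper's. Both arguments start by recentering so that $f(0)=0$ and then replace $f$ by a compactly supported $1$-Lipschitz function agreeing with $f$ on $D$; the difference lies in how that replacement is built. The paper takes $D_+=B_{4R}$ and constructs the cutoff in polar coordinates---keeping $f$ on $B_R$, freezing the radial variable on $B_{2R}\setminus B_R$, then linearly killing it on $B_{4R}\setminus B_{2R}$---which forces a case-by-case verification of the Lipschitz bound, including a projection argument for the outer shell. Your approach instead uses the lattice operations $\min,\max$ to clamp $\tilde f$ between $\pm\psi$ for the $1$-Lipschitz tent $\psi(x)=\max(0,2R-|x|)$; since these operations preserve the Lipschitz constant automatically, no explicit check is needed, and you get away with the smaller set $D_+=B_{2R}$. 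This last point is not merely cosmetic: in the proof of Theorem~\ref{th:clé} the constant $C_D$ is taken to be $\VOL(D_+)$, so your construction improves the resulting constant by a factor $2^d$.
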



Next, let $\mu$ and $\nu$ be any probability measures supported in a compact
set $D\subset\R^d$ satisfying $\ENE(\mu)<\infty$ and $\ENE(\nu)<\infty,$ where
the Coulomb energy $\cE$ has been defined in \eqref{eq:CE}. Assume further for
now that $\mu-\nu$ has a $\cC^\infty$ density $h$ with respect to the Lebesgue
measure which is compactly supported; we will remove this extra assumption at
the end of this section.

Set $\eta:=\mu-\nu$ for convenience. We first gather a few properties of the
Coulomb potential of $\eta$ defined by
\[
U^\eta(x):=g*\eta(x)=\int g(x-y) h(y)\dd y,\qquad x\in\R^d.
\] 
Since $g\in L^1_{\mathrm{loc}}(\R^d)$, we see that $U^{\eta}$ is well defined.
Since moreover $h\in\cC^\infty(\R^d)$ so does $U^\eta$. We claim that $\nabla
U^\eta\in L^2(\R^d)$. Indeed, if we set $\alpha_d:=\max (d-2,1)$, then
\[
\nabla g(x)
= -\alpha_d \frac{x}{\ABS{x}^d}.
\]
Using that $\eta$ has compact support and that $\eta(\R^d)=0$, we have as
$|x|\to\infty$,
\[
\ABS{\nabla U^{\eta}(x)}^2 %
=\alpha^2_d %
\iint \frac{(x-y)\cdot(x-z)}{\ABS{x-y}^d\ABS{x-z}^d}\,\eta(\dd y)\eta(\dd z) %
=\frac{\alpha_d^2}{\ABS{x}^{2d-1}}(1+o(1)),
\]
from which our claim follows. Finally, Poisson's equation \eqref{eq:Poisson}
yields
\begin{equation}\label{eq:DeltaUeta}
  \De U^\eta(x)=-c_d \,h(x),\qquad x\in\R^d.
\end{equation}
Indeed, \eqref{eq:Poisson} states that for any test function $\varphi\in
\cC^\infty(\R^d)$ with compact support,
\[
\int\De\varphi(y) g(y)\dd y=-c_d\, \varphi(0),
\]
see \cite[Th.~6.20]{lieb-loss}, and \eqref{eq:DeltaUeta} follows
by taking $\varphi(y)= h(x-y)$.\\

Now, take any Lipschitz function $f\in\cC(D_+)$, where the compact set $D_+$
has been introduced in Lemma \ref{le:W1nice}. We have by \eqref{eq:DeltaUeta}
that
\begin{equation}
  \int f(x)(\mu-\nu)(\dd x) %
  =\int f(x)h(x)\dd x %
  =-\frac1{c_d}\int  f(x)\De U^{\eta}(x)\dd x.
\end{equation}
Because $\nabla U^\eta,\nabla f\in L^2(\R^d)$ and $\De U^\eta\in
C^\infty(\R^d)$ with compact support, one can use integration by parts (see
for instance \cite[Th.~7.7]{lieb-loss}) to obtain
\begin{equation}
  -\int  f(x)\De U^{\eta}(x)\dd x=\int \nabla f(x)\cdot \nabla U^{\eta}(x)\dd x.
\end{equation}
By using the Cauchy--Schwarz inequality in $\dR^d$ and then in $L^2(\R^d)$, we
have
\begin{align}
  \int \nabla f(x)\cdot \nabla U^{\eta}(x)\dd x
  &\le\int |\nabla f(x)| |\nabla U^{\eta}(x)|\dd x \nonumber\\
  &\le\LIP{f} \int_{D_+} |\nabla U^{\eta}(x)|\dd x\nonumber\\
  &\le\LIP{f}\Bigr(\VOL(D_+) \int|\nabla U^{\eta}(x)|^2\dd x\Bigr)^{1/2}.
\end{align}
By using again an integration by parts, 
\begin{equation}\label{eq:clé2}
  \int|\nabla U^{\eta}(x)|^2\dd x %
  =-\int U^{\eta}(x)\De U^{\eta}(x)\dd x %
  =c_d\int U^{\eta}(x)h(x)\dd x=c_d\, \ENE(\eta).
\end{equation}
Finally, by combining \eqref{eq:clé1}--\eqref{eq:clé2} and Lemma
\ref{le:W1nice} we get a proof of Theorem \ref{th:clé} under the extra
assumption that $\mu$ and $\nu$ have a $\cC^\infty$ density, with
$C_D:=\VOL(D_+)$.

We finally remove this extra assumption by using a density argument. Assume
$\mu$ and $\nu$ are probability measures supported in a compact set
$D\subset\R^d$ with $\ENE(\mu),\ENE(\nu)<\infty$. Let $\vphi:\dR^d\to\dR$ be
such that
\[
\vphi\in \cC^\infty(B_1),\quad 
\vphi\ge0,\quad
\int \vphi (x) \dd x=1,
\]
and set $\vphi_\veps(x)=\veps^{-d}\vphi(\veps^{-1}x)$. Then both
$\vphi_\veps*\mu$ and $\vphi_\veps*\nu$ have a $\cC^\infty$ density supported
on an $\veps$-neighborhood of $D$. Thus, by the previous step, if $D^\delta$
stands for the $\delta$-neighborhood of the set $D,$ then for every
$0<\veps<\delta <1$,
\begin{equation}\label{eq:approxineqdom}
  \WAS_1(\vphi_\veps*\mu,\vphi_\veps*\nu)^2 %
  \le \VOL(D^\delta_+)\ENE(\vphi_\veps*\mu-\vphi_\veps*\nu).
\end{equation}
The probability measures $\vphi_\veps*\mu$ and $\vphi_\veps*\nu$ converge
weakly to $\mu$ and $\nu$ respectively as $\veps\to 0$. Since all these
measures are moreover supported in the compact set $D^1$,
$\WAS_1(\vphi_\veps*\mu,\vphi_\veps*\nu)\to \WAS_1(\mu,\nu)$ as $\veps\to 0$.
Thus, letting $\delta\to0$ in \eqref{eq:approxineqdom}, it is enough to prove
that $\ENE(\vphi_\veps*\mu-\vphi_\veps*\nu)\to\ENE(\mu-\nu)$ as $\veps\to0$ in
order to complete the proof of the theorem with $C_D:= \VOL(D_+)$. This is a
consequence of the next lemma, which is proven at the end of the section and
may be of independent interest. Consider the bilinear form
\begin{equation}\label{eq:bilinear}
  \ENE(\mu,\nu)=\iint g(x-y)\mu(\dd x)\nu(\dd y),
\end{equation}
which is well defined for any $\mu,\nu\in\cP(\R^d)$ with compact support. 

Note that the proof above is not that far from inequalities between Sobolev
norms.

\begin{lemma}[Regularization]%
  \label{le:energysmoother}%
  Let $\mu,\nu$ be two positive finite Borel measures on $\dR^d$, with compact
  support, and satisfying $\ENE(\mu,\nu)<\infty$. Let $\vphi\in
  L^\infty(B_1)$ be positive, satisfying $\int \vphi (x)\dd x=1$, and set
  $\vphi_\veps(x)=\veps^{-d}\vphi(\veps^{-1}x)$. Then, we have
  \[
  \lim_{\veps\to0}\ENE(\vphi_\veps*\mu,\vphi_\veps*\nu)=\ENE(\mu,\nu).
  \]
\end{lemma}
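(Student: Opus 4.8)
The plan is to reduce the claim to an analytic statement about the convolution kernel $g$, namely that
\[
\ENE(\vphi_\veps*\mu,\vphi_\veps*\nu)
=\iint (g*\vphi_\veps*\check\vphi_\veps)(x-y)\,\mu(\dd x)\nu(\dd y),
\]
where $\check\vphi_\veps(x):=\vphi_\veps(-x)$, and then to pass to the limit inside this double integral. First I would unfold the definition: by Fubini (justified by compact supports and local integrability of $g$), $g*(\vphi_\veps*\mu)(x)=\int (g*\vphi_\veps)(x-u)\,\mu(\dd u)$ and a second application moves the other mollifier onto the kernel, yielding the displayed identity with $g_\veps:=g*\vphi_\veps*\check\vphi_\veps$. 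Note $g_\veps$ is again a radial function, superharmonic where relevant, and $g_\veps\to g$ pointwise on $\R^d\setminus\{0\}$ as $\veps\to0$ since $g$ is continuous there and $\vphi_\veps*\check\vphi_\veps$ is an approximate identity supported in $B_{2\veps}$.

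Next I would prove the convergence of $\iint g_\veps\,\dd\mu\,\dd\nu$ to $\iint g\,\dd\mu\,\dd\nu$ by splitting the region of integration into the ``diagonal'' part $\{|x-y|\le\delta\}$ and its complement. On the complement, $g_\veps\to g$ uniformly on the relevant compact annulus (uniform continuity of $g$ away from the origin, together with $\SUPP(\vphi_\veps*\check\vphi_\veps)\subset B_{2\veps}$), so that piece converges to $\iint_{|x-y|>\delta}g\,\dd\mu\,\dd\nu$. For the diagonal part one needs a domination: the key monotonicity fact is that, because $g$ is superharmonic and $\vphi_\veps*\check\vphi_\veps\ge0$ is radial with total mass $1$, one has the submean-value bound $g*\vphi_\veps*\check\vphi_\veps\le g$ when $d\ge3$ (where $g\ge 0$), and more generally $g_\veps(x)\le g(x)$ for $|x|$ large enough and $g_\veps$ bounded near the origin; this lets me bound $\iint_{|x-y|\le\delta}g_\veps\,\dd\mu\,\dd\nu$ uniformly in $\veps$ by $\iint_{|x-y|\le\delta'}g_+\,\dd\mu\,\dd\nu$ for a slightly larger $\delta'$, which tends to $0$ as $\delta\to0$ by the hypothesis $\ENE(\mu,\nu)<\infty$ (finiteness of the integral of $g$, hence of $g_+$, against $\mu\otimes\nu$ forces the diagonal contribution to vanish; in the case $d=2$ one absorbs the bounded-below deficit using that $\mu,\nu$ are finite with compact support). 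Combining, $\limsup_\veps$ and $\liminf_\veps$ of $\ENE(\vphi_\veps*\mu,\vphi_\veps*\nu)$ both agree with $\ENE(\mu,\nu)$ up to an error that vanishes as $\delta\to0$.

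The main obstacle I expect is making the domination on the diagonal fully rigorous in the logarithmic case $d=2$, where $g=\log(1/|x|)$ changes sign: there the clean inequality $g_\veps\le g$ only holds on a neighborhood of the origin (by superharmonicity of $g$), while far away the mollification can increase $g$ slightly, so one must combine the local superharmonic bound with a crude global bound $|g_\veps-g|\le\omega(\veps)$ on the compact set where both measures live minus a small ball. An alternative route that sidesteps sign issues entirely is to invoke the Fourier/energy representation $\ENE(\mu,\nu)=c\int \widehat g(\xi)\,\overline{\widehat\mu(\xi)}\,\widehat\nu(\xi)\,\dd\xi$ with $\widehat g\ge0$, so that $\ENE(\vphi_\veps*\mu,\vphi_\veps*\nu)=c\int\widehat g(\xi)|\widehat\vphi_\veps(\xi)|^2\overline{\widehat\mu(\xi)}\widehat\nu(\xi)\,\dd\xi$ and, after polarization and Cauchy--Schwarz, dominated/monotone convergence applies since $|\widehat\vphi_\veps(\xi)|^2\uparrow 1$ pointwise and $\widehat g\,|\widehat\mu|^2,\widehat g\,|\widehat\nu|^2$ are integrable by the assumed finite energy; I would present the superharmonic-truncation argument as the main proof and mention the Fourier argument as a remark.
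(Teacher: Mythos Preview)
Your overall architecture is the same as the paper's: rewrite $\ENE(\vphi_\veps*\mu,\vphi_\veps*\nu)=\iint g_\veps(x-y)\,\mu(\dd x)\nu(\dd y)$ with $g_\veps=g*\vphi_\veps*\check\vphi_\veps$, observe pointwise convergence $g_\veps\to g$ off the diagonal, and conclude by dominated convergence. The gap is in your domination step. You assert that $\vphi_\veps*\check\vphi_\veps$ is radial and hence, by superharmonicity, $g_\veps\le g$. But the lemma only assumes $\vphi\in L^\infty(B_1)$ with $\vphi\ge 0$ and $\int\vphi=1$; nothing forces $\vphi$ (or $\vphi_\veps*\check\vphi_\veps$) to be radial. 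Evenness of $\vphi_\veps*\check\vphi_\veps$ is not enough: the submean-value inequality $g*\psi\le g$ requires averaging over spheres or balls (or radial mixtures thereof), and fails for general even probability measures. So as written, the inequality $g_\veps\le g$ is unjustified, and with it the control of the diagonal piece.

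The paper repairs exactly this point in a way that also eliminates your diagonal/off-diagonal split. First, for $d=2$ it uses a dilation (Coulomb energy scales nicely) to assume the supports lie in $B_{1/4}$, so that $g(x-y)>0$ and $g_\veps(x-y)>0$ on the relevant set for all small $\veps$; this disposes of the sign worries you anticipated. Second, since $\vphi_\veps(u)\,\dd u\le \|\vphi\|_\infty\,\VOL(B_1)\,\la_\veps(\dd u)$ (with $\la_\veps$ the uniform law on $B_\veps$), one gets
\[
g_\veps(x-y)\;\le\;\|\vphi\|_\infty^2\,\VOL(B_1)^2\iint g(x-y+u-v)\,\la_\veps(\dd u)\la_\veps(\dd v)\;\le\;\|\vphi\|_\infty^2\,\VOL(B_1)^2\,g(x-y),
\]
the last inequality being Lemma~\ref{le:superharm} (two successive ball-averages of a superharmonic function). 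This is a \emph{global} bound $g_\veps\le C\,g$ with a constant independent of $\veps$, so ordinary dominated convergence applies directly. Your Fourier alternative is shakier than you suggest: $\widehat g(\xi)\sim|\xi|^{-d}$ is not locally integrable near the origin when $d=2$, and the claimed monotonicity $|\widehat\vphi_\veps|^2\uparrow 1$ need not hold. I would drop it and adopt the $\la_\veps$-majorization plus the dilation trick.
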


\noindent Now, since $\mu$ and $\nu$ have the same total mass, compact
support, $\ENE(\mu),\ENE(\nu)<\infty$ and, as mentioned right after
\eqref{eq:CE}, $\ENE(\mu-\nu)\ge 0$, we have that $2\ENE(\mu,\nu)\le
\ENE(\mu)+\ENE(\nu)<\infty$. Thus Lemma \ref{le:energysmoother} applies and
gives that
\[
\ENE(\vphi_\veps*\mu-\vphi_\veps*\nu) %
= \ENE(\vphi_\veps*\mu)-2\ENE(\vphi_\veps*\mu,\vphi_\veps*\nu)+\ENE(\vphi_\veps*\nu)
\]
converges to $ \ENE(\mu)-2\ENE(\mu,\nu)+\ENE(\nu) =\ENE(\mu-\nu) $ as
$\veps\to 0$ and the proof of Theorem \ref{th:clé} is therefore complete up to
the technical lemmas \ref{le:W1nice} and \ref{le:energysmoother}.


\subsection{Proof of the technical lemmas \ref{le:W1nice} and
  \ref{le:energysmoother}}

\begin{proof}[Proof of Lemma \ref{le:W1nice}] Let $R >0$ be such that $D
  \subset B_R$. Given any function $f$ such that $\LIP{f}\le 1$ and $f(0)=0$,
  we claim one can find a function $\tilde f$ with support in $B_{4R}$ such
  that $\tilde f|_{D}=f|_{D}$ and $\LIP{\tilde f}\le 1$. Indeed, define the
  function $\tilde f$ as follows: for any $r \ge 0$ and $\te \in \mathbb
  S^{d-1}:=\{x\in\dR^d:\ABS{x}=1\}$,
  \[
  \tilde f(r\te) = \left\{
    \begin{array}{ll}
      f(r\te), & \textrm{ if } r \le R,\\
      f(R\te), & \textrm{ if } R \le r \le 2R,\\
      f(R\te) \frac{4R-r}{2R}, & \textrm{ if } 2R \le r \le 4R,\\
      0, & \textrm{ if } r \ge 4R.
    \end{array}
  \right.
  \]
  By construction the only non trivial point to check is $\LIP{\tilde f}\le
  1$, namely
  \begin{equation}
    \label{eq:LipToCheck}
    |\tilde f(r\te)-\tilde f(r^\prime\te^\prime)|\le |r\te-r^\prime\te^\prime|
  \end{equation}
  for every $r^\prime\ge r \ge 0$ and $\te, \te^\prime \in \mathbb
  S^{d-1}$. It is enough to prove \eqref{eq:LipToCheck} when both $r$ and $r'$
  belong to the same interval in the definition of $\tilde f$. But the latter
  is obvious except when $2R \le r \le r^\prime \le 4R$. In this case, we use
  that $|f(R\te)|\le R$, because $f(0)=0$ by assumption, in order to get
  \begin{align*}
    |\tilde f(r\te) -\tilde f(r^\prime \te^\prime)| 
    & \le |\tilde f(r\te) -\tilde f(r^\prime \te)| %
    + |\tilde f(r^\prime\te) -\tilde f(r^\prime \te^\prime)|\\
    & \le\frac{f(R\te)}{2R} %
    |r^\prime-r|+|f(R\te)- f(R \te^\prime)|\frac{4R-r^\prime}{2R}\\
    & \le  \frac{1}{2} |r^\prime - r| + R|\te- \te^\prime|\\ 
    & \le  \frac{1}{2} |r^\prime - r| + \frac{1}{2} r|\te- \te^\prime|\\ 
    & \le  |r\te - r^\prime \te^\prime|.
  \end{align*}
  We used in the last inequality that $|r\te - r\te^\prime| \le |r\te - r'
  \te^\prime|$, which holds true since $ r\te'$ is the orthogonal projection
  of $r'\te'$ onto $B_{ r}$ and $r\te\in B_{ r}$.

  As a consequence, for any $\mu,\nu$ probability measures supported in $D$,
  we obtain from \eqref{eq:KRduality},
  \begin{align*}
    \sup_{\substack{f\in\cC(B_{4R})\\\LIP{f}\le 1}}\int f(x)(\mu-\nu)(\dd x)
     \le \WAS_1(\mu,\nu) & = \sup_{\LIP{f}\le 1}\int f(x)(\mu-\nu)(\dd x)\\
    & =\sup_{\substack{\LIP{f}\le 1\\f(0)=0}}\int f(x)(\mu-\nu)(\dd x)\\
    & =\sup_{\substack{\LIP{f}\le 1\\f(0)=0}}\int \tilde f(x)(\mu-\nu)(\dd x)\\
    & \le\sup_{\substack{\tilde f\in\cC(B_{4R})\\\LIP{\tilde f}\le 1}}
    \int \tilde f(x)(\mu-\nu)(\dd x).
  \end{align*}
  Thus,
  \begin{equation}\label{eq:localize}
    \WAS_1(\mu,\nu)%
    =\sup_{\substack{f\in\cC(B_{4R})\\\LIP{f}\le 1}}\int f(x)(\mu-\nu)(\dd x),
  \end{equation}
  which proves the lemma with $D_+=B_{4R}$.
\end{proof}

Before going further, we show a simple but quite useful lemma, in the spirit
of Newton's theorem, see e.g.\ \cite[Th.~9.7]{lieb-loss}. Recall that $\la_R$
is the uniform probability measure on the ball $B_R$ in $\dR^d$.

\begin{lemma}[Superharmonicity]%
  \label{le:superharm}%
  For every $x\in\R^d$,
  \[
  \iint g(x+u-v)\la_R(\dd u)\la_R(\dd v)\le g(x).
  \]
\end{lemma}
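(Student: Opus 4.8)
The plan is to exploit the superharmonicity of the Coulomb kernel $g$, which is the conceptual heart of the matter: $g$ is superharmonic on $\R^d$ and harmonic away from the origin, so its spherical averages decrease. Concretely, I want to show that convolving $g$ with $\la_R$ (which amounts to spreading a point charge uniformly over a ball of radius $R$) can only decrease the potential, and iterating this on both the $u$ and $v$ variables gives the claimed bound $\iint g(x+u-v)\,\la_R(\dd u)\la_R(\dd v)\le g(x)$.

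The first step is to record the mean-value property: for any $x$ and any radius $\rho>0$, the average of $g$ over the sphere $\partial B_\rho(x)$ is at most $g(x)$, because $g$ is superharmonic; in fact, since $g$ is harmonic on $\R^d\setminus\{0\}$, the spherical average of $g(x+\cdot)$ over $\partial B_\rho(0)$ equals $g(x)$ whenever $|x|\ge\rho$ (Newton's theorem) and is otherwise $\le g(x)$. Averaging this spherical statement over radii $\rho\in[0,R]$ against the radial part of $\la_R$ yields
\[
\int g(x+u)\,\la_R(\dd u)\le g(x)\qquad\text{for every }x\in\R^d.
\]
This is the one-variable version of the lemma. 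The second step is to bootstrap to two variables. Fix $u\in B_R$ and apply the one-variable inequality to the point $x+u$ with the $v$-average: $\int g(x+u-v)\,\la_R(\dd v)\le g(x+u)$, using that $\la_R$ is symmetric so $-v$ has the same law as $v$. Then integrate this in $u$ against $\la_R$ and apply the one-variable inequality once more, now to the function $u\mapsto g(x+u)$:
\[
\iint g(x+u-v)\,\la_R(\dd u)\la_R(\dd v)
\le\int g(x+u)\,\la_R(\dd u)\le g(x).
\]
A small point to handle is integrability/finiteness so that Fubini applies: since $g\in L^1_{\mathrm{loc}}$ and is bounded below on compact sets (indeed $g\ge0$ when $d\ge3$), all the integrals above are well defined in $\R\cup\{+\infty\}$ and the manipulations are legitimate.

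The main obstacle is really just making the one-variable averaging argument clean, i.e.\ justifying $\int g(x+u)\,\la_R(\dd u)\le g(x)$ rigorously from superharmonicity. The cleanest route is to invoke the sub-mean-value characterization of superharmonic functions directly (the spherical average of a superharmonic function over $\partial B_\rho(x)$ is a nonincreasing function of $\rho$ with limit $g(x)$ as $\rho\to0$), which is standard potential theory and can be cited from the references already in the paper (e.g.\ \citep{helms}); alternatively one can verify it by the explicit computation of $g*\la_R$, which for $d=2$ is $\log\frac1{|x|}$ for $|x|\ge R$ and a constant (equal to $\log\frac1R$ plus a term) inside, and similarly for $d\ge3$, and in all cases is $\le g(x)$ pointwise. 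Either way the inequality is soft; no delicate estimates are needed, only the decrease of spherical means for superharmonic functions together with the rotational symmetry of $\la_R$.
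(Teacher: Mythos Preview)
Your proposal is correct and follows essentially the same route as the paper: both arguments first use the superharmonicity of $g$ to obtain the one-variable inequality $\int g(x+u)\,\la_R(\dd u)\le g(x)$, and then iterate it in the second variable (using the symmetry of $\la_R$) to conclude. The paper's write-up is slightly terser, simply displaying the iteration via an underbraced nonnegative term, but the idea is identical.
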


\begin{proof} 
  Since $g$ is superharmonic, we have
  \[
  \int g(x+u)\la_R(\dd u)\le g(x),\qquad x\in\R^d.
  \]
  As a consequence, 
  \begin{align*}
    &\iint g(x+u-v)\la_R(\dd u)\la_R(\dd v)\\
    & =\int g(x+u)-
    \underbrace{\left(g(x+u)
        -\int g(x+u-v)\la_R(\dd v)\right)}_{\ge 0}\la_R(\dd u)\\
    & \le \int g(x+u)\la_R(\dd u)\le g(x).
  \end{align*}
\end{proof}

\begin{proof}[Proof of Lemma \ref{le:energysmoother}] We set for convenience
  \begin{eqnarray*}
  g_\veps(x-y)& := &\iint g(x-y+u-v)\vphi_\veps(u)\vphi_\veps(v)\dd u\dd v,\\
  & = & \iint g(x-y+\veps(u-v))\vphi(u)\vphi(v)\dd u\dd v,
  \end{eqnarray*}
  which is positive when $d\ge 3$. When $d=2$, by dilation we can assume
  without loss of generality that $\mu$ and $\nu$ are supported in $B_{1/4}$,
  so that both $g(x-y)$ and $g_\veps(x-y)$ are positive for every $(x,y)\in
  \SUPP(\mu)\times \SUPP(\nu) $ and every $\veps<1/4$.
  
  First, Fubini--Tonelli's theorem  yields
  \begin{align*}
    \ENE(\vphi_\veps*\mu,\vphi_\veps*\nu) %
    & =\iint %
    \left( 
      \iint g(u-v)\vphi_\veps(u-x)\vphi_\veps(v-y)\dd u\dd v\right)
    \mu(\dd x)\nu(\dd y)\\
    & =\iint g_\veps(x-y)\mu(\dd x)\nu(\dd y). 
  \end{align*}
  For every $x,y\in\R^d$ satisfying $x\neq y$, the map $(u,v)\mapsto
  g(x-y+\veps(u-v))$ is bounded on $B_1\times B_1$ for every $\veps$ small
  enough and converges pointwise to $g(x-y)$ as $\veps\to0.$ Moreover, since
  $\ENE(\mu,\nu)<\infty$ yields $\mu\otimes\nu(\{x=y\})=0$, by dominated
  convergence, $ g_\veps(x-y)$ converges $\mu\otimes\nu$-almost everywhere to
  $g(x-y)$. Next, we have
  \begin{align}
    \label{eq:gepsBound}
    g_\veps(x-y)
    & =\iint g(x-y+u-v)\vphi_\veps(u)\vphi_\veps(v)\dd u\dd v\nonumber\\
    & \le  \NRM{\vphi}_\infty^2 %
    \VOL(B_1)^2\iint g(x-y+u-v)\la_{\veps}(\dd u)\la_{\veps}(\dd v),
  \end{align}
  where $\la_{\veps}$ is the uniform probability measure on $B_\veps$. Thus,
  by Lemma \ref{le:superharm}, we obtain
  \[
  g_\veps(x-y)\le \NRM{\vphi}_\infty^2\VOL(B_1)^2\, g(x-y)
  \]
  and, since $g(x-y)$ is $\mu\otimes\nu$-integrable by assumption, the lemma
  follows by dominated convergence again.
\end{proof}

\section{Proof of Theorem \ref{th:coultransp}}
\label{se:th:coultransp}

We first prove Lemma \ref{co:localcoultransp}, which is a weak version of
Theorem \ref{th:coultransp} restricted to compactly supported
measures. This is an easy corollary of Theorem \ref{th:clé} and the
Euler--Lagrange equations which characterize the equilibrium measure. The
latter state that, if we consider the Coulomb potential of the equilibrium
measure $U^{\mu_V}=g*\mu_V$, then there exists a constant $e_V$ such that
\begin{equation}\label{eq:EL}
  2U^{\mu_V}(x)+V(x) 
  \begin{cases}
    =e_V & \mbox{ for q.e. } x\in \SUPP(\mu_V)\\
    \ge e_V & \mbox{ for q.e. } x\in\dR^d,
  \end{cases}
\end{equation}
where ``q.e.'' stands for quasi-everywhere and means up to a set of null
capacity. Conversely, $\mu_V$ is the unique probability measure satisfying
\eqref{eq:EL}. This characterization goes back at least to \citet{frostman} in
the case where the potential $V$ is constant on a bounded set and infinite
outside. For the general case, see \citep{saff-totik},
\citep{chafai-gozlan-zitt}, \cite[Th.~2.1]{serfaty-course}, and references
therein.

\begin{lemma}%
  \label{co:localcoultransp}%
  If $D\subset\R^d$ is compact then for any
  $\mu\in\cP(\R^d)$ supported in $D$,
  \[
  \DIST_{\BL}(\mu,\mu_V)^2 %
  \leq \WAS_1(\mu,\mu_V)^2 %
  \leq C_{D \cup\SUPP(\mu_V)} \big(\ENE_V(\mu)-\ENE_V(\mu_V)\big),
  \]
  with $C_{D \cup\SUPP(\mu_V)}$ as in Theorem \ref{th:clé}.
\end{lemma}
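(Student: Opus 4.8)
The first inequality is simply \eqref{BL-W1 dominance} squared, so the plan is to concentrate on the second one. If $\ENE_V(\mu)=+\infty$ there is nothing to prove, so I assume $\ENE_V(\mu)<\infty$. The first step is bookkeeping on energies: since $\mu$ is supported in the compact set $D$, the kernel $g$ is bounded from below on $D-D$ and the lower semicontinuous potential $V$ is bounded from below on $D$; hence Tonelli's theorem lets one split the (bounded from below) integrand of $\ENE_V(\mu)$ and conclude that both $\ENE(\mu)<\infty$ and $\int V\,\dd\mu<\infty$, with \eqref{eq:EVEV} valid for $\mu$. Similarly, admissibility of $V$ furnishes some $\nu$ with $\ENE_V(\nu)<\infty$, so $\ENE_V(\mu_V)\le\ENE_V(\nu)<\infty$, and since $\mu_V$ is compactly supported this gives $\ENE(\mu_V)<\infty$. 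In particular $\ENE(\mu-\mu_V)$ is well defined, finite and non-negative, and the bilinear quantity $\ENE(\mu,\mu_V)$ is finite as well (bounded below because $g$ is bounded below on the compact $D\cup\SUPP(\mu_V)$, and bounded above by $\tfrac12(\ENE(\mu)+\ENE(\mu_V))$ since $\ENE(\mu-\mu_V)\ge0$).

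Next I would expand. Writing $U^{\mu_V}:=g*\mu_V$ and using \eqref{eq:EVEV}, the identity $\ENE(\mu-\mu_V)=\ENE(\mu)-2\ENE(\mu,\mu_V)+\ENE(\mu_V)$, and the fact that $\ENE(\mu,\mu_V)=\int U^{\mu_V}\dd\mu$ and $\ENE(\mu_V)=\int U^{\mu_V}\dd\mu_V$ (Fubini being licit by the finiteness just established), one obtains
\[
\ENE_V(\mu)-\ENE_V(\mu_V)-\ENE(\mu-\mu_V)
=\int\bigl(2U^{\mu_V}+V\bigr)\dd\mu-\int\bigl(2U^{\mu_V}+V\bigr)\dd\mu_V ,
\]
both integrals being meaningful because $2U^{\mu_V}+V$ is bounded from below on $D\cup\SUPP(\mu_V)$. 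Then I invoke the Euler--Lagrange characterization \eqref{eq:EL} of $\mu_V$: the function $2U^{\mu_V}+V$ equals $e_V$ quasi-everywhere on $\SUPP(\mu_V)$ and is $\ge e_V$ quasi-everywhere on $\R^d$. Since a compactly supported probability measure with finite Coulomb energy charges no set of null capacity (a standard fact about capacities — if it did charge one, the normalized restriction would be a finite-energy probability measure on that set), both $\mu$ and $\mu_V$ ignore the relevant exceptional sets; therefore $\int(2U^{\mu_V}+V)\dd\mu_V=e_V$ and $\int(2U^{\mu_V}+V)\dd\mu\ge e_V$. Hence the right-hand side above is $\ge 0$, that is,
\[
\ENE(\mu-\mu_V)\le\ENE_V(\mu)-\ENE_V(\mu_V).
\]

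Finally, both $\mu$ and $\mu_V$ are supported in the compact set $D':=D\cup\SUPP(\mu_V)$ and have finite Coulomb energy, so Theorem \ref{th:clé} applies and yields $\WAS_1(\mu,\mu_V)^2\le C_{D'}\,\ENE(\mu-\mu_V)\le C_{D'}\,\bigl(\ENE_V(\mu)-\ENE_V(\mu_V)\bigr)$, which is exactly the asserted bound with $C_{D'}=C_{D\cup\SUPP(\mu_V)}$. The substantive points are the measure-theoretic ones: verifying that every energy integral appearing in the expansion is finite so that the algebraic rearrangement is legitimate, and recalling that finite-energy measures put no mass on null-capacity sets so that the quasi-everywhere Euler--Lagrange (in)equalities may be integrated against $\mu$ and $\mu_V$. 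These are the steps where I expect to have to be careful; the rest is essentially reading off Theorem \ref{th:clé}.
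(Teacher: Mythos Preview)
Your proof is correct and follows essentially the same route as the paper: integrate the Euler--Lagrange (in)equalities \eqref{eq:EL} against $\mu$ and $\mu_V$ to obtain $\ENE(\mu-\mu_V)\le\ENE_V(\mu)-\ENE_V(\mu_V)$, then invoke Theorem~\ref{th:clé} on the compact set $D\cup\SUPP(\mu_V)$. You are in fact more careful than the paper about the measure-theoretic housekeeping (finiteness of the various energy terms and the fact that finite-energy measures do not charge null-capacity sets, so that the ``q.e.''\ in \eqref{eq:EL} is harmless), which the paper leaves implicit.
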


\begin{proof} 
  Since the statement is obvious when $\ENE(\mu)=+\infty$, one can assume
  $\ENE(\mu)<\infty$. By integrating \eqref{eq:EL} over $\mu_V$, we first
  obtain
  \begin{equation}\label{eq:CV}
    e_V = 2\,\ENE(\mu_V)+\int V(x)\mu_V(\dd x).
  \end{equation}
  Integrating again \eqref{eq:EL} but over $\mu$, we get together with
  \eqref{eq:CV},
  \[
  2\ENE(\mu,\mu_V)+\int V(x)\mu(\dd x)\ge 2\ENE(\mu_V)+\int V(x)\mu_V(\dd x),
  \]
  where we used the notation \eqref{eq:bilinear}.
  This inequality provides
  \begin{equation}\label{eq:eev}
    \ENE(\mu-\mu_V) %
    = \ENE(\mu)-2\ENE(\mu,\mu_V)+\ENE(\mu_V) %
    \le \ENE_V(\mu)-\ENE_V(\mu_V)
  \end{equation}
  and the result follows from Theorem \ref{th:clé} and \eqref{BL-W1 dominance}. 
\end{proof}

We are now in position to prove Theorem \ref{th:coultransp}. We follow closely
the strategy of the proof of \cite[Lem.~2]{maida-maurel-segala}.

\begin{proof}[Proof of Theorem \ref{th:coultransp}] 
  Let $\mu\in\cP(\R^d)$. We can assume without loss of generality that $\mu$
  has unbounded support and finite Coulomb energy. The strategy consists in
  constructing a measure $\tilde\mu$ with compact support, with smaller
  Coulomb energy, together with an explicit control on the distance of
  interest between $\mu$ and $\tilde\mu$, and then to use Lemma
  \ref{co:localcoultransp}. To do so, given any $R\ge 1$ consider the
  probability measures
  \[
  \mu_{\mathrm{core}}:=\frac{1}{1-\alpha}\mu|_{B_R},
  \qquad 
  \mu_{\mathrm{tail}}:=\frac{1}{\alpha}\mu|_{\R^d\setminus B_R}
  \qquad \text{where} \quad \alpha:=\mu(\R^d\setminus B_R),
  \]
  so that $\mu=(1-\alpha)\mu_{\mathrm{core}}+\alpha\mu_{\mathrm{tail}}$. We
  then define the probability measure
  \[
  \tilde\mu:=(1-\alpha)\mu_{\mathrm{core}}+\alpha\si,
  \]
  where $\si$ stands for the uniform probability measure on the unit sphere
  $\mathbb S^{d-1}$. In particular, $\tilde\mu$ is supported in $B_R$.
  
  First, we write
  \begin{multline}\label{eq:E-E-E}
    \ENE_V(\mu)-\ENE_V(\tilde\mu)
    =\alpha\Big\{\int V(x)(\mu_{\mathrm{tail}}-\si)(\dd x)
    +\alpha\ENE(\mu_{\mathrm{tail}})-\alpha\ENE(\si)\\
    + 2(1-\alpha)\ENE(\mu_{\mathrm{core}},\mu_{\mathrm{tail}}-\si)\Big\},
  \end{multline}
  where we used the notation \eqref{eq:bilinear}. 

  Note that, for any $x,y \in
  \R^d,$
  \begin{equation}\label{eq:lesurg}
    g(x-y)
    \ge -( \log(1+|x|)+ \log(1+|y|)\big)\IND_{d=2},
  \end{equation}
  so that
  \begin{equation}
    \ENE(\mu_{\mathrm{tail}})
    \ge -\int 2\log(1+|x|)\IND_{d=2} \,\mu_{\mathrm{tail}}(\dd x).
  \end{equation}
   Moreover, using alternatively,
    \begin{equation}\label{eq:lesurg}
    g(x-y)
    \ge -\log\big(2\max(1+|x|,1+|y|)\big)\IND_{d=2},
  \end{equation}
  we  also obtain
  \begin{equation}
    \ENE(\mu_{\mathrm{core}},\mu_{\mathrm{tail}})%
    \ge -\int \big(\log(1+|x|)+\log 2\big)\IND_{d=2}\, \mu_{\mathrm{tail}}(\dd x).
  \end{equation}
  Next, according to \cite[Lem.~1.6.1]{helms}, we have
  \[
  U^\si(x):= g*\si(x)=
  \begin{cases} 
    0 & \text{ if } |x|\le 1 \text{ and } d=2,\\
    1 & \text{ if } |x|\le 1 \text{ and } d\ge 3,\\
    g(x) & \text{ if } |x|> 1.
  \end{cases}
  \]
  In particular, $U^\si(x)\le 1$ for every $x\in\R^d$ and thus, using that
  $0\le\alpha\le 1$,
  \begin{equation}\label{eq:estimutildeB} 
    \alpha \ENE(\mu_{\mathrm{core}},\si) = \alpha \int U^{\si}(x)\mu_{\mathrm{core}}(\dd x) \le 1.
  \end{equation}
  By combining \eqref{eq:E-E-E}--\eqref{eq:estimutildeB},  we obtain
  \begin{equation}\label{eq:estimutildeC}
    \ENE_V(\mu)-\ENE_V(\tilde\mu)\\
    \ge  \alpha\int \big\{V(x) %
    - 2\log(1+|x|)\IND_{d=2}+C\big\}\mu_{\mathrm{tail}}(\dd x)
  \end{equation}
  for some $C>0$ which is independent on $\mu$.
  
  Next, using that
  $$
  \DIST_{\BL}(\mu,\tilde\mu) =\alpha\sup_{\substack{\LIP{f}\leq1\\\NRM{f}_\infty\leq1}}\int f(x)(\mu_{\mathrm{tail}}-\si)(\dd x)\leq 2\alpha,
  $$
  we obtain
  \begin{equation*}
    \ENE_V(\mu)-\ENE_V(\tilde\mu)-\DIST_{\BL}(\mu,\tilde\mu)^2 \\
    \ge  \alpha\int \big\{V(x) %
    - 2\log(1+|x|)\IND_{d=2}+C-4\big\}\mu_{\mathrm{tail}}(\dd x).
  \end{equation*}
  Since $V$ is admissible, the map
  \[
  x\mapsto V(x)- 2\log(1+|x|)\IND_{d=2}+C-4
  \]
  tends to $+\infty$ as $|x|$ becomes large, it follows that there exists
  $R>0$ large enough so that this map is positive on $\R^d\setminus B_R$.
  
  Since by
  construction $\mu_{\mathrm{tail}}$ is supported on $\R^d\setminus B_R$, this yields $$\ENE_V(\mu)-\ENE_V(\tilde\mu)-\DIST_{\BL}(\mu,\tilde\mu)^2 
    \ge 0,$$
  and in particular, 
  \[
    \DIST_{\BL}(\mu,\tilde\mu)^2%
    \le \ENE_V(\mu)-\ENE_V(\mu_V)%
    \quad\text{and}\quad%
    \ENE_V(\tilde\mu)\le \ENE_V(\mu).
  \]
  This yields inequality \eqref{eq:CoulombBL} since, by the triangle inequality and Lemma
  \ref{co:localcoultransp}, we have
  \begin{align}
  \label{eq:triangle ineq}
    \DIST_{\BL}(\mu,\mu_V)^2& \le (\DIST_{\BL}(\mu,\tilde\mu)+\DIST_{\BL}(\tilde\mu,\mu_V))^2\nonumber\\
    & \le 2\DIST_{\BL}(\mu,\tilde\mu)^2+2\DIST_{\BL}(\tilde\mu,\mu_V)^2\nonumber\\
    & \le 2\big(\ENE_V(\mu)-\ENE_V(\mu_V)\big) %
    +2C_{B_R \cup\SUPP(\mu_V)}\big(\ENE_V(\tilde\mu)-\ENE_V(\mu_V)\big)\nonumber\\
    & \le \underbrace{ 
      2\left(1
        +C_{B_R \cup\SUPP(\mu_V)}\right)}_{\ds C_{\BL}^V}\big(\ENE_V(\mu)-\ENE_V(\mu_V)\big).
  \end{align}
  Note that since by construction $R$ is independent on $\mu$, so does $C_{\BL}^V$.\\
  
  We now turn to the proof of inequality \eqref{eq:CoulombW1}. 
Under \eqref{eq:Vgrowth} we can assume
  that $\mu$ has a finite second moment
  \[
  \int |x|^2\mu(\dd x)<\infty
  \]
  and in particular that $\WAS_1(\mu,\mu_V)<\infty$. Indeed, because of the
  growth condition \eqref{eq:Vgrowth}, we have otherwise
  \[
  \int V(x)\mu(\dd x)=+\infty
  \]
  so that the right hand side of the inequality to prove is infinite.  
  Now, let $\ga>0$ be such that
  \[
  \liminf_{|x|\to\infty}\frac{V(x)}{4\ga|x|^2}>1,
  \]
  whose existence is ensured by  \eqref{eq:Vgrowth}. Using the representation
  \eqref{eq:KRduality}, we have the upper bound,
  \begin{align}\label{eq:estimutildeA}
    \WAS_1(\mu,\tilde\mu)& =\sup_{\LIP{f}\le 1}\int f(x)(\mu-\tilde\mu)(\dd x)\nonumber\\
    & =\alpha\sup_{\LIP{f}\le 1}\int f(x)(\mu_{\mathrm{tail}}-\si)(\dd x)\nonumber\\
    & =\alpha\sup_{\substack{\LIP{f}\le 1\\f(0)=0}}
    \int f(x)(\mu_{\mathrm{tail}}-\si)(\dd x)\nonumber\\
    & \le \alpha\int |x|(\mu_{\mathrm{tail}}+\si)(\dd x)\nonumber\\
    & \le 2\alpha\int |x|\mu_{\mathrm{tail}}(\dd x)\nonumber\\
    & \le 2\alpha\left(\int |x|^2\mu_{\mathrm{tail}}(\dd x)\right)^{1/2}.
  \end{align}
  Combined together with \eqref{eq:estimutildeC} this gives
   \begin{equation*}
    \ENE_V(\mu)-\ENE_V(\tilde\mu)-\gamma\WAS_1(\mu,\tilde\mu)^2 \\
    \ge  \alpha\int \big\{V(x) -4\ga|x|^2%
    - 2\log(1+|x|)\IND_{d=2}+C\big\}\mu_{\mathrm{tail}}(\dd x).
  \end{equation*}
  Therefore, using the definition of
  $\ga$ and  \eqref{eq:Vgrowth}, this yields as before that there exists $R\ge 1$ independent on $\mu$ such that
  \[
    \ga \WAS_1(\mu,\tilde\mu)^2%
    \le \ENE_V(\mu)-\ENE_V(\mu_V)%
    \quad\text{and}\quad%
    \ENE_V(\tilde\mu)\le \ENE_V(\mu).
  \]
 As in \eqref{eq:triangle ineq}, this gives the upper bound,
  \[
    \WAS_1(\mu,\mu_V)^2\le \underbrace{ 
      2\left(\frac1{\ga}
        +C_{B_R \cup\SUPP(\mu_V)}\right)}_{\ds C^V_{\WAS_1}}\big(\ENE_V(\mu)-\ENE_V(\mu_V)\big),
  \]
  and the proof of the theorem is complete. 
   \end{proof}

 \section{Proofs of Theorem \ref{th:theococo} and Theorem \ref{th:theococo2} }\label{se:theococo}

 We start with the following lower bound on $Z^N_{V,\be}$. Recall the
 definition \eqref{eq:BSent} of the Boltzmann--Shannon entropy $\ENTS$.

 \begin{lemma}[Normalizing constant]\label{le:partifunc} Assume that $V$ is
   admissible, finite on set of positive Lebesgue measure, and that
   $\ENTS(\mu_V)$ is
   finite. Then for any $N\ge 2$,
   \[ 
   Z_{V,\be}^N \ge  
   \exp\left\{-N^2\frac{\be}{2} \ENE_V(\mu_V)
     +N \left(\frac{\be}{2}\ENE(\mu_V)  +  \ENTS(\mu_V)\right)\right\}.
   \]
 \end{lemma}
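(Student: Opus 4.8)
The plan is to bound $Z^N_{V,\be}$ from below by restricting the integration domain and then applying Jensen's inequality with respect to the reference product measure $\mu_V^{\otimes N}$; this is the standard device underlying large-deviation lower bounds. Since $\ENTS(\mu_V)$ is finite, $\mu_V$ is absolutely continuous, with a density $\rho_V:=\dd\mu_V/\dd x$ that is strictly positive $\mu_V$-almost everywhere. Because $V$ is admissible, $\mu_V$ is compactly supported, say $\SUPP(\mu_V)\subset B_R$, and $\ENE_V(\mu_V)<\infty$; moreover $g$ is bounded below on the compact $B_{2R}\supset\SUPP(\mu_V)-\SUPP(\mu_V)$ and $V$, being lower semicontinuous, is bounded below on $B_R$, so by \eqref{eq:EVEV} both $\ENE(\mu_V)=\iint g(x-y)\,\mu_V(\dd x)\mu_V(\dd y)$ and $\int V\,\dd\mu_V$ are finite.

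First, the integrand of $Z^N_{V,\be}$ in \eqref{eq:defPN} is nonnegative, so we may restrict the integral to the (full $\mu_V^{\otimes N}$-measure) set $\{\rho_V(x_i)>0\ \forall i\}$ and there write $\dd x_1\cdots\dd x_N=\prod_{i=1}^N\rho_V(x_i)^{-1}\mu_V(\dd x_i)$, recalling \eqref{eq:cgmu}--\eqref{eq:HN}. This gives
\[
Z^N_{V,\be}\ \ge\ \int \exp\!\Big(-\tfrac{\be}{2}H_N(x_1,\ldots,x_N)-\sum_{i=1}^N\log\rho_V(x_i)\Big)\,\mu_V^{\otimes N}(\dd x_1\cdots\dd x_N),
\]
and Jensen's inequality applied to $\mu_V^{\otimes N}$ and the convex function $t\mapsto\e^t$ bounds the right-hand side below by $\exp\big(-\tfrac{\be}{2}\int H_N\,\dd\mu_V^{\otimes N}-N\int\log\rho_V\,\dd\mu_V\big)$.

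It then remains to compute the exponent. By exchangeability of $\mu_V^{\otimes N}$ and the definition \eqref{eq:HN} of $H_N$,
\[
\int H_N\,\dd\mu_V^{\otimes N}=N(N-1)\,\ENE(\mu_V)+N^2\!\int V\,\dd\mu_V,
\]
while $\int\log\rho_V\,\dd\mu_V=\int\rho_V\log\rho_V\,\dd x=-\ENTS(\mu_V)$ by \eqref{eq:BSent}. Substituting and using \eqref{eq:EVEV} in the form $N^2\ENE_V(\mu_V)=N^2\ENE(\mu_V)+N^2\int V\,\dd\mu_V$,
\begin{align*}
& -\frac{\be}{2}\Big(N(N-1)\ENE(\mu_V)+N^2\!\int V\,\dd\mu_V\Big)+N\ENTS(\mu_V)\\
& \qquad =-\frac{\be}{2}N^2\ENE_V(\mu_V)+\frac{\be}{2}N\ENE(\mu_V)+N\ENTS(\mu_V),
\end{align*}
which is exactly the exponent appearing in the claimed lower bound.

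The one point deserving care is the legitimacy of the Jensen step, namely that the exponent $-\tfrac{\be}{2}H_N-\sum_i\log\rho_V(x_i)$ has a well-defined $\mu_V^{\otimes N}$-integral in $[-\infty,+\infty)$; equivalently, that its positive part is integrable. This is precisely where the hypotheses enter: on $(B_R)^N$ the term $-\tfrac{\be}{2}\sum_{i\ne j}g(x_i-x_j)$ is bounded above since $g$ is bounded below on $B_{2R}$, the term $-\tfrac{\be}{2}N\sum_iV(x_i)$ is bounded above since $V$ is bounded below on $B_R$, and $\int(-\log\rho_V)^+\,\dd\mu_V\le\int|\rho_V\log\rho_V|\,\dd x<\infty$ by finiteness of $\ENTS(\mu_V)$. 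Hence $\int(\text{exponent})^+\,\dd\mu_V^{\otimes N}<\infty$, the integral is well-defined, and the computation above in fact shows it is finite. I expect this integrability bookkeeping---not the main inequality---to be the only mildly delicate point.
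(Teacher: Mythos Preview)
Your proof is correct and follows essentially the same route as the paper: restrict to $\{\rho_V(x_i)>0\ \forall i\}$, change to the reference measure $\mu_V^{\otimes N}$, apply Jensen, and compute. Your additional paragraph verifying the integrability needed for Jensen is a welcome clarification that the paper leaves implicit.
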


Note that it is compatible with the convergence
$\lim_{N\to\infty}\frac{1}{N^2}\log Z_{V,\be}^N=-\frac{\be}{2}\ENE_V(\mu_V)$
from \citep{chafai-gozlan-zitt} and with the more refined asymptotic expansion
from \citep[Cor.~1.5]{leble-serfaty}.

\begin{proof}[Proof of Lemma \ref{le:partifunc}]
  We start by writing the energy \eqref{eq:HN} as 
  \begin{equation}\label{eq:HNeneq}
    H_N(x_1,\ldots,x_N)
    =\sum_{i\neq j} %
    \left(g(x_i-x_j)+\frac{1}2V(x_i)+\frac12V(x_j)\right)+\sum_{i=1}^NV(x_i).
  \end{equation}
  Since the entropy $\ENTS(\mu_V)$ is finite by assumption, the equilibrium
  measure $\mu_V$ has a density $\rho_V$. If we define the event
  $E_V^N:=\{x\in(\dR^d)^N:\prod_{i=1}^N\rho_V(x_i)>0\}$, then
  \begin{align*}
   & \log Z_{V,\be}^N\\
    &=\log\int\e^{-\frac{\be}{2} %
      \sum_{i\neq j}\left(g(x_i-x_j)+\frac{1}2V(x_i)+\frac12V(x_j)\right)}
    \prod_{i=1}^N\e^{-\frac{\be}{2}V(x_i)}\dd x_i\\
    &\ge\log
    \int_{E_V^N}\e^{-\frac{\be}{2} %
      \sum_{i\neq j}\left(g(x_i-x_j)+\frac{1}2V(x_i)+\frac12V(x_j)\right)}
    \prod_{i=1}^N\e^{-\frac{\be}{2}V(x_i)}\dd x_i\\
    &=\log\int_{E_V^N}%
    \e^{-\frac{\be}{2}\sum_{i\neq j} %
      \left(g(x_i-x_j)+\frac{1}2V(x_i)+\frac12V(x_j)\right)
      -\sum_{i=1}^N \big(\frac{\be}{2}V(x_i)+\log\rho_V(x_i)\big)}
    \prod_{i=1}^N\mu_V(\dd x_i)\\
    &\overset{(*)}{\ge} 
    -\frac{\be}{2}\int_{E_V^N}\sum_{i\neq j}%
    \left(g(x_i-x_j)+\frac{1}2V(x_i)+\frac12V(x_j)\right) %
    \prod_{i=1}^N\mu_V(\dd x_i)\\
    & \qquad \qquad -N\int \left(\frac{\be}{2} V(x)+\log\rho_V(x)\right) %
    \mu_V(\dd x)\\
    &=-N(N-1)\frac{\be}{2}\ENE_V(\mu_V)
    -N \frac{\be}{2} \int V(x) \mu_V(\dd x) +N \ENTS(\mu_V)\\
    & = - N^2 \frac{\be}{2}\ENE_V(\mu_V) %
    + N \frac{\be}{2}\ENE(\mu_V) %
    + N \ENTS(\mu_V),
  \end{align*}
  where $(*)$ comes from Jensen's inequality.\\
\end{proof}


Next, we use that the energy \eqref{eq:HN} is a function of the empirical measure
$\hat\mu_N$, and more precisely that one can express $\dP_{V,\be}^N$ in terms of the partition
function $Z_{V,\be}^N$ and the quantity
\[
\ENE_V^{\neq}(\hat\mu_N)-\ENE_V(\mu_V),
\]
where $\ENE_V^{\neq}(\hat\mu_N)$ is as in \eqref{eq:enevneq}. The idea is to
replace it by its continuous version $\ENE_V(\hat\mu_N)-\ENE_V(\mu_V)$, so as
to compare it to $\DBL(\hat\mu_N,\mu_V)$ or $\WAS_1(\hat\mu_N,\mu_V)$ thanks
to the Coulomb transport type inequalities of Theorem \ref{th:coultransp}.
However, this is meaningless because $\ENE_V(\hat\mu_N)$ is infinite since
$\hat\mu_N$ is atomic, but one can circumvent this problem by approximating
$\hat\mu_N$ with a regular measure $\hat\mu_N^{(\veps)}$, as explained in the
next lemma. The same strategy has been used by \cite{maida-maurel-segala}.
This regularization is also at the heart of the definition of the renormalized
energy of \cite{MR3455593}. See also \cite[Prop.~2.8 item
3]{chafai-gozlan-zitt}.

\begin{lemma}[Regularization]\label{le:regularization}
  For any $\veps>0$, let us define 
  \[
  \hat\mu_N^{(\veps)}:=\hat\mu_N*\la_\veps,
  \]
  where we recall that $\la_\veps$ is the uniform probability measure on the
  ball $B_\veps$. Then,
  \begin{equation}\label{eq:VLB}
    H_N(x_1,\ldots,x_N)%
    \ge  N^2\ENE_V(\hat\mu_N^{(\veps)})%
    -N\ENE(\la_\veps)%
    -N\sum_{i=1}^N (V*\la_\veps -V)(x_i).
  \end{equation}
\end{lemma}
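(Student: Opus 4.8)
The plan is to expand $N^2\ENE_V(\hat\mu_N^{(\veps)})$ directly in terms of the particles $x_1,\dots,x_N$ and to compare it term by term with $H_N$, using the superharmonicity estimate of Lemma \ref{le:superharm} to dispense with the smoothing of the Coulomb kernel. First I would invoke \eqref{eq:EVEV} to write
\[
\ENE_V(\hat\mu_N^{(\veps)})=\iint g(x-y)\,\hat\mu_N^{(\veps)}(\dd x)\hat\mu_N^{(\veps)}(\dd y)+\int V(x)\,\hat\mu_N^{(\veps)}(\dd x),
\]
and then unfold the convolutions, writing $\hat\mu_N^{(\veps)}=\frac1N\sum_{i=1}^N\de_{x_i}*\la_\veps$. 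Since $\la_\veps$ is symmetric, averaging a function against $\la_\veps$ is the same as convolving it with $\la_\veps$; hence the potential term is $\frac1N\sum_{i=1}^N(V*\la_\veps)(x_i)$ and the interaction term is $\frac1{N^2}\sum_{i,j=1}^N g_\veps(x_i-x_j)$, where
\[
g_\veps(z):=\iint g(z+u-v)\,\la_\veps(\dd u)\la_\veps(\dd v)
\]
is the kernel $g$ smoothed twice against $\la_\veps$.

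The second step is to split off the diagonal $i=j$: one has $\sum_{i,j}g_\veps(x_i-x_j)=\sum_{i\ne j}g_\veps(x_i-x_j)+N\,g_\veps(0)$, and $g_\veps(0)=\iint g(u-v)\,\la_\veps(\dd u)\la_\veps(\dd v)=\ENE(\la_\veps)$, which is finite for every $\veps>0$ and every $d\ge 2$. For the off-diagonal terms, Lemma \ref{le:superharm} applied with $R=\veps$ and $x=x_i-x_j$ gives $g_\veps(x_i-x_j)\le g(x_i-x_j)$, so $\sum_{i\ne j}g_\veps(x_i-x_j)\le\sum_{i\ne j}g(x_i-x_j)$.

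Combining these estimates yields
\[
N^2\ENE_V(\hat\mu_N^{(\veps)})\le\sum_{i\ne j}g(x_i-x_j)+N\,\ENE(\la_\veps)+N\sum_{i=1}^N(V*\la_\veps)(x_i),
\]
and it only remains to recognize from \eqref{eq:HN} that $\sum_{i\ne j}g(x_i-x_j)=H_N(x_1,\dots,x_N)-N\sum_{i=1}^N V(x_i)$ and to rearrange the terms, which produces exactly \eqref{eq:VLB}.

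I do not expect a genuine obstacle here: the argument is an exact expansion followed by a single application of superharmonicity. The only points requiring a little care are the bookkeeping of the convolution conventions (using the symmetry of $\la_\veps$ so that the two smoothings can be absorbed into $g_\veps$ and $V*\la_\veps$) and the remark that the inequality is to be read in $\R\cup\{+\infty\}$, which is harmless since $g$ is bounded below on compact sets, $\ENE(\la_\veps)<\infty$, and in the intended applications $V$ is continuous so that $V*\la_\veps$ is everywhere finite.
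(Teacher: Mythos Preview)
Your proposal is correct and follows essentially the same approach as the paper: expand $N^2\ENE_V(\hat\mu_N^{(\veps)})$ into diagonal and off-diagonal blocks, identify the diagonal as $N\ENE(\la_\veps)$, bound the off-diagonal by $\sum_{i\ne j}g(x_i-x_j)$ via Lemma~\ref{le:superharm}, and rearrange using the definition of $H_N$. The paper runs the same computation in the reverse direction (starting from $\sum_{i\ne j}g(x_i-x_j)$ and arriving at $N^2\ENE_V(\hat\mu_N^{(\veps)})$), but the content is identical.
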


Note the latter inequality is sharp. Indeed, since $g$ is harmonic on
$\R^d\setminus\{0\}$, the inequality \eqref{eq:ineqSuper} below is in fact an
equality as soon as $|x_i-x_j|\geq \veps$ for every $i\neq j$, and so does
\eqref{eq:VLB}.

\begin{proof} 
  The superharmonicity of $g$ yields, via
  Lemma \ref{le:superharm},
  \begin{align}\label{eq:ineqSuper}
    \sum_{i\neq j} g(x_i-x_j) %
    &\ge \sum_{i\neq j}\iint g(x_i-x_j+u-v)\la_\veps(\dd u)\la_\veps(\dd v)\\
    &=\sum_{i\neq j}\ENE(\de_{x_i}*\la_\veps,\de_{x_j}*\la_\veps)\nonumber\\
    &= N^2\ENE(\hat\mu_N^{(\veps)}) - N \ENE(\la_\veps)\nonumber\\
    &= N^2\ENE_V(\hat\mu_N^{(\veps)})-N\ENE(\la_\veps) -N\sum_{i=1}^NV*\la_\veps(x_i)
    \nonumber,
  \end{align}
  and \eqref{eq:VLB} follows. 
\end{proof}

We follow this common path to obtain both
theorems \ref{th:theococo} and \ref{th:theococo2} up to dealing with the
remaining terms due to the regularization. Indeed, one has to proceed differently depending on the
strength of the assumption we make on the Laplacian's potential $\Delta V$.
The proofs we provide use milder assumptions than in their
statements, as explained in Remark \ref{rm:regularity}.


\begin{proof}[Proofs of Theorem \ref{th:theococo} and Theorem
  \ref{th:theococo2}] Assume for now that $V$ is admissible (to ensure the
  existence of $\mu_V$), and finite on a set of positive Lebesgue measure (to
  ensure that $Z_{V,\beta}^N >0$), and that the equilibrium measure $\mu_V$
  has finite Boltzmann--Shannon entropy $\ENTS(\mu_V)$ (to use Lemma
  \ref{le:partifunc}). Let us fix $N\geq 2$, $r>0$ and $\be >0$ satisfying
  \eqref{eq:gamma} (to ensure that $Z_{V,\beta}^N <\infty$).

  \medskip
  
  \noindent\textit{Step 1: Preparation.} We start by putting aside a small
  fraction of
  the energy to ensure integrability in the next steps. Namely, for any
  $\eta\in(0,1)$ and any Borel set $A\subset(\R^d)^N$, we write
  \begin{align*}
    \dP_{V,\be}^N(A) 
    & = \frac{1}{Z_{V,\be}^N} %
    \int_A\e^{-\frac{\be}{2} H_N(x_1, \ldots, x_N)}\prod_{i=1}^N\dd x_i\\
    & = \frac{1}{Z_{V,\be}^N} %
    \int_A\e^{-\frac{\be}{2}(1-\eta) H_N(x_1, \ldots, x_N)}
    \e^{-\frac{\be}{2} \eta H_N(x_1, \ldots, x_N)} %
    \prod_{i=1}^N\dd x_i.
  \end{align*}
  Next, if we set
  \begin{equation}
    \label{eq:defC1}
    C_1:=-\frac{\be}{2}\ENE(\mu_V)  -  \ENTS(\mu_V),
  \end{equation}
  then we deduce from Lemma \ref{le:regularization} and then Lemma
  \ref{le:partifunc} that
  \begin{align}
    \label{eq:step1}
    \dP_{V,\be}^N(A) 
    & \le \frac{1}{Z_{V,\be}^N} \nonumber%
    \int_A\e^{-\frac{\be}{2} N(1-\eta) \left(N\ENE_V(\hat\mu_N^{(\veps)}) %
        -\ENE(\la_\veps) -\sum_{i=1}^N (V*\la_\veps-V)(x_i)\right)}\\
    &\qquad\qquad\times\e^{-\frac{\be}{2}\eta H_N(x_1, \ldots, x_N)}\prod_{i=1}^N\dd x_i\\
    & \le \e^{NC_1 +  \frac{\be}{2} N\ENE(\la_\veps)+ \frac\be 2N^2\eta\ENE_V(\mu_V)} 
    \e^{-\frac{\be}{2} N^2(1-\eta)\inf_A(\ENE_V(\hat\mu_N^{(\veps)})-\ENE_V(\mu_V))}\\
    &\qquad\qquad\times 
    \int\e^{\frac\beta2
      \Big(N(1-\eta)\sum_{i=1}^N (V*\la_\veps-V)(x_i)
      -\eta H_N(x_1, \ldots, x_N)\Big)} 
    \prod_{i=1}^N\dd x_i.\nonumber
  \end{align}
  In the second inequality we also used that $\cE(\lambda_\veps)>0$ as soon as
  $0<\veps<1$. Indeed, by a change of variables we have
  \begin{equation}\label{eq:E1expl}
    \ENE(\la_\veps)=
    \iint g(x-y)\la_\veps(\dd x)\la_{\veps}(\dd y)
    =
    \begin{cases}
      g(\veps)+\ENE(\la_1) & \text{ if $d=2$},\\
      g(\veps)\, \ENE(\la_1) & \text{ if  $d\ge 3$},
    \end{cases}
  \end{equation}
  and one can compute explicitly  that 
  \begin{equation}\label{eq:E1expl2}
    \ENE(\la_1)=
    \begin{cases}
      \frac14 & \mbox{ if } d=2,\\
      \frac{2d}{d+2}  & \mbox{ if } d\geq 3.
    \end{cases}
  \end{equation}

  \medskip
  
  \noindent\textit{Step 2: The remaining integral.} We now provide an upper
  bound on the integral
  \[
  \int
  \e^{\frac\beta2\Big(N(1-{\eta})\sum_{i=1}^N (V*\la_\veps-V)(x_i)-\eta H_N(x_1, \ldots, x_N)\Big)} 
  \prod_{i=1}^N\dd x_i.
  \]
  Having in mind Remark \ref{rm:regularity}, notice that if $V=\tilde V+h$
  with $h$ superharmonic then we have, for every $x\in\R^d$,
  \[
  (V*\la_\veps-V)(x)\leq (\tilde V*\la_\veps-\tilde V)(x).
  \]
  Thus one can assume without loss of generality that $h=0$. From now, assume
  further that $\tilde V=V$ is twice differentiable. Using that
  \[
  (V*\la_\veps-V)(x)=\int \big(V(x-\veps u)-V(x)\big)\la_1(\dd u),
  \]
  and noticing that  by symmetry, 
  \[
  \int  \ANG{\nabla V(x), u}\la_1(\dd u)=0,
  \] 
  we obtain by a Taylor expansion that, for any $x \in \dR^d,$
  \begin{equation}\label{eq:HessB1}
    (V*\la_\veps-V)(x) 
    \leq \frac{\veps^2}{2}\sup_{\substack{y\in\dR^d\\|y-x| \le \veps}}
    \int\ANG{\HESS(V)(y)u,u}\la_1(\dd u),
  \end{equation}
  where $\HESS(V)(y)$ is the Hessian matrix of the function $V$ at point $y$.
  Since the covariance matrix of $\la_1$ is a multiple of the identity,
  \begin{align*}
    \int \ANG{\HESS(V)(y)u, u}\la_1(\dd u) 
    & = \sum_{i,j=1}^d\HESS(V)(y)_{i,j}\int u_iu_j\,\la_1(\dd u)\\ 
    &= \TR\HESS(V)(y)\frac{1}{d}\int |u|^2\,\la_1(\dd u)\\
    &= \TR\HESS(V)(y)\int_0^1 r^{d+1}\,\dd r\\
    & = \frac{1}{d+2}\De V(y).
  \end{align*}
  Therefore, with   $(\Delta V)_+(y):=\max(\Delta V(y),0)$, we have
  \begin{equation}\label{eq:UBDV}
    (V*\la_\veps-V)(x) 
    \leq 
    \frac{\veps^2}{2(d+2)}\sup_{\substack{y\in\dR^d\\|y-x| \le \veps}}(\Delta V)_+(y).
  \end{equation}
  
  Next, we use that $g(x-y)\geq -\frac12\log(1+|x|^2)-\frac12\log(1+|y|^2)$
  when $d=2$, and that $g\geq 0$ when $d\geq 3$, in order to get
  \[
  H_N(x_1, \ldots, x_N)\geq N\sum_{i=1}^N\big(V(x_i)
  - \IND_{d=2}\log(1+|x_i|^2)\big).
  \]
  As a consequence, we obtain with \eqref{eq:UBDV} that, for any
  $0<\epsilon<1$,
  \begin{multline}\label{eq:boundRemainInt}
    \int\e^{\frac\beta2\Big(N(1-{\eta})\sum_{i=1}^N(V*\la_\veps-V)(x_i)-\eta
      H_N(x_1,\ldots,x_N)\Big)}
    \prod_{i=1}^N\dd x_i\\
    \leq \left(\int \e^{-\frac\beta2 N\Big(\eta (V(x)- \boldsymbol
        1_{d=2}\log(1+|x|^2)) -\veps^2\frac{1}{2(d+2)}\sup_{|y-x| < 1}(\De
        V)_+(y)\Big)} \dd x\right)^N.
  \end{multline}
  
  In the next steps, we will use the Coulomb transport inequalities to bound
  the term $\inf_A(\ENE_V(\hat\mu_N^{(\veps)})-\ENE_V(\mu_V))$ for sets $A$ of
  interest depending on which growth assumption we make on $V$. We will
  moreover bound the remaining integral \eqref{eq:boundRemainInt} using extra
  assumptions on $\Delta V$. In the next step, we assume $\Delta V$ is bounded
  above and provide a proof for Theorem \ref{th:theococo2}. In the final step,
  we only assume $\Delta V$
  grows not faster than $V$ and complete the proof of Theorem~\ref{th:theococo}.

  \medskip

  \noindent\textit{Step 3: Proof of Theorem \ref{th:theococo2}.} Assume in this
  step there exists $D>0$ such that
  \begin{equation}\label{eq:LapsupBnd}
    \sup_{x\in \dR^d}\Delta V(x)\leq D.
  \end{equation}
  We then set 
  \begin{equation}\label{eq:epseta1}
    \eta:=N^{-1},\qquad \veps:=N^{-1/d}.
  \end{equation} 
  and 
  \[
  C_2:=\log\int \e^{-\frac\be2(V(x)-\IND_{d=2}\log(1+|x|^2))}\dd x,
  \]
  which is finite by assumption \eqref{eq:gamma}. We obtain from
  \eqref{eq:boundRemainInt},
  \begin{equation}\label{eq:RemBndTh2}
    \int\e^{\frac\beta2
      \Big(N(1-{\eta})\sum_{i=1}^N(V*\la_\veps-V)(x_i)
      -\eta H_N(x_1,\ldots,x_N)\Big)} 
    \prod_{i=1}^N\dd x_i \leq \e^{NC_2+\frac\beta2N^{2-2/d} \frac D{2(d+2)}}.
  \end{equation}
  
  If we set 
  \begin{equation}\label{eq:ABL}
    A:=\Big\{\DIST_{\BL}(\hat\mu_N,\mu_V)\geq r\Big\},
  \end{equation}
  then by applying Theorem \ref{th:coultransp} to $\hat\mu_N^{(\veps)}$ we
  obtain,
  \begin{equation}\label{eq:TransportA}
    \inf_A\big(\ENE_V(\hat\mu_N^{(\veps)})-\ENE_V(\mu_V)\big)
    \ge \frac{1}{C^V_{\BL}} \inf_A\DIST_\BL(\hat \mu_N^{(\veps)},\mu_V)^2 
    \ge \frac{1}{C^V_{\BL}}\Big(\frac{r^2}2-  \veps^2\Big),
  \end{equation}
  where we used for the last inequality that
  \begin{align*}
    \frac12\DIST_\BL(\hat \mu_N,\mu_V)^2 
    & \leq \frac12\big(\DIST_\BL(\hat \mu_N^{(\veps)},\mu_V) %
    +\DIST_\BL(\hat \mu_N^{(\veps)},\hat\mu_N)\big)^2\\
    & \leq  \DIST_\BL(\hat \mu_N^{(\veps)},\mu_V)^2 %
    +\DIST_\BL(\hat \mu_N^{(\veps)},\hat\mu_N)^2\\
    & \leq \DIST_\BL(\hat \mu_N^{(\veps)},\mu_V)^2+\veps^2.
  \end{align*} 
  Setting $c(\be):=C_1+C_2+\frac\be 2 \ENE_V(\mu_V)$ and noticing it is the
  same constant $c(\be)$ as in Theorem \ref{th:theococo2}, we obtain by
  combining \eqref{eq:step1}, \eqref{eq:E1expl}--\eqref{eq:E1expl2},
  \eqref{eq:RemBndTh2}, and \eqref{eq:TransportA},
  \begin{equation}\label{eq:FINALineq}
    \dP_{V,\be}^N(A) %
    \le \e^{-\frac{\be}{4C^V_\BL} N(N-1)r^2  
      +\mathbf 1_{d=2}( \frac{\be}{4} N \log N) 
      +\frac\beta2N^{2-2/d} 
      [\frac1{C^V_\BL}+\frac D{2(d+2)}+\cE(\lambda_1)]+Nc(\be)}.
  \end{equation}
  Since $N-1\geq N/2$ for every $N\geq 2$, this gives the desired constants
  $a$ and $b$.
  
  If one assumes the stronger growth assumption \eqref{eq:Vgrowth}, then the
  exact same line of arguments with $A:=\{\WAS_1(\hat\mu_N,\mu_V)\geq r\}$
  shows inequality \eqref{eq:FINALineq} holds true after replacing the
  constant $C^V_\BL$ by $C^V_{\WAS_1}$, which completes the proof of
  Theorem~\ref{th:theococo2}.

  \medskip
  
  \noindent\textit{Step 5: Proof of Theorem \ref{th:theococo}.} In this last
  part, we assume that
  \begin{equation}\label{eq:LapGrowthbis}
    \limsup_{|x|\to\infty}\Bigg(\frac1{V(x)} 
    \displaystyle\sup_{\substack{y\in\dR^d\\|y-x| < 1}}\Delta V(y)\Bigg)< 2(d+2)
  \end{equation}
  in place of \eqref{eq:LapsupBnd}, and  set instead,
  \begin{equation}\label{eq:epseta2}
    \eta:=N^{-2/d}, \qquad \veps:=N^{-1/d},
  \end{equation}
  so that \eqref{eq:boundRemainInt} now reads 
  \begin{multline}\label{eq:step5}
    \int\e^{\frac\beta2
      \Big(N(1-{\eta})\sum_{i=1}^N (V*\la_\veps-V)(x_i)
      -\eta H_N(x_1,\ldots,x_N)\Big)}
    \prod_{i=1}^N\dd x_i\\
    \leq \left(\int \e^{-\frac\beta2 N^{1-2/d}\Big( V(x)- 
        \IND_{d=2}\log(1+|x|^2)
        -\frac{1}{2(d+2)}\sup_{|y-x| < 1}(\De V)_+(y)\Big)} \dd x\right)^N.
  \end{multline}
  Combining $V$ being admissible and \eqref{eq:LapGrowthbis} yields constants
  $u>0$ and $v\in\R$ such that, for every $x\in\R^d$,
  \[
  V(x)- \IND_{d=2}\log(1+|x|^2)-\frac{1}{2(d+2)}\sup_{|y-x|<1}(\De V)_+(y)%
  \geq 2uV(x)-v.
  \]
  This provides with \eqref{eq:step5},
  \begin{align*}
    \int\e^{\frac\beta2\Big(N(1-{\eta}) \sum_{i=1}^N(V*\la_\veps-V)(x_i)
      -\eta H_N(x_1,\ldots,x_N)\Big)} \prod_{i=1}^N\dd x_i
    & \leq \e^{\frac\beta2 vN^{2-2/d}} \left(\int \e^{-\beta uV(x)} \dd x\right)^N \\
    &\leq \e^{\frac\beta2 vN^{2-2/d} + Nw(\beta)},
  \end{align*}
  where we set 
  \begin{equation}\label{eq:cvbe}
    w(\beta) := \log \int \e^{-\beta uV(x)} \dd x.
  \end{equation}
  Taking $A:=\{\DIST_\BL(\hat\mu_N,\mu_V)\geq r\}$, this gives with
  \eqref{eq:step1}, \eqref{eq:E1expl}, \eqref{eq:TransportA},
  \eqref{eq:epseta2},
  \[
  \dP_{V,\be}^N(A) 
  \le \e^{-\frac{\be N^2 r^2}{4C^V_\BL} (1-N^{-\frac2d})
    +\mathbf 1_{d=2}( \frac{\be}{4} N \log N) 
    +\frac\be2N^{2-\frac2d}\big[\ENE_V(\mu_V)+\cE(\lambda_1)+\frac1{C^V_\BL}+v\big] 
    +N (C_1+ w(\be) ) }.
  \]
  Since $1-N^{-2/d} \ge 1/2$ for $N \ge 2,$ we get inequality
  \eqref{eq:concentrationBL} of Theorem \ref{th:theococo} with
  \begin{align*} 
    a & := \frac{1}{8C^V_\BL},\\
    b & := \frac12\Big(\ENE_V(\mu_V) +\cE(\lambda_1)+\frac1{C^V_\BL}+v\Big),\\
    c(\be) & := -\frac \be 2 \cE(\mu_V)- S(\mu_V)+ w(\beta).
  \end{align*}
  Taking $A:=\{\WAS_1(\hat\mu_N,\mu_V)\geq r\}$, we obtain the same inequality
  for the $\WAS_1$ metric after replacing $C^V_\BL$ by $C^V_{\WAS_1}$ in the
  definitions of the constants $a$ and $b$.
  
  Finally, observe that if one assumes $\liminf_{|x| \rightarrow \infty}
  {V(x)}/{|x|^\kappa} >0$ for some $\kappa >0$, then it is easy to derive the
  behavior \eqref{eq:cbeta} of the function $c(\beta)$ from \eqref{eq:cvbe}.  
\end{proof}

\begin{proof}[Proof of Corollary \ref{cor:lolo}] 
  Let us give the proof in the bounded Lipschitz case. If $\beta>0$ is fixed,
  then Theorem \ref{th:theococo} yields $u,v>0$ which only depend on $V,\beta$
  such that, for every $N\geq 2$ and $r>0$,
  \begin{equation}\label{eq:UBlolo}
    \dP_{V,\be}^N
    \Big(\DIST_\BL(\hat\mu_N,\mu_V)\geq r\Big)\leq \e^{-uN^2r^2+v Ng(N^{-1/d})}.
  \end{equation}
  Now, by a change of variables, we have for any $\alpha>0$,
  \[
  \dP_{V,\be}^N
  \Big(\DIST_\BL\big({\tau^{N^s}_{x_0}}\hat\mu_N,{\tau^{N^s}_{x_0}}\mu_V\big)\geq
  \alpha\Big)
  \leq 
  \dP_{V,\be}^N\Big(\DIST_\BL\big(\hat\mu_N,\mu_V\big)\geq \alpha N^{-s}\Big).
  \]
  The corollary follows  by taking 
  \[
  \alpha=\begin{cases}
    C N^{s-1/2}(\log N)^{1/2} & \mbox{ when } d=2,\\
    C N^{s-1/d} & \mbox{ when } d\geq 3,
  \end{cases}
  \]
  with $C >0$ large enough so that $c:=uC^2-v>0$. The proof of the $\WAS_1$
  case is the same.
\end{proof}

\section{Proof of Theorem \ref{th:tightness}}
\label{se:tightness}

We prove the exponential tightness for $\max_{1\leq i\leq N}|x_i|$ formulated
in Theorem \ref{th:tightness}, in a similar way as in the proof of
\cite[Prop.~2.1]{borot-guionnet}.

\begin{proof}[Proof of Theorem \ref{th:tightness}] Since the law
  $\dP_{V,\be}^N$ is exchangeable, the union bound gives 
  \begin{equation}\label{Xn0}
    \dP_{V,\be}^N(\max_{1\le i\le N}|x_i|\ge r)
    \le N\,\dP_{V,\be}^N(|x_N|\ge r).
  \end{equation}
  Isolating $x_N$ in the joint law, we get for any Borel set $B\subset\R^d$,
  \begin{multline}
    \label{eq:Xna}
    \dP_{V,\be}^N(x_N\in B)
    =\frac{Z_{V,\be}^{N-1}}{Z_{V,\be}^N}
    \int_{B}\e^{-\frac{\be}{2}NV(x_N)}
    \int\e^{-\frac{\be}{2}\sum_{i=1}^{N-1}(g(x_i-x_N)+V(x_i))}\\
    \times \dP_{V,\be}^{N-1}(\dd x_1,\ldots,\dd x_{N-1})\dd x_N.
  \end{multline}
  Since $V$ is admissible, there exists $C\in\R$ such that, 
  \[
  g(x-y)+V(x)+\frac12V(y)\geq C,\quad \forall x,y\in\dR^d.
  \]
  As a consequence, we obtain
  \begin{align}
    \label{eq:Xnb}
    \dP_{V,\be}^N(|x_N|\ge r)
    & =\frac{Z_{V,\be}^{N-1}}{Z_{V,\be}^N}
    \e^{-\frac\be2(N-1)C}\int_{\{|x_N|>r\}}\e^{-\frac{\be}{4}(N+1)V(x_N)}\dd x_N\nonumber\\
    &\leq \frac{Z_{V,\be}^{N-1}}{Z_{V,\be}^N}
    \e^{-\frac\be2(N-1)C-\frac{\be}{4} NV_*(r)}\int \e^{-\frac{\be}{4}V(x)}\dd x,
  \end{align}
  where we recall that $V_*(r)=\min_{|x|\geq r}V(x)$.  We now prove that
  \begin{equation}\label{ratioZN}
    \limsup_{N\to\infty}\frac1N\log\frac{Z_{V,\be}^{N-1}}{Z_{V,\be}^{N}}<\infty,
  \end{equation}
  which yields the theorem thanks to \eqref{Xn0} and \eqref{eq:Xnb} since
  $V_*(r)\to+\infty$ as $r\to\infty$.
  
  To prove \eqref{ratioZN}, let $L>0$ large enough so that $\{V<+\infty\}\cap
  B_L$ has positive Lebesgue measure and use that $\dP_{V,\be}^N(|x_N|\leq
  L)\leq 1$ together with \eqref{eq:Xna} to get
  \begin{multline*}
    \frac{Z_{V,\be}^N}{Z_{V,\be}^{N-1}}\\
    \ge
    Y_{V,\be} \int\e^{-(N-1)\frac{\be}{2}V(x_N)-\frac{\be}{2}\sum_{i=1}^{N-1}V(x_i)+g(x_i-x_N)}\,%
    \dP_{V,\be}^{N-1}\otimes\, \eta_{V,\be}(\dd x_1,\ldots,\dd x_{N}),
  \end{multline*}
  where $\eta_{V,\be}$ is the probability measure on $\dR^d$ with density
  $\frac1{Y_{V,\be}}\,\e^{-\frac\be2V}\IND_{B_L}$ and $Y_{V,\be}>0$ a
  normalizing constant. Applying Jensen's inequality with respect to the
  probability measure $\dP_{V,\be}^{N-1}\otimes \eta_{V,\be}$ then provides
  \begin{align*}
    &\log\frac{Z_{V,\be}^N}{Z_{V,\be}^{N-1}}\nonumber
    \ge\log Y_{V,\be}\label{eq:ZNZN-1}\nonumber\\
    & -\int\bigg((N-1)\frac{\be}{2}V(x_N)
    +\frac{\be}{2}\sum_{i=1}^{N-1}V(x_i)+g(x_i-x_N)\bigg)%
    \dP_{V,\be}^{N-1}\otimes \eta_{V,\be}(\dd x_1,\ldots,\dd x_{N})\nonumber\\
    & = \log Y_{V,\be} - (N-1)\frac\be 2\int V(x)\,\eta_{V,\be}(\dd x)\nonumber\\
    & -\frac{\be}{2}\int\left(\sum_{i=1}^{N-1}V(x_i)+g*\eta_{V,\be}(x_i)\right)%
    \dP_{V,\be}^{N-1}(\dd x_1,\ldots,\dd x_{N-1}).
  \end{align*}
  Since for an admissible $V$ we have $\int V \, \dd \eta_{V,\be}=\int
  \IND_{\{V<+\infty\}} V \, \dd \eta_{V,\be}<+\infty$ and that
  $g*\eta_{V,\be}$ is bounded from above, it is enough to show that 
  \begin{equation}\label{suppot}
    \sup_{N}\int\Big(\frac1N\sum_{i=1}^{N}V(x_i)\Big) %
    \dP_{V,\be}^{N}(\dd x_1,\ldots,\dd x_{N})<\infty
  \end{equation}
  in order to prove \eqref{ratioZN}. To do so, recall there exists by
  assumption $\varepsilon>0$ such that
  $V(x)-(2+\varepsilon)\log|x|\IND_{d=2}\to\infty$ as $|x|\to\infty$. In
  particular, if we set $q:=2/(2+\varepsilon)>0$, then $g(x-y)+\frac
  q2V(x)+\frac q2 V(y)\geq B$ for some $B\in\R$, and thus
  \begin{align*}
    Z_{q V,\be}^N 
    & = \int\e^{-\frac\be2\sum_{i\neq j}(g(x_i-x_j)+\frac q2V(x_i)+\frac q2V(x_j))}%
    \prod_{i=1}^N\e^{-\frac{\be q}2  V(x_i)}\dd x_i\\
    & \leq\e^{-\frac\be2 N(N-1)B}\left(\int\e^{-\frac{\be q}2 V(x)}\dd x\right)^N.
  \end{align*}
  Since Markov's inequality yields for any $R>0$,
  \[
  \dP_{V,\be}^N\bigg(\frac{1}{N}\sum_{i=1}^NV(x_i)>R\bigg)  
  \le \e^{-\frac{\be q}{2}N^2R}\, \frac{Z_{qV,\be}^N}{Z_{V,\be}^N},
  \]
  together with Lemma \ref{le:partifunc} this provides some constant $R_0\in\R$
  such that
  \[
  \dP_{V,\be}^N\bigg(\frac{1}{N}\sum_{i=1}^NV(x_i)>R\bigg) \le
  \e^{-\frac{\be}{4}N^2(R+R_0)}.
  \]
  This yields \eqref{suppot} since $V$ is bounded from below, and the proof
  of Theorem \ref{th:tightness} is therefore complete.
\end{proof}

{\small%
  \setlength{\bibsep}{.5em}
  \bibliographystyle{abbrvnat}%
  \bibliography{\jobname}%

\begin{thebibliography}{46}
\providecommand{\natexlab}[1]{#1}
\providecommand{\url}[1]{\texttt{#1}}
\expandafter\ifx\csname urlstyle\endcsname\relax
  \providecommand{\doi}[1]{doi: #1}\else
  \providecommand{\doi}{doi: \begingroup \urlstyle{rm}\Url}\fi

\bibitem[{Allez} et~al.(2012){Allez}, {Bouchaud}, and
  {Guionnet}]{2012PhRvL.109i4102A}
R.~{Allez}, J.-P. {Bouchaud}, and A.~{Guionnet}.
\newblock {Invariant Beta Ensembles and the Gauss-Wigner Crossover}.
\newblock \emph{Physical Review Letters}, 109\penalty0 (9):\penalty0 094102,
  Aug. 2012.
\newblock URL \url{http://dx.doi.org/10.1103/PhysRevLett.109.094102}.

\bibitem[Ameur et~al.(2011)Ameur, Hedenmalm, and Makarov]{MR2817648}
Y.~Ameur, H.~Hedenmalm, and N.~Makarov.
\newblock Fluctuations of eigenvalues of random normal matrices.
\newblock \emph{Duke Math. J.}, 159\penalty0 (1):\penalty0 31--81, 2011.
\newblock ISSN 0012-7094.
\newblock URL \url{http://dx.doi.org/10.1215/00127094-1384782}.

\bibitem[Ameur et~al.(2015)Ameur, Hedenmalm, and Makarov]{MR3342661}
Y.~Ameur, H.~Hedenmalm, and N.~Makarov.
\newblock Random normal matrices and {W}ard identities.
\newblock \emph{Ann. Probab.}, 43\penalty0 (3):\penalty0 1157--1201, 2015.
\newblock ISSN 0091-1798.
\newblock URL \url{http://dx.doi.org/10.1214/13-AOP885}.

\bibitem[{Bauerschmidt} et~al.(2015){Bauerschmidt}, {Bourgade}, {Nikula}, and
  {Yau}]{bourgade}
R.~{Bauerschmidt}, P.~{Bourgade}, M.~{Nikula}, and H.-T. {Yau}.
\newblock {Local density for two-dimensional one-component plasma}.
\newblock preprint, 2015.
\newblock URL \url{http://arxiv.org/1510.02074}.

\bibitem[Bauerschmidt et~al.(2016)Bauerschmidt, Bourgade, Nikula, and
  Yau]{bourgade-CLT}
R.~Bauerschmidt, P.~Bourgade, M.~Nikula, and H.-T. Yau.
\newblock The two-dimensional {C}oulomb plasma: quasi-free approximation and
  central limit theorem.
\newblock preprint, 2016.
\newblock URL \url{http://arxiv.org/abs/1609.08582}.

\bibitem[Ben~Arous and Guionnet(1997)]{benarous-guionnet}
G.~Ben~Arous and A.~Guionnet.
\newblock Large deviations for {W}igner's law and {V}oiculescu's
  non-commutative entropy.
\newblock \emph{Probab. Theory Related Fields}, 108\penalty0 (4):\penalty0
  517--542, 1997.
\newblock ISSN 0178-8051.
\newblock URL \url{http://dx.doi.org/10.1007/s004400050119}.

\bibitem[Biane and Voiculescu(2001)]{biane-voiculescu}
P.~Biane and D.~V. Voiculescu.
\newblock A free probability analogue of the {W}asserstein metric on the
  trace-state space.
\newblock \emph{Geom. Funct. Anal.}, 11\penalty0 (6):\penalty0 1125--1138,
  2001.
\newblock URL \url{http://dx.doi.org/10.1007/s00039-001-8226-4}.

\bibitem[Bobkov and G{\"o}tze(1999)]{bobkov-gotze}
S.~G. Bobkov and F.~G{\"o}tze.
\newblock Exponential integrability and transportation cost related to
  logarithmic {S}obolev inequalities.
\newblock \emph{J. Funct. Anal.}, 163\penalty0 (1):\penalty0 1--28, 1999.
\newblock URL \url{http://dx.doi.org/10.1006/jfan.1998.3326}.

\bibitem[Bodineau and Guionnet(1999)]{MR1678526}
T.~Bodineau and A.~Guionnet.
\newblock About the stationary states of vortex systems.
\newblock \emph{Ann. Inst. H. Poincar\'e Probab. Statist.}, 35\penalty0
  (2):\penalty0 205--237, 1999.
\newblock ISSN 0246-0203.
\newblock \doi{10.1016/S0246-0203(99)80011-9}.
\newblock URL
  \url{https://dx-doi-org.proxy.bu.dauphine.fr/10.1016/S0246-0203(99)80011-9}.

\bibitem[Bordenave and Chafa{\"{\i}}(2012)]{MR2908617}
C.~Bordenave and D.~Chafa{\"{\i}}.
\newblock Around the circular law.
\newblock \emph{Probab. Surv.}, 9:\penalty0 1--89, 2012.
\newblock URL \url{http://dx.doi.org/10.1214/11-PS183}.

\bibitem[{Borot} and {Guionnet}(2013)]{borot-guionnet}
G.~{Borot} and A.~{Guionnet}.
\newblock {Asymptotic Expansion of {$\beta$} Matrix Models in the One-cut
  Regime}.
\newblock \emph{Comm. Math. Phys.}, 317:\penalty0 447--483, Jan. 2013.
\newblock URL \url{http://dx.doi.org/10.1007/s00220-012-1619-4}.

\bibitem[Breuer and Duits(2014)]{breuer-duits}
J.~Breuer and M.~Duits.
\newblock The {N}evai condition and a local law of large numbers for orthogonal
  polynomial ensembles.
\newblock \emph{Adv. Math.}, 265:\penalty0 441--484, 2014.
\newblock ISSN 0001-8708.
\newblock \doi{10.1016/j.aim.2014.07.026}.
\newblock URL \url{http://dx.doi.org/10.1016/j.aim.2014.07.026}.

\bibitem[Chafa{\"{\i}}(2015)]{MR3434251}
D.~Chafa{\"{\i}}.
\newblock From {B}oltzmann to random matrices and beyond.
\newblock \emph{Ann. Fac. Sci. Toulouse Math. (6)}, 24\penalty0 (4):\penalty0
  641--689, 2015.
\newblock URL \url{http://dx.doi.org/10.5802/afst.1459}.

\bibitem[Chafa{\"{\i}} et~al.(2014)Chafa{\"{\i}}, Gozlan, and
  Zitt]{chafai-gozlan-zitt}
D.~Chafa{\"{\i}}, N.~Gozlan, and P.-A. Zitt.
\newblock First-order global asymptotics for confined particles with singular
  pair repulsion.
\newblock \emph{Ann. Appl. Probab.}, 24\penalty0 (6):\penalty0 2371--2413,
  2014.
\newblock URL \url{http://dx.doi.org/10.1214/13-AAP980}.

\bibitem[Dudley(2002)]{Dud02}
R.~M. Dudley.
\newblock \emph{Real analysis and probability}, volume~74 of \emph{Cambridge
  Studies in Advanced Mathematics}.
\newblock Cambridge University Press, Cambridge, 2002.
\newblock URL \url{http://dx.doi.org/10.1017/CBO9780511755347}.
\newblock Revised reprint of the 1989 original.

\bibitem[Forrester(2010)]{MR2641363}
P.~J. Forrester.
\newblock \emph{Log-gases and random matrices}, volume~34 of \emph{London
  Mathematical Society Monographs Series}.
\newblock Princeton University Press, Princeton, NJ, 2010.
\newblock URL \url{http://dx.doi.org/10.1515/9781400835416}.

\bibitem[Frostman(1935)]{frostman}
O.~Frostman.
\newblock \emph{{P}otentiel d'\'{E}quilibre et {C}apacit\'e des {E}nsembles}.
\newblock PhD thesis, Facult\'e des sciences de {L}und, 1935.

\bibitem[Ginibre(1965)]{MR0173726}
J.~Ginibre.
\newblock Statistical ensembles of complex, quaternion, and real matrices.
\newblock \emph{J. Mathematical Phys.}, 6:\penalty0 440--449, 1965.

\bibitem[Gozlan(2009)]{MR2573565}
N.~Gozlan.
\newblock A characterization of dimension free concentration in terms of
  transportation inequalities.
\newblock \emph{Ann. Probab.}, 37\penalty0 (6):\penalty0 2480--2498, 2009.
\newblock URL \url{http://dx.doi.org/10.1214/09-AOP470}.

\bibitem[Gozlan and L{\'e}onard(2010)]{gozlan-leonard}
N.~Gozlan and C.~L{\'e}onard.
\newblock Transport inequalities. {A} survey.
\newblock \emph{Markov Process. Related Fields}, 16\penalty0 (4):\penalty0
  635--736, 2010.

\bibitem[Hardy(2012)]{MR2926763}
A.~Hardy.
\newblock A note on large deviations for 2{D} {C}oulomb gas with weakly
  confining potential.
\newblock \emph{Electron. Commun. Probab.}, 17:\penalty0 no. 19, 12, 2012.
\newblock URL \url{http://dx.doi.org/10.1214/ECP.v17-1818}.

\bibitem[Helms(2009)]{helms}
L.~L. Helms.
\newblock \emph{Potential Theory}.
\newblock Universitext. Springer, London., 2009.

\bibitem[Hiai et~al.(2004)Hiai, Petz, and Ueda]{MR2093762}
F.~Hiai, D.~Petz, and Y.~Ueda.
\newblock Free transportation cost inequalities via random matrix
  approximation.
\newblock \emph{Probab. Theory Related Fields}, 130\penalty0 (2):\penalty0
  199--221, 2004.
\newblock URL \url{http://dx.doi.org/10.1007/s00440-004-0351-1}.

\bibitem[Johansson(1998)]{johansson}
K.~Johansson.
\newblock On fluctuations of eigenvalues of random {H}ermitian matrices.
\newblock \emph{Duke Math. J.}, 91\penalty0 (1):\penalty0 151--204, 1998.
\newblock URL \url{http://dx.doi.org/10.1215/S0012-7094-98-09108-6}.

\bibitem[Khoruzhenko and Sommers(2011)]{MR2932638}
B.~A. Khoruzhenko and H.-J. Sommers.
\newblock Non-{H}ermitian ensembles.
\newblock In \emph{The {O}xford handbook of random matrix theory}, pages
  376--397. Oxford Univ. Press, Oxford, 2011.

\bibitem[Landkof(1972)]{MR0350027}
N.~S. Landkof.
\newblock \emph{Foundations of modern potential theory}.
\newblock Springer-Verlag, 1972.
\newblock Translated from the Russian by A. P. Doohovskoy, Die Grundlehren der
  mathematischen Wissenschaften, Band 180.

\bibitem[Leblé(2015)]{leble}
T.~Leblé.
\newblock Local microscopic behavior for 2{D} {C}oulomb gases.
\newblock preprint, 2015.
\newblock URL \url{http://arxiv.org/abs/1510.01506}.

\bibitem[Leblé and Serfaty(2015)]{leble-serfaty}
T.~Leblé and S.~Serfaty.
\newblock Large {D}eviation {P}rinciple for {E}mpirical {F}ields of {L}og and
  {R}iesz {G}ases.
\newblock preprint, 2015.
\newblock URL \url{http://arxiv.org/abs/1502.02970}.

\bibitem[Leblé and Serfaty(2016)]{leble-serfaty-clt}
T.~Leblé and S.~Serfaty.
\newblock Fluctuations of {T}wo-{D}imensional {C}oulomb {G}ases.
\newblock preprint, 2016.
\newblock URL \url{http://arxiv.org/abs/1609.08088}.

\bibitem[Ledoux and Popescu(2009)]{MR2535467}
M.~Ledoux and I.~Popescu.
\newblock Mass transportation proofs of free functional inequalities, and free
  {P}oincar\'e inequalities.
\newblock \emph{J. Funct. Anal.}, 257\penalty0 (4):\penalty0 1175--1221, 2009.
\newblock URL \url{http://dx.doi.org/10.1016/j.jfa.2009.03.011}.

\bibitem[Lieb and Loss(1997)]{lieb-loss}
E.~H. Lieb and M.~Loss.
\newblock \emph{Analysis}, volume~14 of \emph{Graduate Studies in Mathematics}.
\newblock AMS, Providence, 1997.

\bibitem[L{\'o}pez~Garc{\'{\i}}a(2010)]{MR2647570}
A.~L{\'o}pez~Garc{\'{\i}}a.
\newblock Greedy energy points with external fields.
\newblock In \emph{Recent trends in orthogonal polynomials and approximation
  theory}, volume 507 of \emph{Contemp. Math.}, pages 189--207. Amer. Math.
  Soc., Providence, RI, 2010.
\newblock URL \url{http://dx.doi.org/10.1090/conm/507/09960}.

\bibitem[Ma{\"{\i}}da and Maurel-Segala(2014)]{maida-maurel-segala}
M.~Ma{\"{\i}}da and {\'E}.~Maurel-Segala.
\newblock Free transport-entropy inequalities for non-convex potentials and
  application to concentration for random matrices.
\newblock \emph{Probab. Theory Related Fields}, 159\penalty0 (1-2):\penalty0
  329--356, 2014.
\newblock URL \url{http://dx.doi.org/10.1007/s00440-013-0508-x}.

\bibitem[Marton(1986)]{MR838213}
K.~Marton.
\newblock A simple proof of the blowing-up lemma.
\newblock \emph{IEEE Trans. Inform. Theory}, 32\penalty0 (3):\penalty0
  445--446, 1986.
\newblock URL \url{http://dx.doi.org/10.1109/TIT.1986.1057176}.

\bibitem[Meckes and Meckes(2016)]{meckes-meckes}
E.~S. Meckes and M.~W. Meckes.
\newblock Rates of convergence for empirical spectral measures: a soft
  approach.
\newblock preprint, 2016.
\newblock URL \url{http://arxiv.org/abs/1601.03720}.

\bibitem[Mehta(2004)]{MR2129906}
M.~L. Mehta.
\newblock \emph{Random matrices}, volume 142 of \emph{Pure and Applied
  Mathematics (Amsterdam)}.
\newblock Elsevier/Academic Press, third edition, 2004.
\newblock ISBN 0-12-088409-7.

\bibitem[Pemantle and Peres(2014)]{pemantle-peres}
R.~Pemantle and Y.~Peres.
\newblock Concentration of {L}ipschitz functionals of determinantal and other
  strong {R}ayleigh measures.
\newblock \emph{Combin. Probab. Comput.}, 23\penalty0 (1):\penalty0 140--160,
  2014.
\newblock ISSN 0963-5483.
\newblock \doi{10.1017/S0963548313000345}.
\newblock URL \url{http://dx.doi.org/10.1017/S0963548313000345}.

\bibitem[Petrache and Serfaty(2015)]{petrache-serfaty}
M.~Petrache and S.~Serfaty.
\newblock Next-order asymptotics and renormalized energy for {R}iesz
  interactions.
\newblock \emph{Journal of the Institute of Mathematics of Jussieu}, pages pp.
  1--69, 2015.
\newblock URL \url{http://dx.doi.org/10.1017/S1474748015000201}.

\bibitem[Popescu(2013)]{popescu}
I.~Popescu.
\newblock Local functional inequalities in one-dimensional free probability.
\newblock \emph{J. Funct. Anal.}, 264\penalty0 (6):\penalty0 1456--1479, 2013.
\newblock URL \url{http://dx.doi.org/10.1016/j.jfa.2013.01.008}.

\bibitem[Rachev and R{\"u}schendorf(1998)]{MR1619170}
S.~T. Rachev and L.~R{\"u}schendorf.
\newblock \emph{Mass transportation problems. {V}ol. {I}}.
\newblock Probability and its Applications (New York). Springer-Verlag, 1998.
\newblock ISBN 0-387-98350-3.
\newblock Theory.

\bibitem[Rider and Vir{\'a}g(2007)]{MR2361453}
B.~Rider and B.~Vir{\'a}g.
\newblock The noise in the circular law and the {G}aussian free field.
\newblock \emph{Int. Math. Res. Not. IMRN}, 33\penalty0 (2):\penalty0 Art. ID
  rnm006, 2007.
\newblock ISSN 1073-7928.
\newblock URL \url{http://dx.doi.org/10.1093/imrn/rnm006}.

\bibitem[Rougerie and Serfaty(2016)]{MR3455593}
N.~Rougerie and S.~Serfaty.
\newblock Higher-dimensional {C}oulomb gases and renormalized energy
  functionals.
\newblock \emph{Comm. Pure Appl. Math.}, 69\penalty0 (3):\penalty0 519--605,
  2016.
\newblock URL \url{http://dx.doi.org/10.1002/cpa.21570}.

\bibitem[Saff and Totik(1997)]{saff-totik}
E.~B. Saff and V.~Totik.
\newblock \emph{Logarithmic potentials with external fields}, volume 316 of
  \emph{Grundlehren der Mathematischen Wissenschaften [Fundamental Principles
  of Mathematical Sciences]}.
\newblock Springer-Verlag, Berlin, 1997.
\newblock URL \url{http://dx.doi.org/10.1007/978-3-662-03329-6}.
\newblock Appendix B by Thomas Bloom.

\bibitem[Sandier and Serfaty(2015)]{MR3353821}
E.~Sandier and S.~Serfaty.
\newblock 2{D} {C}oulomb gases and the renormalized energy.
\newblock \emph{Ann. Probab.}, 43\penalty0 (4):\penalty0 2026--2083, 2015.
\newblock URL \url{http://dx.doi.org/10.1214/14-AOP927}.

\bibitem[Serfaty(2015)]{serfaty-course}
S.~Serfaty.
\newblock \emph{Coulomb gases and {G}inzburg-{L}andau vortices}.
\newblock Zurich Lectures in Advanced Mathematics. European Mathematical
  Society (EMS), Z\"urich, 2015.
\newblock URL \url{http://dx.doi.org/10.4171/152}.

\bibitem[Villani(2003)]{MR1964483}
C.~Villani.
\newblock \emph{Topics in optimal transportation}, volume~58 of \emph{Graduate
  Studies in Mathematics}.
\newblock American Mathematical Society, Providence, RI, 2003.
\newblock ISBN 0-8218-3312-X.
\newblock URL \url{http://dx.doi.org/10.1007/b12016}.

\end{thebibliography}
} 

\end{document}